\newcommand{\matn}{\ensuremath{\mathcal{N}}}
	\newcommand{\eqref}[1]{~(\ref{#1})}
	\def\mod{\mathop{\rm mod}}
\newcommand{\bigo}[1]{O\left(#1\right)}
\newcommand{\smallo}[1]{o\left(#1\right)}
\newcommand{\Qinv}[1]{Q^{-1}\left(#1\right)}
\newcommand{\Var}[1]{\mathrm{Var}\left[#1\right]}
\newcommand{\E}[1]{\mathbb E\left[#1\right]}
\newcommand{\Prob}[1]{\mathbb P\left[#1\right]}
\newcommand{\1}[1]{1\left\{#1\right\}}
\newcommand{\beq}{\begin{equation}}
\newcommand{\eeq}{\end{equation}}
\def\exp{\mathop{\rm exp}}
\def\EE{\mathbb{E}\,}
\def\PP{\mathbb{P}}
\def\eqdef{\stackrel{\triangle}{=}}
\def\unifto{\mathop{{\mskip 3mu plus 2mu minus 1mu%
	\setbox0=\hbox{$\mathchar"3221$}%
	\raise.6ex\copy0\kern-\wd0%
	\lower0.5ex\hbox{$\mathchar"3221$}}\mskip 3mu plus 2mu minus 1mu}}
\def\simleq{{{\mskip 3mu plus 2mu minus 1mu%
	\setbox0=\hbox{$\mathchar"013C$}%
	\raise.2ex\copy0\kern-\wd0%
	\lower0.9ex\hbox{$\mathchar"0218$}}\mskip 3mu plus 2mu minus 1mu}}
\def\simleq{\lesssim}
\def\simgeq{{{\mskip 3mu plus 2mu minus 1mu%
	\setbox0=\hbox{$\mathchar"013E$}%
	\raise.2ex\copy0\kern-\wd0%
	\lower0.9ex\hbox{$\mathchar"0218$}}\mskip 3mu plus 2mu minus 1mu}}
\def\simgeq{\gtrsim}
\newif\ifmapx
\edef\jobnametmp{\expandafter\string\csname varrate_apx\endcsname}
\edef\jobnameapx{\expandafter\mkillslash\jobnametmp}
\edef\jobnameexpand{\jobname}
\long\def\apxonly#1{\ifmapx{\color{blue}#1}\fi}
\newtheorem{thm}{Theorem}
\newtheorem{lemma}{Lemma}
\newtheorem{defn}{Definition}
\theoremstyle{remark}
\newtheorem{remark}{Remark}
\newtheorem*{example}{Example}
\title{\LARGE{Variable-length compression allowing errors}}
\author{
Victoria Kostina, ~\IEEEmembership{Member,~IEEE,}
~Yury Polyanskiy,  ~\IEEEmembership{Senior Member,~IEEE,}
~Sergio Verd\'u,  ~\IEEEmembership{Fellow,~IEEE,}

\thanks{
This work was supported in part by the Center for Science of Information
(CSoI), an NSF Science and Technology Center, under Grant CCF-0939370. This paper was presented in part at ISIT 2014 \cite{kostina2014varrateISIT}. 

V. Kostina is with California Institute of Technology, Y. Polyanskiy is with MIT, S. Verd\'u  is with Princeton University. (e-mail: \href{mailto:vkostina@caltech.edu}{vkostina@caltech.edu}; 
\href{mailto:yp@mit.edu}{yp@mit.edu};
\href{mailto:verdu@princeton.edu}{verdu@princeton.edu}.)
}
}
\begin{document}
\maketitle
\begin{abstract} 
This paper studies the fundamental limits of the minimum average length of lossless and lossy variable-length compression, allowing a nonzero error probability $\epsilon$, for lossless compression. 
We give non-asymptotic bounds on the minimum average length in terms of Erokhin's rate-distortion function
and we use those bounds to obtain a Gaussian approximation on the speed of approach to the limit which is quite accurate for all but small blocklengths:
$$(1 - \epsilon) k H(\mathsf S) -   \sqrt{\frac{k V(\mathsf S)}{2 \pi} } e^{- \frac {(\Qinv{\epsilon})^2} 2 }$$ where $\Qinv{\cdot}$ is the functional inverse of the standard Gaussian complementary cdf, and  $V(\mathsf S)$ is the source dispersion. A nonzero error probability thus not only reduces the asymptotically achievable rate by a factor of $1 - \epsilon$, but this asymptotic limit is approached from {\it below}, i.e. larger source dispersions and shorter blocklengths are beneficial. Variable-length lossy compression under an excess distortion constraint is shown to exhibit similar properties. 
\end{abstract}

\begin{IEEEkeywords}
Variable-length compression, lossless compression, lossy compression, single-shot, finite-blocklength regime, rate-distortion theory, dispersion, Shannon theory. 
\end{IEEEkeywords}

\section{Introduction and summary of results}

Let $S$ be a discrete random variable  to be compressed into a variable-length binary
string. We denote the set of all binary strings (including the empty string) by $ \left\{0, 1\right\}^\star$
and the length of a string $a \in  \left\{0, 1\right\}^\star$ by $\ell(a)$. 
 The codes considered in this paper fall under the following paradigm. 
\begin{defn}[$(L, \epsilon)$ code] \label{defn:Leps}
 A variable length $(L, \epsilon)$ 
 code for source $S$ defined on a finite or countably infinite
 alphabet $\mathcal M$ is a pair of possibly random transformations $P_{W|S} \colon \mathcal M \mapsto \left\{0, 1\right\}^\star$
 and $P_{\hat S|W} \colon \left\{0, 1\right\}^\star \mapsto \mathcal M$ such that\footnote{Note that $L$ need not be an integer.}
 \begin{align}
\Prob{S \neq \hat S } &\leq \epsilon \label{eq:fidelity}\\
\E{\ell( W ) } &\leq L
\end{align}
The corresponding fundamental limit is
\begin{equation}
L_S^\star(\epsilon) \triangleq \inf  \left\{L \colon \exists \text{ an $(L, \epsilon)$ code} \right\} \label{leps(S)}
\end{equation}
\end{defn}
Lifting the prefix condition in variable-length coding is discussed in \cite{szpankowski2011minimum,verdu2014lossless}.
In particular, in the zero-error case we have \cite{alon1994lower, AW72}
\begin{align}\label{eq:alon2}
		H(S) - \log_2 (H(S)+1) - \log_2 e &\le L_S^\star(0) \\
		&\le H(S)\,, \label{eq:wyner}
\end{align}	
while
\cite{szpankowski2011minimum} shows that in the i.i.d. case (with a non-lattice distribution $P_{\mathsf S}$, otherwise $o(1)$ becomes $O(1)$)
\begin{equation}\label{SVformula}
 L_{ S^k}^\star (0) =  k\, H(\mathsf S) - \frac12  \log_2 \left( 8 \pi e V( \mathsf S) k \right) + o(1)
\end{equation}
where $V( \mathsf S)$ is the \textit{varentropy} of $P_\mathsf S$, namely the variance of the information 
\begin{align}
\label{eq:info}
\imath_{\mathsf S} (\mathsf S) = \log_2 \frac{1}{P_{\mathsf S} (\mathsf S)}.
\end{align}

Under the rubric of  ``weak variable-length source coding," T. S. Han \cite{han2000weakvariable}, \cite[Section 1.8]{han2003information} 
considers the asymptotic fixed-to-variable   ($\mathcal{M} = \mathcal{S}^k$) almost-lossless version
of the foregoing setup with vanishing error probability and prefix encoders.
Among other results, Han showed that the minimum average length $L_{S^k}(\epsilon)$ of prefix-free encoding of a stationary ergodic source with entropy rate $H$ behaves as 
\begin{equation}
\lim_{\epsilon \to 0} \lim_{k \to \infty} \frac 1 k L_{ S^k} (\epsilon) = H .
\end{equation}
Koga and Yamamoto \cite{koga2005weakfv} characterized asymptotically achievable rates of variable-length prefix codes with non-vanishing error probability and, in particular, showed that for finite alphabet i.i.d. sources
with distribution $P_{\mathsf S}$,
\begin{equation}\label{kogaformula}
\lim_{k \to \infty} \frac 1 k L_{ S^k} (\epsilon) = (1 - \epsilon) H(\mathsf S). 
\end{equation}
The benefit of variable length vs. fixed length in the case of given $\epsilon$ is clear from \eqref{kogaformula}: indeed, the latter satisfies a strong converse and therefore any rate below the entropy is fatal. 
Allowing both nonzero error and variable-length  coding
is interesting not only conceptually but on account on several important generalizations.
For example, the variable-length counterpart of Slepian-Wolf coding 
considered e.g. in \cite{kimura2004weak} is particularly relevant 
in universal settings, and has a radically different (and practically uninteresting) zero-error version.
Another substantive important generalization where nonzero error is inevitable is variable-length joint source-channel coding without or with feedback.
For the latter, Polyanskiy et al. \cite{polyanskiy2011feedback} showed that allowing a nonzero error probability boosts the $\epsilon$-capacity of the channel, while matching the transmission length to channel conditions accelerates the rate of approach to that asymptotic limit. The use of nonzero error compressors is 
also of interest in hashing \cite{kogaHash}. 


The purpose of Section \ref{sec:lossless} is to give non-asymptotic bounds on the fundamental limit \eqref{leps(S)}, and to apply those bounds to
analyze the speed of approach to the limit in \eqref{kogaformula}, which also holds without the prefix condition. Specifically, we show that (cf. \eqref{eq:alon2}--\eqref{eq:wyner}) 
\begin{align}\label{eq:approx}
		L_S^\star(\epsilon) &=  \mathbb{H} (S,\epsilon)  + \bigo{\log_2 H(S)}\\
		&= \EE[ \left \langle \imath_S(S) \right \rangle_\epsilon] + \bigo{\log_2 H(S)}
\end{align}
where
\begin{equation}\label{eq:heps_def2}
 \mathbb{H} (S,\epsilon) \eqdef \min_{ \substack{ 
			P_{Z | S }
			 \colon 
			 \\
			  \Prob{S \neq Z} \leq \epsilon
			 }
			 } I(S; Z)  \\
\end{equation}
is Erokhin's function \cite{erokhin1958epsilon}, and the $\epsilon$-cutoff random transformation acting on a real-valued random variable $X$ is defined as
\begin{equation}
\langle X \rangle_\epsilon \triangleq 
\begin{cases}
X & X < \eta\\
\eta & X = \eta ~(\text{w. p. } 1 - \alpha )\\
0 & X = \eta ~(\text{w. p. } \alpha )\\
0 & \text{otherwise} 
\end{cases} \label{Xeps}
\end{equation}
where $\eta \in \mathbb R$ and $\alpha \in [0, 1)$ are determined from
\begin{equation}
\Prob{ X  > \eta} + \alpha\, \Prob{X = \eta}  = \epsilon. \label{epscutoff}
\end{equation}
While $\eta$ and $\alpha$ satisfying \eqref{epscutoff} are not unique in general, any such pair defines the same $\langle X \rangle_\epsilon$ up to almost-sure equivalence.

The code that achieves \eqref{eq:approx} essentially discards ``rich'' source realizations with $\imath_S(S) > \eta$ and encodes the rest losslessly assigning them in the order of decreasing probabilities to the elements of $\left\{0, 1\right\}^\star$ ordered lexicographically. 

For memoryless sources with $S_i \sim \mathsf S$ we show that the speed of approach to the limit in \eqref{kogaformula} is given by the following result.
\begin{align}
 \left.\begin{aligned}
        & L_{S^k}^\star(\epsilon)\\
        & \mathbb{H} (S^k, \epsilon)\\
        &\E{ \left \langle \imath_{S^k}(S^k) \right \rangle_\epsilon}
       \end{aligned}
 \right\}
 &= (1 - \epsilon) k H(\mathsf S) -   \sqrt{\frac{k V(\mathsf S)}{2 \pi} } e^{- \frac { (\Qinv{\epsilon})^2} 2 } 
 \notag\\
 & + \bigo{\log k}  \label{Reps2orderIntro} 
\end{align}

To gain some insight into the form of \eqref{Reps2orderIntro}, note that if the source is memoryless, the information in $S^k$ is a sum of i.i.d. random variables, and by the central limit theorem
\begin{align}
\imath_{S^k} (S^k) &=  \sum_{i = 1}^k \imath_{\mathsf S}(S_i) \label{eq:infosum}\\
&\stackrel{d}{\approx} \mathcal N \left( k H(\mathsf S), k V(\mathsf S) \right)
\end{align}
while for Gaussian $X$
\begin{equation}
 \E{ \left \langle X \right \rangle_\epsilon } = (1 - \epsilon) \E{ X} -   \sqrt{\frac{\Var{ X}}{2 \pi} } e^{- \frac { (\Qinv{\epsilon})^2} 2 }
\end{equation}

Our result in \eqref{Reps2orderIntro} underlines that not only does $\epsilon > 0$ allow for a $(1-\epsilon)$ reduction in asymptotic rate (as found in \cite{koga2005weakfv}), but, in contrast to \cite{strassen1962asymptotische,polyanskiy2010channel,kostina2011fixed,kosut2014asymptotics}, 
larger source dispersion is beneficial. This curious property is further discussed in Section~\ref{sec:discussion}.

\par
In Section~\ref{sec:lossy}, we generalize the setting to allow a general distortion measure in lieu of the Hamming distortion in \eqref{eq:fidelity}.
More precisely, we replace \eqref{eq:fidelity} by  the excess probability constraint $\Prob{\mathsf d \left( S, Z \right) > d} \leq \epsilon$.
In this setting, refined asymptotics of minimum achievable lengths of variable-length lossy prefix codes  almost surely operating at distortion $d$ was studied in \cite{kontoyiannis2000pointwise} (pointwise convergence) and in \cite{zhang1997redundancy,yang1999redundancy} (convergence in mean). 
 Our main result in the lossy case is that \eqref{Reps2orderIntro} generalizes simply by replacing $H(\mathsf S)$ and $V(\mathsf S)$ by the corresponding rate-distortion and rate-dispersion functions, replacing Erokhin's function by
\begin{equation}
 {\mathbb R}_{S}(d, \epsilon) \triangleq  \min_{ \substack{ P_{Z | S } \colon \\ \Prob{ \mathsf d( S, Z) > d} \leq \epsilon}} I(S; Z) \label{RR(eps)},
\end{equation} 
and replacing the $\epsilon$-cutoff of information by that of $\mathsf d$-tilted information \cite{kostina2011fixed}, $\left \langle \jmath_S(S, d) \right \rangle_\epsilon$. Moreover, we show that the $(d, \epsilon)$-entropy of $S^k$ \cite{posner1967epsilonentropy} admits the same asymptotic expansion. If only deterministic encoding and decoding operations are allowed, the basic bounds \eqref{eq:alon2}, \eqref{eq:wyner} generalize simply by replacing the entropy by the $\left(d, \epsilon\right)$-entropy of $S$.   In both the almost-lossless and the lossy case we show that the optimal code is ``almost deterministic'' in the sense that randomization is performed on at most one codeword of the codebook. Enforcing deterministic encoding and decoding operations ensues a penalty of at most $0.531$ bits on average achievable length.

\section{Almost lossless variable length compression}
\label{sec:lossless}

\subsection{Optimal code}\label{sec:optimal}
In the zero-error case the optimum variable-length compressor without prefix constraints $\mathsf f_S^\star$ is known  explicitly (e.g. \cite{leung1978some,alon1994lower})\footnote{The construction in \cite{leung1978some} omits the empty string.}: 
a deterministic mapping that assigns the elements in $\mathcal M$ (labeled without loss of generality as the positive integers) ordered in decreasing probabilities to $\left\{0, 1\right\}^\star$ ordered lexicographically. 
The decoder is just the inverse  of this injective mapping.
This code is optimal in the strong stochastic sense that the cumulative distribution function of the length of any other code cannot lie above that achieved with $\mathsf f_S^\star$. The length function of the optimum code is \cite{alon1994lower}:
\begin{align} 
\ell (\mathsf f_S^\star (m)) = \lfloor \log_2 m \rfloor. 
\end{align}

Note that the ordering $P_{S}(1) \geq P_S(2) \geq \ldots$ implies 
\begin{equation}\label{eq:lmb0}
		\lfloor \log_2 m \rfloor \le \imath_S(m).
\end{equation}

In order to generalize this code to the nonzero-error setting, we take advantage of the fact that in our setting, error detection is not required at the decoder.
This allows us to retain the same decoder as in the zero-error case. As far as the encoder is concerned, to save on length on a given set of  realizations 
which we are willing to fail to recover correctly,  it is optimal to 
assign them all to $\varnothing$. Moreover, since we have the freedom to choose the set that we want
to recover correctly (subject to a constraint 
on its probability $\geq 1 - \epsilon$) it is optimal to include all the most likely realizations (whose encodings according to $\mathsf f_S^\star$ are shortest). 
If we are fortunate enough that $\epsilon$ is such that $\sum_{m =1}^{M} P_S ( m) = 1 - \epsilon$ for some $M$, then the optimal code is
$\mathsf f ( m ) = \mathsf f_S^\star (m )$, if $m = 1, \ldots, M$ and $\mathsf f ( m ) = \varnothing$, if $m > M$.\footnote{Jelinek \cite[Sec 3.4]{jelinek1968probabilistic} provided an asymptotic analysis of a scheme in which a vanishing portion of the least likely source outcomes is mapped to the same codeword, while the rest of the source outcomes are encoded losslessly.}

Formally, for a given encoder $P_{W|S}$, the optimal decoder is always
	deterministic and we denote it by $\mathsf g$. 
	Consider $w_0 \in \{0,1\}^\star \setminus \varnothing$ and source realization $m$
	with $P_{W|S = m}(w_0)>0$. If $\mathsf g(w_0) \neq m$, the average length can be decreased, without affecting
	the probability of error, by setting $P_{W|S = m}(w_0) = 0$ and adjusting $P_{W|S = m}(\varnothing)$ accordingly. This
	argument implies that the optimal encoder has at most one source
	realization $m$ mapping to each $w_0 \neq \varnothing$.   Next, let $m_0 = \mathsf g(\varnothing)$ and  by a similar argument conclude that $P_{W|S = m_0}(\varnothing)=1$. But then,
	interchanging $m_0$ and $1$ leads to the same or better probability of error and
	shorter average length, which implies that the optimal encoder maps $1$ to $\varnothing$. Continuing in the same manner for $m_0 = \mathsf g(0),  \mathsf g(1), \ldots, \mathsf g(\mathsf f^\star_S(M))$, we conclude that 
	the optimal code maps $\mathsf f ( m ) = \mathsf f_S^\star (m )$, $m = 1, \ldots, M$.  Finally, assigning the remaining source outcomes whose total mass is $\epsilon$ to $\varnothing$ shortens the average length without affecting the error probability, so $\mathsf f ( m ) = \varnothing$, $m > M$ is optimal.

We proceed to describe an optimum construction that holds without the foregoing fortuitous choice of $\epsilon$. Let $M$ be the smallest integer such that $\sum_{m =1}^{M} P_S ( m) \geq 1 - \epsilon$, let $\eta = \lfloor \log_2 M \rfloor$, and let $\mathsf f ( m ) = \mathsf f_S^\star (m )$, if $\lfloor \log_2 m \rfloor < \eta$ and $\mathsf f ( m ) = \varnothing$, if $\lfloor \log_2 m \rfloor > \eta$, and assign the outcomes with $\lfloor \log_2 m \rfloor = \eta$
 to $\varnothing$ with probability $\alpha$ and to the lossless encoding $\mathsf f_S^\star (m )$ with probability $1 - \alpha$, which is chosen so that\footnote{It does not matter how the encoder implements randomization on the boundary as long as conditioned on $\lfloor \log_2 S \rfloor = \eta $, the probability that $S$ is mapped to $\varnothing$ is $\alpha$. In the deterministic code with the fortuitous choice of $\epsilon$ described above, $\alpha$  is the ratio of the probabilities
of the sets $\{m \in \mathcal M\colon  m > M ,  \lfloor \log_2 m \rfloor = \eta \}$ to $\{m\in \mathcal M\colon  \lfloor \log_2 m \rfloor = \eta \}$. }
 \begin{align}
\epsilon &= \alpha \sum_{\stackrel{ m \in \mathcal{M}:} {\lfloor \log_2 m \rfloor = \eta}} P_S (m)  +  \sum_{\stackrel{m \in \mathcal{M}:}{ \lfloor \log_2 m \rfloor > \eta}} P_S (m) \\
&= \E{ \varepsilon^\star (S) } \label{ladygaga}
\end{align}
where
\begin{equation}
 \varepsilon^\star(m) = 
\begin{cases}
0 &  \ell (\mathsf f_S^\star (m)) < \eta \\
\alpha &  \ell (\mathsf f_S^\star (m)) = \eta\\
1 &  \ell (\mathsf f_S^\star (m)) > \eta
\end{cases}
\label{epsstar}
\end{equation}

We have shown that the output of the optimal encoder has structure\footnote{If error detection is required and $\epsilon \geq P_{S}(1)$, then $\mathsf f_S^\star(m)$ in the right side of
\eqref{festar} is replaced by $\mathsf f_S^\star(m+1)$. Similarly, if error detection is required and $P_{S}(j) > \epsilon
\geq P_{S}(j+1)$, $\mathsf f_S^\star(m)$ in the right side of \eqref{festar} is replaced by $\mathsf f_S^\star(m+1)$ as long as $m
\geq j$, and $\varnothing$ in the right side of \eqref{festar} is replaced by $\mathsf f_S^\star(j)$. }
 \begin{equation}
W (m ) =   
\begin{cases}
	\mathsf \mathsf f_S^\star (m) & \langle \ell (\mathsf f_S^\star (m))  \rangle_\epsilon > 0 \\
 	\varnothing &\text{otherwise}
\end{cases}
\label{festar}
\end{equation}
and that the minimum average length is given by
 	\begin{align}
			L_S^\star(\epsilon) &= \EE[ \left \langle \ell (\mathsf f_S^\star (S)) \right \rangle_\epsilon ]  \label{eq:lstar}\\
			&= L_S^\star(0) - \max_{\varepsilon(\cdot): \EE[\varepsilon(S)] \le \epsilon} \EE[\varepsilon(S) \ell (\mathsf f_S^\star (S))]  \label{eq:ittx0}
			\\
			&= L_S^\star(0) -  \EE[\varepsilon^\star(S) \ell (\mathsf f_S^\star (S))]
	\end{align}
where the optimization is over $\varepsilon \colon \mathbb Z^+ \mapsto [0, 1]$, and the optimal error profile $\varepsilon^\star(\cdot)$ that achieves \eqref{eq:ittx0} is given by \eqref{epsstar}.

An immediate consequence is that in the region of large error probability $\epsilon > 1 - P_S (1)$, $M =1$,   
all outcomes are mapped to $\varnothing$, and therefore, $L_{S, \texttt{det}}^\star (\epsilon) = 0$. At the other extreme, if $\epsilon = 0$, then $M = |\mathcal M|$ and \cite{verdu2014lossless}
\begin{align}\label{booster}
L_S^\star (0) = \mathbb{E} [ \ell (\mathsf f_S^\star (S)) ] = \sum_{i=1}^\infty \mathbb{P} [ S \geq 2^i ]
\end{align}

Denote by $L_{S, \texttt{det}}(\epsilon)$ the minimum average length comparable with error probability $\epsilon$ if randomized codes are not allowed. It satisfies the bounds
\begin{align}
L^\star_S(\epsilon) &\leq L_{S, \texttt{det}}(\epsilon)\\
&\leq  L^\star_S(\epsilon) +  \phi(\min\left\{\epsilon, e^{-1} \right\}) \label{eq:Lstardeterm}, 
\end{align}
where 
\begin{equation}
 \phi(x) \triangleq x \log_2 \frac 1 x \label{eq:phi}.
\end{equation}
Note that $0 \leq \phi(x)  \leq e^{-1} \log_2 e \approx 0.531$~bits on $x \in [0, 1]$, where the maximum is achieved at $x = e^{-1}$. 

To show \eqref{eq:Lstardeterm}, observe that the optimal encoder needs to randomize at most one element of $\mathcal M$. Indeed, let $m_0 \in \mathcal M$ be the minimum of $m_0$ satisfying
\begin{equation}
 \Prob{S > m_0  | \lfloor \log_2 S \rfloor = \eta } \leq \alpha \label{eq:m0}
\end{equation}
and map all $\{m > m_0 \colon \lfloor \log_2 m \rfloor = \eta \}$ to $\varnothing$, all $\{m < m_0 \colon \lfloor \log_2 m \rfloor = \eta \}$ to $\mathsf f^\star_S(m)$, and map $m_0$ to $\varnothing$ with probability 
\begin{equation}
 \alpha^- \triangleq \left(\alpha - \Prob{S > m_0  | \lfloor \log_2 S \rfloor = \eta } \right) \frac{\Prob{\lfloor \log_2 S \rfloor = \eta}}{P_S(m_0)},
\end{equation}
and to $\mathsf f^\star_S(m_0)$ otherwise. Clearly this construction achieves both \eqref{ladygaga} and \eqref{eq:lstar}. Using \eqref{eq:lmb0}, it follows that
\begin{align}
  L_{S, \texttt{det}}^\star(\epsilon) &=  L_{S}^\star(\epsilon) + \alpha^- P_S(m_0) \ell(\mathsf f^\star_S(m_0))\\
&\leq  L_{S}^\star(\epsilon) + \alpha^- P_S(m_0) \log_2 \frac 1 {P_S(m_0)}
\end{align}
To obtain \eqref{eq:Lstardeterm}, notice that $\alpha^- P_S(m_0) \leq \epsilon$, and if $P_S(m_0) > \epsilon$ we bound 
\begin{equation}
 \alpha^- P_S(m_0) \log_2 \frac 1 {P_S(m_0)} \leq \epsilon \log_2 \frac 1 \epsilon.
\end{equation}
Otherwise, since the function $\phi(p)$ is monotonically increasing on $p \leq e^{-1}$ and decreasing on $p > e^{-1}$, maximizing it over $[0, \epsilon]$ we obtain \eqref{eq:Lstardeterm}. 

Variants of the variational characterization \eqref{eq:ittx0} will be important throughout the paper. In general, for $X \in \mathbb R$
 \begin{equation}
 \E{ \left \langle X \right \rangle_\epsilon } = \min_{\varepsilon(\cdot): \EE[\varepsilon(X)] \le \epsilon} \E{(1 - \varepsilon(X)) X} \label{eq:Xepsvar}
\end{equation}
where the optimization is over $\varepsilon \colon \mathbb R \mapsto [0, 1]$.

\subsection{Erokhin's function}
As made evident in \eqref{eq:approx}, Erokhin's function \cite{erokhin1958epsilon} plays an important role in characterizing the nonasymptotic limit of variable-length lossless data compression allowing nonzero error probability. In this subsection, we point out some of its properties.

Erokhin's function is defined in \eqref{eq:heps_def2}, but in fact, the constraint in \eqref{eq:heps_def2} is achieved with equality: 
\begin{equation}\label{eq:itt5a}
		\mathbb H(S, \epsilon) = \min_{ \substack{ P_{Z | S } \colon \\ \Prob{S \neq Z} = \epsilon}} I(S; Z) 
\end{equation}
Indeed, given $\PP[S\neq Z]\le\epsilon$ we may define $Z'$ such that $S\to Z \to Z'$ and $\PP[S\neq Z']=\epsilon$ (for example, by
probabilistically mapping non-zero values of $Z$ to $Z'=0$).

Furthermore, Erokhin's function can be parametrically represented as follows \cite{erokhin1958epsilon}. 
	\begin{align}
			\mathbb{H} (S,\epsilon) 			&=  \sum_{m=1}^M P_S (m ) \log_2 \frac{1}{P_S (m) }
			- (1 - \epsilon ) \log_2\frac{1}{1-\epsilon} 
			\notag\\
			&- (M-1 ) \eta \log_2 \frac{1}{\eta}
\label{jamon} 
\end{align}
with the integer $M$ and $\eta >0 $ determined by $\epsilon$ through 
		\begin{align}\label{clocks}
			\sum_{m =1}^M P_S (m) = 1- \epsilon + (M-1) \eta
\end{align}	
In particular, $\mathbb{H} (S, 0)  = H(S)$, and if $S$ is equiprobable on an alphabet of $M$ letters, then 
	\begin{align}
			\mathbb{H} (S,\epsilon) =
\log_2 M - \epsilon \log_2 (M-1) -
 h(\epsilon)\,.
 \label{eq:heps_equi}
\end{align}

As the following result shows, Erokhin's function is bounded in terms of the expectation of the $\epsilon$-cutoff of information, $\left \langle \imath_S(S) \right \rangle_\epsilon$, which is easier to compute and analyze than the exact parametric solution in \eqref{jamon}.

\begin{thm} [Bounds to $\mathbb{H} (S, \epsilon)$] If $0 \le \epsilon < 1- P_S(1)$, Erokhin's function satisfies 
\begin{align} 
&~ 
\EE[\left \langle \imath_S(S) \right \rangle_\epsilon ] - \epsilon \log_2( L_S^\star(0)+\epsilon) -2\,h(\epsilon) - \epsilon\log_2{e\over \epsilon} 
\notag \\
\le&~
 \mathbb H(S, \epsilon)\label{eq:lmb5}\\
\le &~ \EE[\left \langle \imath_S(S) \right \rangle_\epsilon ] \label{eq:lmb6}
\end{align}
	
If $\epsilon \geq 1- P_S(1)$, then $\mathbb H(S, \epsilon)  = 0$.
\label{thm:Erokhinepsinfo}		
\end{thm}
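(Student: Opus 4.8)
The plan is to prove \eqref{eq:lmb6}, \eqref{eq:lmb5}, and the degenerate case $\epsilon\ge1-P_S(1)$ separately, in each case either by producing an explicit feasible test channel (for upper bounds on $\mathbb H$) or by bounding $H(S|Z)$ uniformly over feasible channels (for lower bounds). The degenerate case is immediate: for $\epsilon\ge1-P_S(1)$ the deterministic $Z\equiv1$ has $\Prob{S\ne Z}=1-P_S(1)\le\epsilon$ and $I(S;Z)=0$, so $\mathbb H(S,\epsilon)=0$ by nonnegativity of mutual information.

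For the upper bound \eqref{eq:lmb6} I would use a randomized erase-and-resample channel. Let $\eta,\alpha$ be the $\epsilon$-cutoff parameters of $\imath_S(S)$ from \eqref{epscutoff}, put $G_0=\{m:\imath_S(m)<\eta\}$ and $G_1=\{m:\imath_S(m)\le\eta\}$, draw $T\in\{0,1\}$ independently of $S$ with $\Prob{T=0}=\alpha$, and, given $T=t$ and $S$, set $Z=S$ if $S\in G_t$ and otherwise draw $Z$ from $P_S$ conditioned on $\{S\in G_t\}$, independently of $S$ (when $G_0=\varnothing$, the $T=0$ branch simply outputs $Z$ independent of $S$). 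Then $\Prob{S\ne Z}\le\epsilon$, and, since $T\perp S$, $I(S;Z)\le I(S;Z\mid T)=\alpha\,I_0+(1-\alpha)\,I_1$, where $I_t:=I(S;Z\mid T=t)$ evaluates in closed form to $\sum_{m\in G_t}P_S(m)\imath_S(m)+\Prob{S\in G_t}\log_2\Prob{S\in G_t}$. Averaging over $t$, the information parts combine to exactly $\EE[\langle\imath_S(S)\rangle_\epsilon]$ (this is where splitting the probability mass at the level $\imath_S(S)=\eta$ into fractions $1-\alpha$ and $\alpha$ does its job), while the two $\log$-terms are nonpositive; hence $\mathbb H(S,\epsilon)\le\EE[\langle\imath_S(S)\rangle_\epsilon]$.

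For the lower bound \eqref{eq:lmb5} we may assume $H(S)<\infty$, since otherwise $L_S^\star(0)=\infty$ by \eqref{eq:alon2} and \eqref{eq:lmb5} is vacuous. By \eqref{eq:itt5a} it suffices to lower-bound $I(S;Z)=H(S)-H(S|Z)$ over all $Z$ with $\Prob{S\ne Z}=\epsilon$, i.e.\ to upper-bound $H(S|Z)$. Writing $E=\1{S\ne Z}$, a deterministic function of $(S,Z)$, we get $H(S|Z)=H(E|Z)+\Prob{E=1}\,H(S|Z,E=1)\le h(\epsilon)+\Prob{E=1}\,H(S|Z,E=1)$, the $E=0$ contribution vanishing since there $S=Z$. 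The key step is the change-of-measure estimate $H(S|Z=z,E=1)\le\EE[\imath_S(S)\mid Z=z,E=1]$, which is nothing but nonnegativity of $D\big(P_{S|Z=z,E=1}\,\|\,P_S\big)$; averaging over $z$ gives $\Prob{E=1}\,H(S|Z,E=1)\le\EE[\imath_S(S)\,\1{S\ne Z}]=\sum_m P_S(m)\,\imath_S(m)\,\Prob{Z\ne m\mid S=m}$, and since the profile $m\mapsto\Prob{Z\ne m\mid S=m}$ has mean $\epsilon$ under $P_S$, the variational identity \eqref{eq:Xepsvar} (with $X=\imath_S(S)$) bounds the last sum by $H(S)-\EE[\langle\imath_S(S)\rangle_\epsilon]$. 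Collecting terms gives $\mathbb H(S,\epsilon)\ge\EE[\langle\imath_S(S)\rangle_\epsilon]-h(\epsilon)$, which is stronger than \eqref{eq:lmb5} because $\epsilon\log_2(L_S^\star(0)+\epsilon)+\epsilon\log_2\frac e\epsilon=\epsilon\log_2 e+\epsilon\log_2\!\big(\tfrac{L_S^\star(0)}{\epsilon}+1\big)\ge0$ and $h(\epsilon)\ge0$.

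The main obstacle I anticipate is not conceptual but the bookkeeping at the cutoff level in the upper bound: the probability mass with $\imath_S(S)=\eta$ has to be split, which is exactly what the $(1-\alpha,\alpha)$ mixture of the two hard channels above accomplishes — a single hard erasure channel overshoots $\EE[\langle\imath_S(S)\rangle_\epsilon]$ by a term of order $\epsilon\log_2\frac1\epsilon$ — together with the degenerate sub-case $G_0=\varnothing$, which needs the separate description given above. The remaining steps — the closed-form evaluation $I(S;Z\mid T=t)=\Prob{S\in G_t}\,H\big(P_S(\cdot\mid S\in G_t)\big)$, nonnegativity of the relative entropies used above, and the elementary final comparison of $\EE[\langle\imath_S(S)\rangle_\epsilon]-h(\epsilon)$ with the right-hand side of \eqref{eq:lmb5} — are routine.
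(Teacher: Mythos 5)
Your proposal is correct, and on the lower bound it takes a genuinely different route from the paper. The paper obtains \eqref{eq:lmb5} indirectly, as a corollary of its results on $L_S^\star(\epsilon)$: it combines \eqref{eq:lmb3} with the bound \eqref{eq:itt4}, whose proof goes through the optimal code's length function $\Lambda=\ell(\mathsf f_S^\star(S))$, Fano's inequality conditioned on $\Lambda$, and a geometric maximum-entropy bound for $H(\Lambda|Z,E)$ --- that detour is exactly where the terms $\epsilon\log_2(L_S^\star(0)+\epsilon)$, $\epsilon\log_2\frac{e}{\epsilon}$ and the second $h(\epsilon)$ come from. Your argument instead works directly on $I(S;Z)=H(S)-H(S|Z)$ (using the equality form \eqref{eq:itt5a}), splits off $E=\1{S\neq Z}$, replaces Fano's cardinality bound by the change-of-measure inequality $H(S|Z=z,E=1)\le \EE[\imath_S(S)\mid Z=z,E=1]$, and then invokes the variational identity \eqref{eq:Xepsvar}/\eqref{eq:lmb2} for the error profile $\varepsilon(m)=\Prob{Z\neq m\mid S=m}$. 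This is self-contained (no reference to the optimal code at all) and yields the strictly stronger bound $\mathbb H(S,\epsilon)\ge \EE[\langle\imath_S(S)\rangle_\epsilon]-h(\epsilon)$, which is essentially tight (cf.\ the equiprobable case \eqref{eq:heps_equi}, where the gap is $h(\epsilon)-\epsilon\log_2\frac{M}{M-1}$) and, applied to $S^k$, would even improve the paper's remark after Theorem~\ref{thm:Rlossless} from $\theta(k)\ge-\epsilon\log_2 k+\bigo{1}$ to $\theta(k)=\bigo{1}$ for $\mathbb H(S^k,\epsilon)$; your reduction of \eqref{eq:lmb5} to this bound (absorbing the possibly negative $\epsilon\log_2(L_S^\star(0)+\epsilon)$ into $\epsilon\log_2\frac e\epsilon$) is also correct. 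For the upper bound \eqref{eq:lmb6} your construction is the same idea as the paper's (reproduce $S$ exactly on the low-information set, randomize the $\epsilon$-tail, with the boundary mass split according to $\alpha$); the paper resamples from the unconditional $P_S$ and bounds $I(S;Z)\le D(P_{Z|S}\|P_S|P_S)=\EE[\langle\imath_S(S)\rangle_\epsilon]$ in one line, while you resample within $G_t$ and discard the nonpositive $\Prob{S\in G_t}\log_2\Prob{S\in G_t}$ terms --- both computations check out, including your handling of the boundary randomization and the degenerate case $G_0=\varnothing$. The only cosmetic loose end is the case $H(S)=\infty$ with $\epsilon=0$ (allowed by the hypothesis), where "vacuous" should be replaced by the observation that $\mathbb H(S,0)=H(S)$ trivially since the constraint forces $Z=S$.
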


\begin{proof}
 The bound in~\eqref{eq:lmb5} follows from~\eqref{eq:itt4} and~\eqref{eq:lmb3} below. Showing~\eqref{eq:lmb6} involves defining a
suboptimal choice (in~\eqref{eq:heps_def2}) of 
\begin{align}
Z = 
\begin{cases}
 S &  \left \langle \imath_S(S) \right \rangle_\epsilon >0 \\
 \bar S  &  \left \langle \imath_S(S) \right \rangle_\epsilon  = 0 
\end{cases}
\end{align} 
where $P_{S \bar S} = P_S P_S$, and noting that $I(S; Z) \leq D(P_{Z|S} \| P_{S} | P_S) = \EE[\left \langle \imath_S(S) \right \rangle_\epsilon ]$, where $D(\cdot \| \cdot | \cdot)$ denotes conditional relative entropy. 

\end{proof}

Figure~\ref{fig:I} plots the bounds to $\mathbb H(S^k, \epsilon)$ in Theorem \ref{thm:Erokhinepsinfo} for biased coin flips.

 \begin{figure}[htbp]
    \epsfig{file=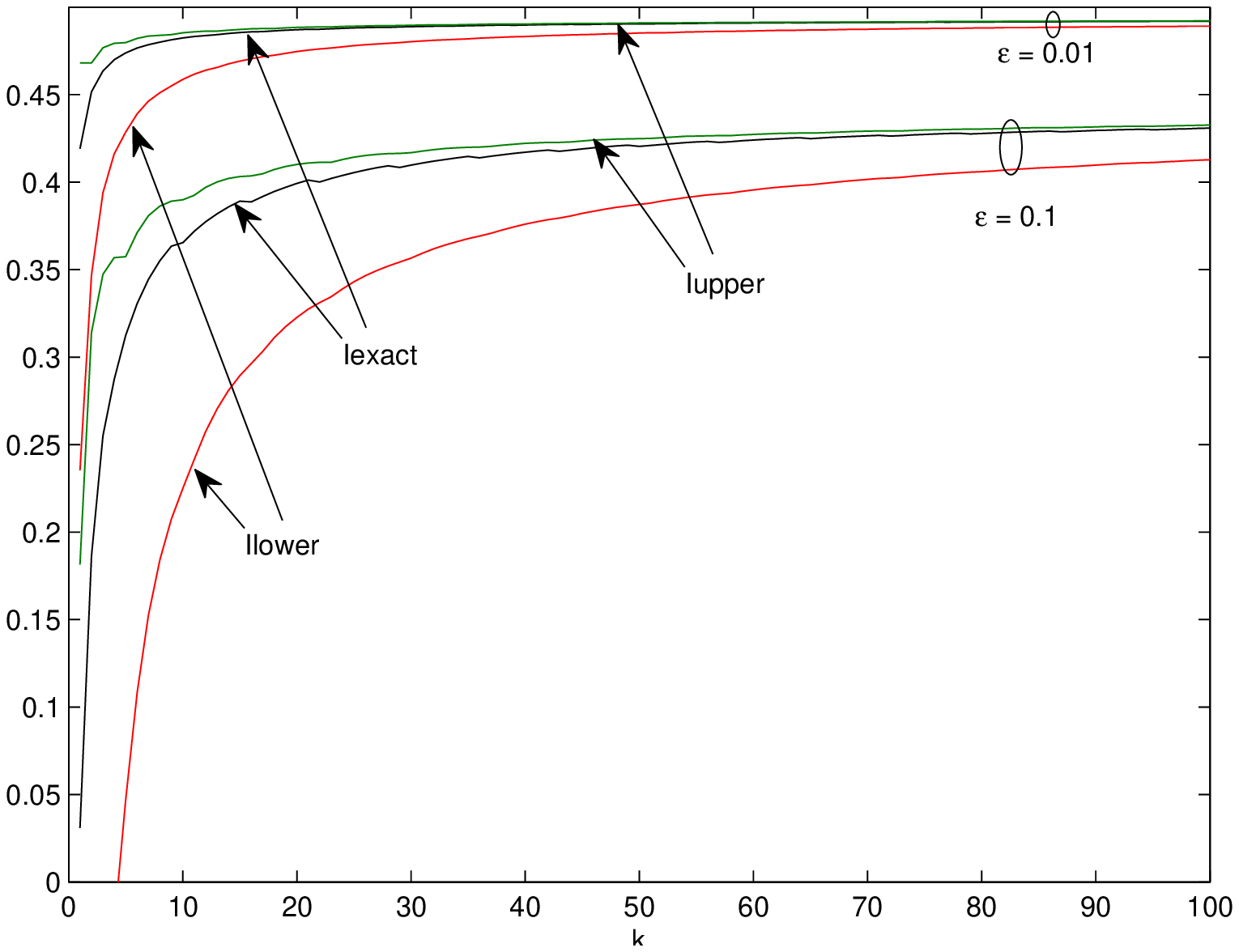,width=1\linewidth}
\caption{Bounds to Erokhin's function for a memoryless binary source with bias $p = 0.11$. }
\label{fig:I}
\end{figure}

\subsection{Non-asymptotic bounds}

Expression~\eqref{eq:lstar} is not always convenient to work with. The next result tightly bounds $L^\star(\epsilon)$ in terms of the $\epsilon$-cutoff of information, $\left \langle \imath_S(S) \right \rangle_\epsilon$, a random variable which is easier to deal with.

\begin{thm}[Bounds to $L_S^\star(\epsilon)$] If $0 \le \epsilon < 1- P_S(1)$, then the minimum achievable average length satisfies
 \begin{align}
\EE[ \left \langle \imath_S(S) \right \rangle_\epsilon ] +L_S^\star(0)-H(S) 
&\le L_S^\star(\epsilon) \label{eq:lmb3}\\
	&\le \EE[ \left \langle \imath_S(S) \right \rangle_\epsilon ] \label{eq:lmb4}
 \end{align}					    
If $\epsilon \geq 1- P_S(1)$, then $L_S^\star(\epsilon)  = 0$.
\label{thm:epsinfo}		
\end{thm}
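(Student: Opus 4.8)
The plan is to obtain both inequalities directly from the exact characterization of $L_S^\star(\epsilon)$ established in Section~\ref{sec:optimal}, namely \eqref{eq:lstar}--\eqref{eq:ittx0}, combined with the variational identity \eqref{eq:Xepsvar} for the $\epsilon$-cutoff and the pointwise domination $\ell(\mathsf f_S^\star(m)) = \lfloor \log_2 m\rfloor \le \imath_S(m)$ from \eqref{eq:lmb0}. The unifying observation is that both $L_S^\star(\epsilon)$ and $\EE[\langle \imath_S(S)\rangle_\epsilon]$ have the form ``mean minus the largest top-$\epsilon$-mass truncation'' over the same class of error profiles. Indeed, \eqref{eq:ittx0} together with \eqref{booster} reads $L_S^\star(\epsilon) = L_S^\star(0) - \max_{\varepsilon(\cdot)\colon\, \EE[\varepsilon(S)]\le\epsilon}\EE[\varepsilon(S)\ell(\mathsf f_S^\star(S))]$ with $L_S^\star(0)=\EE[\ell(\mathsf f_S^\star(S))]$, while rewriting \eqref{eq:Xepsvar} for $X=\imath_S(S)$ gives $\EE[\langle \imath_S(S)\rangle_\epsilon] = H(S) - \max_{\varepsilon(\cdot)\colon\, \EE[\varepsilon(S)]\le\epsilon}\EE[\varepsilon(S)\imath_S(S)]$. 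Here one uses that the optimization in \eqref{eq:Xepsvar} may be taken over $\varepsilon\colon\mathbb Z^+\to[0,1]$ that are functions of $S$ rather than only of the value $\imath_S(S)$: replacing such an $\varepsilon$ by $t\mapsto\EE[\varepsilon(S)\mid\imath_S(S)=t]$ changes neither the constraint nor the objective, so the two classes yield the same optimum. This is the only mildly delicate bookkeeping point; everything else is routine.

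For the upper bound \eqref{eq:lmb4} I would use monotonicity of $X\mapsto\EE[\langle X\rangle_\epsilon]$ under almost-sure domination. Let $\varepsilon^\star$ attain the minimum in \eqref{eq:Xepsvar} for $X=\imath_S(S)$. Starting from \eqref{eq:lstar} and the min-form of \eqref{eq:ittx0}, $\varepsilon^\star$ is a feasible (generally suboptimal) choice for $\ell(\mathsf f_S^\star(S))$, so $L_S^\star(\epsilon)=\EE[\langle\ell(\mathsf f_S^\star(S))\rangle_\epsilon]\le\EE[(1-\varepsilon^\star(S))\ell(\mathsf f_S^\star(S))]$; since $1-\varepsilon^\star(S)\ge 0$ and $\ell(\mathsf f_S^\star(S))\le\imath_S(S)$ pointwise by \eqref{eq:lmb0}, this is $\le\EE[(1-\varepsilon^\star(S))\imath_S(S)]=\EE[\langle\imath_S(S)\rangle_\epsilon]$.

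For the lower bound \eqref{eq:lmb3} I would compare the two truncation maxima term by term. For every admissible profile $\varepsilon$, the pointwise bound \eqref{eq:lmb0} and $\varepsilon(S)\ge 0$ give $\EE[\varepsilon(S)\ell(\mathsf f_S^\star(S))]\le\EE[\varepsilon(S)\imath_S(S)]$, hence $\max_{\varepsilon}\EE[\varepsilon(S)\ell(\mathsf f_S^\star(S))]\le\max_{\varepsilon}\EE[\varepsilon(S)\imath_S(S)]$. Plugging this into \eqref{eq:ittx0},
\[
L_S^\star(\epsilon)=L_S^\star(0)-\max_{\varepsilon}\EE[\varepsilon(S)\ell(\mathsf f_S^\star(S))]\ \ge\ L_S^\star(0)-\max_{\varepsilon}\EE[\varepsilon(S)\imath_S(S)]=L_S^\star(0)-\bigl(H(S)-\EE[\langle\imath_S(S)\rangle_\epsilon]\bigr),
\]
which is precisely \eqref{eq:lmb3} after rearranging. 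Finally, for $\epsilon\ge 1-P_S(1)$ one notes that mapping every source symbol to $\varnothing$ and decoding the most likely symbol incurs error probability $1-P_S(1)\le\epsilon$ and average length $0$, and lengths are nonnegative, so $L_S^\star(\epsilon)=0$ (as already observed after \eqref{festar}). I do not anticipate a real obstacle; if anything, the step needing care is the identification of the optimization classes in \eqref{eq:ittx0} and in \eqref{eq:Xepsvar}-applied-to-$\imath_S(S)$, so that the two ``$\max$'' terms range over literally the same set of profiles and can be compared pointwise in $\varepsilon$.
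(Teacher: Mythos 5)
Your proposal is correct and is essentially the paper's own argument: both bounds follow by comparing the variational characterization \eqref{eq:ittx0} of $L_S^\star(\epsilon)$ with the characterization \eqref{eq:Xepsvar} applied to $\imath_S(S)$, using the pointwise bound \eqref{eq:lmb0}. Your extra bookkeeping point (that the profiles may equivalently be taken as functions of $S$, via conditional expectation) is a correct and harmless elaboration of what the paper leaves implicit.
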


\begin{proof}
Due to~\eqref{eq:Xepsvar}, we have the variational characterization:
\begin{align} \EE[ \left \langle \imath_S(S) \right \rangle_\epsilon] &= 
		H(S) - \max_{\varepsilon(\cdot): \EE[\varepsilon(S)] \le \epsilon} \EE[\varepsilon(S) \imath_S(S)] 
		\label{eq:lmb2}
\end{align}		
where $\varepsilon(\cdot)$ takes values in $[0,1]$. We obtain \eqref{eq:lmb3}--\eqref{eq:lmb4} comparing~\eqref{eq:ittx0} and~\eqref{eq:lmb2} via~\eqref{eq:lmb0}. 
\end{proof}

\begin{example}
If $S$ is equiprobable on an alphabet of cardinality $M$, then 
\begin{equation}
  \left \langle \imath_S(S) \right \rangle_\epsilon = 
\begin{cases}
\log_2 M \text{ w. p. } 1 - \epsilon\\
0 \text{ otherwise } 
\end{cases}
\end{equation}
\end{example}

The next result, in which the role of entropy is taken over by Erokhin's function, generalizes the bounds in \eqref{eq:alon2} and \eqref{eq:wyner} to $\epsilon > 0$.

\begin{thm}[Relation between $L_S^\star(\epsilon)$ and $\mathbb{H} (S, \epsilon)$] \label{thm:NOA}
If $0 \le \epsilon < 1- P_S(1)$, then the minimum achievable average length satisfies
	\begin{align} 
	&~ 
	\mathbb{H} (S, \epsilon) - \log_2 (\mathbb{H} (S, \epsilon) + 1) - \log_2 e 
	\notag \\
	\le&~ L_S^\star(\epsilon) \label{eq:itt0}\\
		\le&~ \mathbb{H} (S, \epsilon) + \epsilon \log_2 (H(S) + \epsilon) + \epsilon \log_2 {e\over\epsilon} + 2\,h(\epsilon) \label{eq:itt1} 
	\end{align}
	where $\mathbb{H} (S, \epsilon)$ is defined in \eqref{eq:heps_def2}, and the binary entropy function is denoted as $h(x) = {x\log_2 {1\over x}} + (1-x) \log_2 {1\over 1-x}$.

\end{thm}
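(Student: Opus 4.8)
The plan is to prove the two sides by completely different routes. The upper bound \eqref{eq:itt1} falls out by \emph{chaining} the already-established Theorems~\ref{thm:Erokhinepsinfo} and~\ref{thm:epsinfo} with the classical estimate \eqref{eq:wyner}; the lower bound \eqref{eq:itt0}, by contrast, needs a fresh single-shot converse modelled on the Alon--Orlitsky argument behind \eqref{eq:alon2}.

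For \eqref{eq:itt1} I would start from \eqref{eq:lmb4}, namely $L_S^\star(\epsilon)\le\EE[\langle\imath_S(S)\rangle_\epsilon]$, then substitute the rearranged form of \eqref{eq:lmb5},
\[
\EE[\langle\imath_S(S)\rangle_\epsilon]\le\mathbb H(S,\epsilon)+\epsilon\log_2(L_S^\star(0)+\epsilon)+2h(\epsilon)+\epsilon\log_2\frac e\epsilon ,
\]
and finally replace $L_S^\star(0)$ by the larger quantity $H(S)$ using \eqref{eq:wyner} and monotonicity of $\log_2$. This reproduces \eqref{eq:itt1} term for term. Note that running \eqref{eq:lmb3} and \eqref{eq:lmb6} through \eqref{eq:alon2} in the same fashion yields only $L_S^\star(\epsilon)\ge\mathbb H(S,\epsilon)-\log_2(H(S)+1)-\log_2 e$, which is strictly weaker than \eqref{eq:itt0} because $\mathbb H(S,\epsilon)\le H(S)$ --- this is precisely why \eqref{eq:itt0} demands its own argument.

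For \eqref{eq:itt0} I would fix an arbitrary $(L,\epsilon)$ code, write $W$ for its encoder output and $\mathsf g$ for its (deterministic, without loss of generality) decoder, and set $\hat S=\mathsf g(W)$, so that $S\to W\to\hat S$ and $\PP[S\ne\hat S]\le\epsilon$. Since $P_{\hat S|S}$ is then feasible in \eqref{eq:heps_def2}, the definition of $\mathbb H$ together with the data-processing inequality gives
\[
\mathbb H(S,\epsilon)\le I(S;\hat S)\le I(S;W)\le H(W).
\]
On the other hand, the identity map is a zero-error code for the ``source'' $W$, so $\EE[\ell(W)]\ge L_W^\star(0)$, and \eqref{eq:alon2} applied to $W$ yields $L\ge\EE[\ell(W)]\ge H(W)-\log_2(H(W)+1)-\log_2 e$ (if $L_S^\star(\epsilon)=\infty$ the bound is vacuous, so we may assume it finite, which in turn forces $H(W)<\infty$). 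Since $t\mapsto t-\log_2(t+1)$ is nondecreasing for $t\ge\log_2 e-1\approx0.44$, in the regime $\mathbb H(S,\epsilon)\ge\log_2 e-1$ the estimate $H(W)\ge\mathbb H(S,\epsilon)$ may be substituted to give $L\ge\mathbb H(S,\epsilon)-\log_2(\mathbb H(S,\epsilon)+1)-\log_2 e$; in the complementary regime the right-hand side of \eqref{eq:itt0} is already negative, hence dominated by $L\ge0$. As this holds for every $(L,\epsilon)$ code, taking the infimum over $L$ gives \eqref{eq:itt0}.

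The only genuine obstacle is that final substitution: $t\mapsto t-\log_2(t+1)-\log_2 e$ is \emph{not} monotone on all of $[0,\infty)$, so one cannot blindly plug in $H(W)\ge\mathbb H(S,\epsilon)$, and the small-$\mathbb H$ regime must be peeled off and dispatched by the trivial nonnegativity of $L_S^\star(\epsilon)$. Everything else is routine bookkeeping with inequalities already in hand.
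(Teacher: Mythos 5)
Your proof of the lower bound \eqref{eq:itt0} is correct and is essentially the paper's own argument: $\mathbb H(S,\epsilon)\le I(S;\hat S)\le I(S;W)\le H(W)$, then \eqref{eq:alon2} applied to $W$ viewed as a source encoded by the identity map, with the monotonicity of $t\mapsto t-\log_2(t+1)$ for $t\ge \log_2\frac e2$ (and nonnegativity of $L_S^\star(\epsilon)$ in the small-$\mathbb H$ regime) justifying the substitution $H(W)\ge\mathbb H(S,\epsilon)$ — exactly the paper's \eqref{sw}--\eqref{dishrun} step.

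The upper bound, however, has a genuine gap: it is circular relative to the paper's logical structure. You chain \eqref{eq:lmb4} with the rearranged \eqref{eq:lmb5} and Wyner's bound \eqref{eq:wyner}, treating Theorem~\ref{thm:Erokhinepsinfo} as already established. But the paper's only proof of \eqref{eq:lmb5} is ``follows from \eqref{eq:itt4} and \eqref{eq:lmb3} below'': that is, \eqref{eq:lmb5} is itself deduced from the inequality \eqref{eq:itt4}, which is precisely the intermediate form of \eqref{eq:itt1} (before applying \eqref{eq:wyner}) established inside the proof of the present theorem, combined with Theorem~\ref{thm:epsinfo}. So nothing in your proposal supplies the substantive achievability argument; you would be deriving \eqref{eq:itt1} from a bound whose only proof presupposes \eqref{eq:itt4}. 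The missing content is the direct single-shot argument: fix any $P_{Z|S}$ feasible in \eqref{eq:itt5a}, set $\Lambda=\ell(\mathsf f_S^\star(S))$ and $E=\1{S\neq Z}$, decompose $I(S;Z)=H(S)-H(\Lambda|Z)-H(S|Z,\Lambda)$, bound $H(S|Z,\Lambda)\le L_S^\star(0)-L_S^\star(\epsilon)+h(\epsilon)$ via Fano's inequality (conditioned on $\Lambda=i$ there are at most $2^i$ source values) together with the variational identity \eqref{eq:ittx0}, bound $I(\Lambda;E|Z)\le h(\epsilon)$, and bound $H(\Lambda|Z,E)\le \epsilon\log_2\frac e\epsilon+\epsilon\log_2(L_S^\star(0)+\epsilon)$ by the geometric maximum-entropy bound under a mean constraint and \eqref{booster}; minimizing over feasible $P_{Z|S}$ gives \eqref{eq:itt4}, and only then does \eqref{eq:wyner} yield \eqref{eq:itt1}. (Your quantitative chaining is consistent — it reproduces \eqref{eq:itt1} term for term — but it is only legitimate if \eqref{eq:lmb5} is given an independent proof, which neither you nor the paper provides.)
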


Note that we recover \eqref{eq:alon2} and \eqref{eq:wyner} by particularizing Theorem \ref{thm:NOA} to $\epsilon = 0$. 
\par

\begin{proof}
 We first show the converse bound \eqref{eq:itt0}. The entropy of the output string $W\in \{ 0, 1 \}^\star$ of an arbitrary compressor $S \to W \to \hat S$ with $\Prob{S \neq \hat S} \leq \epsilon$ satisfies
\begin{align}\label{sw}
H (W ) \geq I ( S ; W ) = I ( S ; \hat{S} ) \geq \mathbb{H} (S,\epsilon) 
\end{align}
where the rightmost inequality holds in view of \eqref{eq:heps_def2}. 
Noting that the identity mapping $W \mapsto W \mapsto W$ is a lossless variable-length code, we lower-bound its average length as
\begin{align} \label{dishwalk}
H(W) - \log_2 (H(W) + 1 ) - \log_2 e &\leq L_W^\star ( 0 ) \\
&\leq \mathbb{E} [ \ell ( W ) ] \label{dishrun}
\end{align}
where \eqref{dishwalk} follows from  \eqref{eq:alon2}. The function of $H(W)$ in the left side of \eqref{dishwalk}
is monotonically increasing if $H(W) > \log_2 \frac{e}{2} = 0.44$~bits  and it is positive if  $H(W) > 3.66$~bits.
Therefore, it is safe to further weaken the bound in \eqref{dishwalk} by invoking \eqref{sw}. This concludes the proof of \eqref{eq:itt0}.
By applying \cite[Theorem 1]{szpankowski2011minimum} to $W$, we can get a sharper lower bound (which is always positive) 
\begin{align}
\psi^{-1} ( \mathbb{H} (S,\epsilon) ) \leq L_S^\star ( \epsilon ) \label{palmdrive}
\end{align}
where $\psi^{-1}$ is the inverse of the monotonic function on the positive real line:
\begin{align}
\psi (x) = x + (1+x) \log_2 (1 +x) - x \log_2 x .
\end{align}

To show the achievability bound \eqref{eq:itt1}, fix $P_{Z|S}$ satisfying the constraint in~\eqref{eq:itt5a}. Denote for brevity
\begin{align}
\Lambda &\triangleq  \ell (\mathsf f_S^\star (S))   \\
E &\triangleq 1\{S\neq Z\}\\
 \varepsilon(i) &\eqdef \PP[S\neq Z | \Lambda=i]
\end{align}
We proceed to lower bound the mutual information between $S$ and $Z$:
	\begin{align}
			I(S; Z) &= I(S; Z, \Lambda) - I(S; \Lambda|Z) \\
	&= H(S) - H(\Lambda|Z) -  H(S|Z, \Lambda) 
		\label{eq:itt2}\\
	&=	H(S) - I(\Lambda; E|Z) - H(\Lambda|Z,E) - H(S|Z, \Lambda)  \label{eq:itv1}\\
	&\geq  L_S^\star(\epsilon) + H(S) - L^\star_S(0) - \epsilon \log_2 (L^\star_S(0) + \epsilon) 
	\notag\\
	&
	-
	 \epsilon \log_2 {e\over\epsilon} - 2\, h(\epsilon) \label{draegers}
\end{align}	
where \eqref{draegers} follows from $I(\Lambda; E|Z) \leq h(\epsilon)$ and the following chains  \eqref{sandhill}-\eqref{hearst} and \eqref{eq:itv2}-\eqref{eq:itv6}.
\begin{align}
H(S| Z,  \Lambda)  
&\leq 
\EE[\varepsilon(\Lambda) \Lambda  + h(\varepsilon(\Lambda))] \label{sandhill}
\\
&\leq
L_S^\star(0) - L_S^\star(\epsilon) + h (\epsilon) \label{hearst}
\end{align}
where \eqref{sandhill} is by Fano's inequality:
 conditioned on $\Lambda = i$, $S$ can have at most $2^i$ values, so
\begin{align}\label{152mosher}
H(S| Z,  \Lambda =i) \leq~& i \, \varepsilon ( i)  + h ( \varepsilon ( i))
\end{align}
and \eqref{hearst} follows from ~\eqref{eq:ittx0}, \eqref{eq:itt5a} and the concavity of $h(\cdot)$.

The third term in \eqref{eq:itv1} is upper bounded as follows. 
	\begin{align} H(\Lambda|Z,E)
		&=  \epsilon H(\Lambda|Z, E=1) 
				\label{eq:itv2}\\
				&\le \epsilon H(\Lambda|S\neq Z)\label{eq:itv3}\\
	&\le  \epsilon \left( \log_2(1+\EE[\Lambda|S\neq Z]) + \log_2 e\right)
				\label{eq:itv4}\\
	&\le  \epsilon \left(\log_2\left( 1+{ \EE[\Lambda]\over \epsilon}\right) + \log_2 e\right)
				\label{eq:itv5}\\
	&= \epsilon \log_2 {e\over \epsilon} + \epsilon(\log_2( L_S^\star(0) + \epsilon)\,,
				\label{eq:itv6}
\end{align}
where~\eqref{eq:itv2} follows since $H(\Lambda|Z,E=0)=0$,~\eqref{eq:itv3} is because conditioning decreases
entropy,~\eqref{eq:itv4} follows by maximizing entropy under the mean constraint (achieved by the geometric 
distribution),~\eqref{eq:itv5} follows by upper-bounding 
	$$ \PP[S\neq Z]\, \EE[\Lambda|S\neq Z] \le \EE[\Lambda] $$
	and~\eqref{eq:itv6} applies  \eqref{booster}.
	
Finally, since the right side of \eqref{draegers} does not depend on $Z$, we may minimize the left side over $P_{Z|S}$ satisfying the constraint in~\eqref{eq:itt5a} to obtain 
\begin{align}\label{eq:itt4}
	 L_S^\star(\epsilon) &\le \mathbb H(S,\epsilon) + L_S^\star(0) - H(S) +\epsilon \log_2(L_S^\star(0) + \epsilon)  
	 \notag\\
	 &
	 + 2\, h(\epsilon) + \epsilon\log_2 {e\over
	\epsilon}
\end{align}	
 which leads to \eqref{eq:itt1} via Wyner's bound \eqref{eq:wyner}. 
 
 \end{proof}

\begin{remark}
The following stronger version of  \eqref{eq:alon2} is shown in \cite[Lemma 3]{alon1994lower}:
\begin{equation}
H(S)  \leq L_S^\star(0) + \log_2 (L_S^\star(0)  + 1)  + \log_2 e  \label{eq:alon2a}
\end{equation}
which, via the same reasoning as in \eqref{sw}--\eqref{dishrun}, leads to the following strengthening of \eqref{eq:itt0}:
\begin{equation}
	\mathbb{H} (S, \epsilon) 
	\le L_S^\star(\epsilon) + \log_2 (L_S^\star(\epsilon)+ 1) + \log_2 e 
\end{equation}
\label{rem:alon}
\end{remark}

Together, Theorems \ref{thm:Erokhinepsinfo}, \ref{thm:epsinfo}, and \ref{thm:NOA} imply that as long as the quantities $L_{S}^\star(\epsilon)$, $\mathbb H(S, \epsilon)$ and $\E{\left \langle \imath_S(S) \right \rangle_\epsilon}$ are not too small, they are close to each other.

In principle, it may seem surprising that $L_S^\star (\epsilon ) $ is connected to  $\mathbb{H} (S, \epsilon) $ in the way dictated by Theorem \ref{thm:NOA},
which implies that whenever the unnormalized quantity $\mathbb{H} (S, \epsilon) $ is large it must be close to the minimum average length. After all, the objectives of minimizing the 
input/output dependence and minimizing the description length of $\hat{S}$ appear to be disparate, and in fact $\eqref{festar}$ and the conditional distribution 
achieving \eqref{eq:heps_def2} are quite different: although in both cases $S$ and its approximation coincide on the most likely outcomes, the number of retained
outcomes is different, and to lessen dependence, errors in the optimizing conditional in \eqref{eq:heps_def2} do not favor $m=1$ or any particular outcome of $S$.


\subsection{Asymptotics for memoryless sources}

\begin{thm} \label{thm:Rlossless} 
 Assume that:
\begin{itemize}
\item  $P_{S^k} = P_{\mathsf S} \times \ldots \times P_{\mathsf S}$.
\item The third absolute moment of $\imath_{\mathsf S}(\mathsf S)$ is finite.
\end{itemize}
For any $0 \leq \epsilon \leq 1$ and $k\to\infty$ we have
\begin{small}
\begin{equation}
\left.\begin{aligned}
        & L_{S^k}^\star(\epsilon)\\
        & \mathbb{H} (S^k, \epsilon)\\
        &\E{ \left \langle \imath_{S^k}(S^k) \right \rangle_\epsilon}
       \end{aligned}
 \right\}
 = (1 - \epsilon) k H(\mathsf S) -   \sqrt{\frac{k V(\mathsf S)}{2 \pi} } e^{- \frac { (\Qinv{\epsilon})^2} 2 }  + \theta(k)  \label{Reps2order} 
\end{equation}
\end{small}
where the remainder term satisfies
\begin{equation}
-  \log_2 k + \bigo{\log_2 \log_2 k}  \leq \theta(k) \leq \bigo{1} \label{eq:remainder}
\end{equation}\end{thm}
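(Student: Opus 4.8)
The plan is to derive \eqref{Reps2order} for all three quantities from one sharp estimate of $\E{\langle\imath_{S^k}(S^k)\rangle_\epsilon}$, and to transfer it to $\mathbb H(S^k,\epsilon)$ and $L_{S^k}^\star(\epsilon)$ through Theorems~\ref{thm:Erokhinepsinfo}, \ref{thm:epsinfo} and \ref{thm:NOA}. First dispose of the boundary cases: for $\epsilon=0$ the right side of \eqref{Reps2order} is $kH(\mathsf S)=H(S^k)$ (since $\Qinv 0=+\infty$) and the statement reduces to $\mathbb H(S^k,0)=H(S^k)$ together with \eqref{eq:alon2}--\eqref{eq:wyner}; for $\epsilon=1$ it is $0$ (since $\Qinv 1=-\infty$) and the last clauses of Theorems~\ref{thm:Erokhinepsinfo}, \ref{thm:epsinfo} give $\mathbb H(S^k,1)=L_{S^k}^\star(1)=0=\E{\langle\imath_{S^k}(S^k)\rangle_1}$. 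So fix $\epsilon\in(0,1)$; the degenerate source with $V(\mathsf S)=0$ being immediate, assume $V(\mathsf S)>0$. Since $P_{S^k}(1)=(\max_s P_{\mathsf S}(s))^k\to 0$, eventually $\epsilon<1-P_{S^k}(1)$ and the three theorems apply; combining \eqref{eq:lmb3}--\eqref{eq:lmb4}, \eqref{eq:lmb5}--\eqref{eq:lmb6}, the identity $H(S^k)=kH(\mathsf S)$, and $L_{S^k}^\star(0)-H(S^k)\in[-\log_2 k-\bigo 1,\,0]$ from \eqref{eq:alon2}--\eqref{eq:wyner}, we get
\[
 \E{\langle\imath_{S^k}(S^k)\rangle_\epsilon}-\log_2 k-\bigo 1\ \le\ L_{S^k}^\star(\epsilon),\ \mathbb H(S^k,\epsilon)\ \le\ \E{\langle\imath_{S^k}(S^k)\rangle_\epsilon}.
\]
Hence \eqref{Reps2order}--\eqref{eq:remainder} follow once we show $\E{\langle\imath_{S^k}(S^k)\rangle_\epsilon}=(1-\epsilon)kH(\mathsf S)-\sqrt{kV(\mathsf S)/(2\pi)}\,e^{-(\Qinv\epsilon)^2/2}+\bigo 1$.

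Put $G\triangleq\imath_{S^k}(S^k)=\sum_{i=1}^k\imath_{\mathsf S}(S_i)$, a sum of i.i.d.\ nonnegative random variables with mean $H(\mathsf S)$, variance $V(\mathsf S)>0$ and finite third absolute moment, and let $\eta\ge 0$, $\alpha\in[0,1)$ be as in \eqref{Xeps}--\eqref{epscutoff} for $X=G$. Expressing $\E{\langle G\rangle_\epsilon}$ through the survival function of $G$ (using $G\ge 0$ and \eqref{epscutoff}) yields the exact identity
\[
 \E{\langle G\rangle_\epsilon}=kH(\mathsf S)-\eta\,\epsilon-\int_\eta^\infty\Prob{G>s}\,ds.
\]
Next I would locate $\eta$. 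From $\alpha<1$ and \eqref{epscutoff}, $\Prob{G>\eta}\le\epsilon\le\Prob{G>\eta}+\Prob{G=\eta}$; the local limit theorem (lattice or non-lattice, needing only $0<V(\mathsf S)<\infty$) gives the anticoncentration bound $\Prob{G=\eta}=\bigo{1/\sqrt k}$; and the classical Berry--Esseen bound gives $\Prob{G>s}=Q\!\left(\tfrac{s-kH(\mathsf S)}{\sqrt{kV(\mathsf S)}}\right)+\bigo{1/\sqrt k}$ uniformly in $s$. Together these force $u_0\triangleq\tfrac{\eta-kH(\mathsf S)}{\sqrt{kV(\mathsf S)}}=\Qinv\epsilon+\bigo{1/\sqrt k}$, i.e.\ $\eta=kH(\mathsf S)+\sqrt{kV(\mathsf S)}\,\Qinv\epsilon+\bigo 1$, and $Q(u_0)=\epsilon+\bigo{1/\sqrt k}$.

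For the remaining integral, the substitution $s=kH(\mathsf S)+\sqrt{kV(\mathsf S)}\,u$ gives
\[
 \int_\eta^\infty\Prob{G>s}\,ds=\sqrt{kV(\mathsf S)}\int_{u_0}^\infty\Prob{G>kH(\mathsf S)+\sqrt{kV(\mathsf S)}\,u}\,du.
\]
Here the uniform Berry--Esseen bound is too crude, because the range of integration has length $\Theta(k)$; instead I would invoke the \emph{non-uniform} Berry--Esseen bound $\big|\Prob{G>kH(\mathsf S)+\sqrt{kV(\mathsf S)}\,u}-Q(u)\big|\le C/\!\left(\sqrt k\,(1+|u|)^3\right)$, which integrates to $\bigo{1/\sqrt k}$ over $(u_0,\infty)$ (equivalently, the Wasserstein-$1$ distance between $(G-kH(\mathsf S))/\sqrt{kV(\mathsf S)}$ and a standard Gaussian is $\bigo{1/\sqrt k}$ under the third-moment hypothesis). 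Combining this with the elementary antiderivative $\int_a^\infty Q(u)\,du=\tfrac{1}{\sqrt{2\pi}}e^{-a^2/2}-aQ(a)$ evaluated at $a=u_0$, and with $u_0=\Qinv\epsilon+\bigo{1/\sqrt k}$, $Q(u_0)=\epsilon+\bigo{1/\sqrt k}$, gives $\int_\eta^\infty\Prob{G>s}\,ds=\sqrt{kV(\mathsf S)/(2\pi)}\,e^{-(\Qinv\epsilon)^2/2}-\epsilon\sqrt{kV(\mathsf S)}\,\Qinv\epsilon+\bigo 1$. Substituting this together with $\eta\,\epsilon=\epsilon\,kH(\mathsf S)+\epsilon\sqrt{kV(\mathsf S)}\,\Qinv\epsilon+\bigo 1$ into the identity above, the $\pm\epsilon\sqrt{kV(\mathsf S)}\,\Qinv\epsilon$ contributions cancel and the desired expansion for $\E{\langle G\rangle_\epsilon}$ drops out.

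The step I expect to be the main obstacle is the $\bigo 1$-accurate evaluation of $\int_\eta^\infty\Prob{G>s}\,ds$: a naive uniform Berry--Esseen bound accumulates an $\bigo{\sqrt k}$ error over the $\Theta(k)$-long interval, so the non-uniform refinement (equivalently, the $\bigo{1/\sqrt k}$ Wasserstein-$1$ rate in the CLT) is indispensable. A smaller but worth-noting subtlety is that $\eta$ is pinned down only up to $\bigo 1$; this is harmless because the $\bigo 1$ ambiguity enters $-\eta\,\epsilon$ and $-\int_\eta^\infty\Prob{G>s}\,ds$ with opposite signs---through the factor $Q(u_0)=\epsilon+\bigo{1/\sqrt k}$ multiplying $\eta-kH(\mathsf S)$ inside the antiderivative---so no finer localization of $\eta$ is needed.
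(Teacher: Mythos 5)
Your proposal is correct and follows essentially the same route as the paper: it sandwiches $L_{S^k}^\star(\epsilon)$ and $\mathbb H(S^k,\epsilon)$ between $\E{\langle\imath_{S^k}(S^k)\rangle_\epsilon}$ and $\E{\langle\imath_{S^k}(S^k)\rangle_\epsilon}-\log_2 k-\bigo{1}$ via Theorems~\ref{thm:Erokhinepsinfo} and~\ref{thm:epsinfo} with \eqref{eq:alon2}--\eqref{eq:wyner}, and then your expansion of $\E{\langle\imath_{S^k}(S^k)\rangle_\epsilon}$ is exactly the paper's Lemma~\ref{lemma:EXeps} (Appendix~\ref{appx:EXeps}): the same tail-integral identity, the same localization $\eta=kH(\mathsf S)+\sqrt{kV(\mathsf S)}\Qinv{\epsilon}+\bigo{1}$, and the same use of the non-uniform (Bikelis) Berry--Esse\'en bound to evaluate the tail integral to within $\bigo{1}$. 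The only differences (explicit treatment of $\epsilon\in\{0,1\}$ and $V(\mathsf S)=0$, and pinning down $\eta$ via uniform Berry--Esse\'en plus an atom bound rather than quoting the non-uniform bound directly) are cosmetic.
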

\apxonly{Note: for $H(S^k, \epsilon)$ lower bound can be improved to $-\epsilon \log_2 k + O(1)$.}
\begin{proof}
If the source is memoryless, the information in $S^k$ is a sum of i.i.d. random variables as indicated in \eqref{eq:infosum}, and	Theorem \ref{thm:Rlossless} follows by applying Lemma \ref{lemma:EXeps} below to
	the bounds in Theorem~\ref{thm:epsinfo}.
\end{proof}

\begin{lemma}
Let $X_1, X_2, \ldots$ be a sequence of independent random variables with a common distribution $P_{\mathsf X}$ and a finite third absolute moment. Then for any $0 \leq \epsilon \leq 1$ and $k\to\infty$ we have
\begin{small}
 \begin{equation}
\E{ \left \langle \sum_{i = 1}^k X_i \right \rangle_{\!\!\epsilon\,} } = (1 - \epsilon) k \E{\mathsf X} -   \sqrt{\frac{k \Var{\mathsf X}}{2 \pi} } e^{- \frac { (\Qinv{\epsilon})^2} 2 } + \bigo{1} \label{EXeps}
\end{equation}
\end{small}
\label{lemma:EXeps}
\end{lemma}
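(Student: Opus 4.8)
The plan is to reduce the statement to a Berry--Esseen estimate for the sum $T_k \eqdef \sum_{i=1}^k X_i$ together with an explicit evaluation of $\E{\langle \mathcal{N}(k\mu, k\sigma^2)\rangle_\epsilon}$, where $\mu = \E{\mathsf X}$ and $\sigma^2 = \Var{\mathsf X}$. First I would use the variational identity \eqref{eq:Xepsvar} in its ``top-$\epsilon$'' form: $\E{\langle T_k \rangle_\epsilon} = \E{T_k \,\1{T_k < \eta_k}} + \eta_k(1-\alpha_k)\Prob{T_k = \eta_k}$, where $\eta_k,\alpha_k$ are the cutoff parameters determined by \eqref{epscutoff}. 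Equivalently, writing $F_k$ for the cdf of $T_k$, one has $\E{\langle T_k\rangle_\epsilon} = \int_{-\infty}^{\eta_k} t\, dF_k(t)$ up to the boundary atom, i.e. $\E{T_k} - \E{T_k \1{T_k \ge \eta_k}}$ corrected at the atom; the cleanest route is to integrate by parts so that the quantity becomes an integral of $1-F_k$ over the retained range minus a tail contribution, and then substitute the Gaussian approximation for $F_k$.

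The key steps, in order, are: (i) fix $\eta_k$ as the (generalized) $\epsilon$-quantile from the top, so that $\Prob{T_k > \eta_k}\le \epsilon \le \Prob{T_k \ge \eta_k}$, and invoke Berry--Esseen, $\sup_t |F_k(t) - \Phi((t-k\mu)/(\sigma\sqrt k))| \le C B/(\sigma^3 \sqrt k)$ with $B$ the common third absolute moment, to conclude $\eta_k = k\mu + \sigma\sqrt k\, \Qinv{\epsilon} + O(1)$ (the $O(1)$ absorbs both the Berry--Esseen slack and lattice/atom effects, since near the quantile $\Phi$ has derivative bounded away from $0$ for $\epsilon\in(0,1)$; the boundary cases $\epsilon\in\{0,1\}$ are trivial and the near-boundary cases are handled by monotonicity and the crude bound $0\le \E{\langle T_k\rangle_\epsilon}\le \E{T_k}$). (ii) Write $\E{\langle T_k\rangle_\epsilon} = \E{T_k} - \E{(T_k - \eta_k)^+} - (\eta_k + O(1))\cdot(\text{mass of the discarded tail above }\eta_k)$, more precisely split $\E{T_k \1{\text{discarded}}} = \eta_k\cdot \epsilon + \E{(T_k-\eta_k)^+} + O(1)$, using that the discarded mass is exactly $\epsilon$. (iii) Estimate $\E{(T_k-\eta_k)^+} = \int_{\eta_k}^\infty (1-F_k(t))\,dt$ by replacing $1-F_k$ with the Gaussian tail: $\int_{\eta_k}^\infty \bar\Phi((t-k\mu)/(\sigma\sqrt k))\,dt = \sigma\sqrt k \int_{\Qinv{\epsilon}}^\infty \bar\Phi(u)\,du = \sigma\sqrt k\,[\varphi(\Qinv{\epsilon}) - \epsilon\,\Qinv{\epsilon}]$ where $\varphi(x) = \tfrac{1}{\sqrt{2\pi}}e^{-x^2/2}$; the Berry--Esseen error here contributes $\int |F_k - \Phi|$ over the effective range $O(\sqrt k)$, which is $O(1)$, and the displacement of $\eta_k$ by $O(1)$ also only perturbs this integral by $O(1)$ since the integrand is bounded. (iv) Assemble: $\E{\langle T_k\rangle_\epsilon} = k\mu - \eta_k\epsilon - \sigma\sqrt k[\varphi(\Qinv{\epsilon}) - \epsilon\Qinv{\epsilon}] + O(1) = k\mu - \epsilon(k\mu + \sigma\sqrt k\,\Qinv{\epsilon}) - \sigma\sqrt k\,\varphi(\Qinv{\epsilon}) + \epsilon\sigma\sqrt k\,\Qinv{\epsilon} + O(1) = (1-\epsilon)k\mu - \sigma\sqrt k\,\varphi(\Qinv{\epsilon}) + O(1)$, which is exactly \eqref{EXeps} once one notes $\sigma\varphi(\Qinv{\epsilon}) = \sqrt{\Var{\mathsf X}/(2\pi)}\,e^{-(\Qinv{\epsilon})^2/2}$.

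The main obstacle is controlling the atom and lattice effects at the cutoff point $\eta_k$: when $T_k$ is lattice-valued, $\Prob{T_k = \eta_k}$ can be of order $1/\sqrt k$, so $\eta_k$ is only pinned down to within one lattice spacing, and the randomization parameter $\alpha_k$ must be tracked. The saving grace is that the contribution of this atom to $\E{\langle T_k\rangle_\epsilon}$ is $\eta_k(1-\alpha_k)\Prob{T_k=\eta_k} = O(\sqrt k \cdot 1/\sqrt k) = O(1)$ once multiplied out against the $O(\eta_k)=O(k)$... wait — more carefully, the atom has mass $O(1/\sqrt k)$ and ``value'' $\eta_k = O(k)$, giving $O(\sqrt k)$, which is too big. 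The correct accounting is that the atom's value relative to the mean is what matters after the $\E{T_k}$ term is extracted: one works throughout with the centered variable $T_k - k\mu$, whose atom at $\eta_k - k\mu = O(\sqrt k)$ with mass $O(1/\sqrt k)$ contributes $O(1)$. So the clean execution is to center first, prove the lemma for $\E{\langle T_k - k\mu\rangle}$-type quantities using $\langle X\rangle_\epsilon$'s behavior under the shift is not linear — hence instead one keeps the identity $\E{\langle T_k\rangle_\epsilon} = \E{T_k} - \max_{\varepsilon}\E{\varepsilon(T_k)T_k}$ and analyzes $\E{\varepsilon^\star(T_k)(T_k - k\mu)} + k\mu\,\epsilon$, where the first term is the one amenable to Berry--Esseen with all errors $O(1)$. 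I expect the remaining routine work to be the careful but standard bookkeeping of these $O(1)$ terms and the degenerate cases $\sigma = 0$ or $\epsilon\in\{0,1\}$.
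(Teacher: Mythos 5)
Your overall strategy coincides with the paper's: localize the cutoff point as $\eta_k = k\mu + \sigma\sqrt{k}\,\Qinv{\epsilon} + \bigo{1}$, decompose $\E{\langle T_k\rangle_\epsilon}$ into the mean minus the (centered) discarded upper-tail contribution plus the boundary atom, and evaluate the tail integral $\int_{\eta_k}^\infty \Prob{T_k > t}\,dt$ by Gaussian approximation; your resolution of the lattice/atom issue by working with $T_k - k\mu$ is also exactly what the paper's decomposition does, and your final assembly reproduces the stated formula.

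The genuine gap is in step (iii). You invoke only the uniform (Kolmogorov-distance) Berry--Esseen bound $\sup_t|F_k(t)-\Phi(\cdot)|\le C/\sqrt{k}$ and assert that the error in replacing $1-F_k$ by the Gaussian tail "contributes $\int|F_k-\Phi|$ over the effective range $O(\sqrt{k})$, which is $O(1)$." But the integral runs over an infinite half-line, and truncating it to a window of length $c\sigma\sqrt{k}$ (fixed $c$) is not legitimate: the Gaussian tail integral beyond $\eta_k + c\sigma\sqrt{k}$ equals $\sigma\sqrt{k}\int_{\Qinv{\epsilon}+c}^\infty Q(u)\,du$, which is itself $\Theta(\sqrt{k})$, and the true tail $\E{(T_k-\eta_k-c\sigma\sqrt{k})^+}$ controlled only through a third-moment Markov bound on the sum is of order $\sqrt{k}/c^2$; letting $c$ grow to kill these tails inflates the uniform-BE error over the window, and no choice of $c$ yields $\bigo{1}$ by this route. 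This is precisely why the paper uses the \emph{non-uniform} (Bikelis) Berry--Esseen inequality, whose $1/(1+|t|^3)$ decay makes the approximation error integrable over the whole half-line and yields the $\bigo{1}$ remainder in one stroke (cf. the step where $\int_0^\infty \frac{dr}{1+(\Qinv{\epsilon}+r)^3}<\infty$ is used). An alternative fix you could have invoked is an $L^1$/Wasserstein CLT bound ($W_1$ of the standardized sum to the Gaussian is $\bigo{1/\sqrt{k}}$ under a third absolute moment), since $(x-\eta_k)^+$ is $1$-Lipschitz and $W_1$ scales by $\sigma\sqrt{k}$, giving the needed $\bigo{1}$; but as written, with only the uniform bound, the $\bigo{1}$ claim in step (iii) is not justified and the proof does not close.
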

\begin{proof}
Appendix \ref{appx:EXeps}.
\end{proof}

\begin{remark}
Applying \eqref{SVformula}  to \eqref{eq:lmb3}, for finite alphabet sources the lower bound on $L_{S^k}^\star(\epsilon)$ is improved to 
\begin{equation}
 \theta(k) \geq - \frac 1 2 \log_2 k + \bigo{1}
\end{equation}
For $\mathbb{H} (S^k, \epsilon)$, the lower bound is in fact $\theta(k) \geq -\epsilon \log_2 k + \bigo{1}$,  while for $\E{ \left \langle \imath_{S^k}(S^k) \right \rangle_\epsilon}$, $\theta(k) = \bigo{1}$. 
\end{remark}

\begin{remark}
 If the source alphabet is finite, we can sketch an alternative proof of Theorem \ref{thm:Rlossless} using the method of types. 
 By concavity and symmetry, it is easy to see that the optimal coupling
that achieves $\mathbb{H} ( S^k, \epsilon )$ satisfies the following property: the error profile
\begin{align} 
\epsilon(s^k) \eqdef \PP[Z^k \neq S^k|S^k = s^k] 
\end{align}
is constant on each $k$-type (see~\cite[Chapter 2]{csiszar2011information} for types). Denote the type of $s^k$ as $\hat P_{s^k}$
and its size as $M(s^k)$. We then have the following chain:
\begin{align} I(S^k; Z^k) &= I(S^k, \hat P_{S^k}; Z^k)\\
	&= I(S^k; Z^k | \hat P_{S^k}) + O(\log k)\label{eq:htt1}\\
	&\ge \EE\left[ (1-\epsilon(S^k)) \log M(S^k) \right] + O(\log k) \label{eq:htt2}
\end{align}			
where~\eqref{eq:htt1} follows since there are only polynomially many types and~\eqref{eq:htt2} follows
from~\eqref{eq:heps_equi}. Next,~\eqref{eq:htt2} is to be minimized over all $\epsilon(S^k)$ satisfying
$\EE[\epsilon(S^k)] \le \epsilon$. The solution (of this linear optimization) is easy:
$\epsilon(s^k)$ is 1 for all types with $M(s^k)$ exceeding a certain threshold, and 0 otherwise. In other words, we get
\begin{equation}\label{eq:htt3a}
	\mathbb{H} (S^k, \epsilon) = (1-\epsilon) \EE[\log M(S^k)|M(S^k) \le \gamma] + O(\log k)\,,
\end{equation}
where $\gamma$ is chosen so that $\PP[M(S^k) > \gamma]=\epsilon$. Using the relation between type size and its
entropy, we have
\begin{equation}\label{eq:htt3}
	\log	M(s^k) = k H(\hat P_{s^k}) + O(\log k) 
\end{equation}	
and from the central-limit theorem, cf.~\cite{yushkevich1953limit,strassen1962asymptotische}, we get
\begin{equation}\label{eq:htt4}
		H(\hat P_{S^k}) \stackrel{d}{=}  H(\mathsf S) + \sqrt{V(\mathsf S)\over k} U + O\left(\frac{\log k} k\right)\qquad U \sim\matn(0,1)\,.
\end{equation}
Thus, putting together~\eqref{eq:htt3a},~\eqref{eq:htt3},~\eqref{eq:htt4} and after some algebra~\eqref{Reps2order} follows.
\end{remark}

\subsection{Discussion}\label{sec:discussion}

Theorem \ref{thm:Rlossless} exhibits an unusual phenomenon in which the dispersion term improves the achievable average rate. As illustrated in Fig. \ref{fig:dispersion}, a nonzero error probability $\epsilon$ decreases the average achievable rate as the source outcomes falling into the shaded area are assigned length 0. The total reduction in average length is composed of the reduction in asymptotically achievable average length due to nonzero $\epsilon$ and the reduction due to finite blocklength. The asymptotic average length is reduced because the center of probabilistic mass Fig. \ref{fig:dispersion} shifts to the left when the $\epsilon$-tail of the distribution is chopped off. Moreover, for a fixed $\epsilon$ the wider the distribution the bigger is this shift, thus shorter blocklengths and larger dispersions help to achieve a lower average rate.

\begin{figure}[htbp]
\begin{center}
\epsfig{file=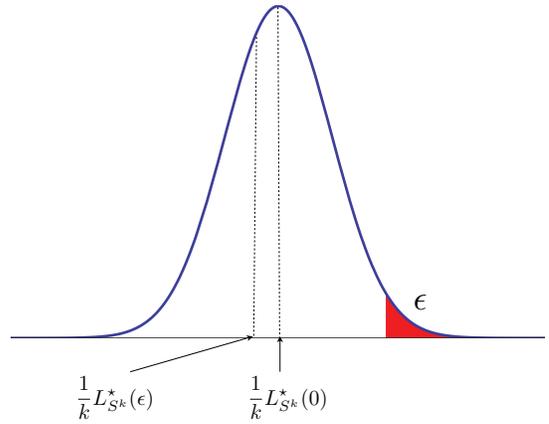,width=.8\linewidth} 
\end{center}    
\caption{The benefit of nonzero $\epsilon$ and dispersion. The bell-shaped curve depicts an idealized form of  the pmf of $\frac 1 k \ell \left( f^\star(S^k) \right)$.}
\label{fig:dispersion}
\end{figure}

For a source of biased coin flips, Fig.~\ref{fig:rate} depicts the exact average rate of the optimal code as well as the approximation in \eqref{Reps2order}. Both curves are monotonically increasing in $k$.

\begin{figure}[htbp]
    \epsfig{file=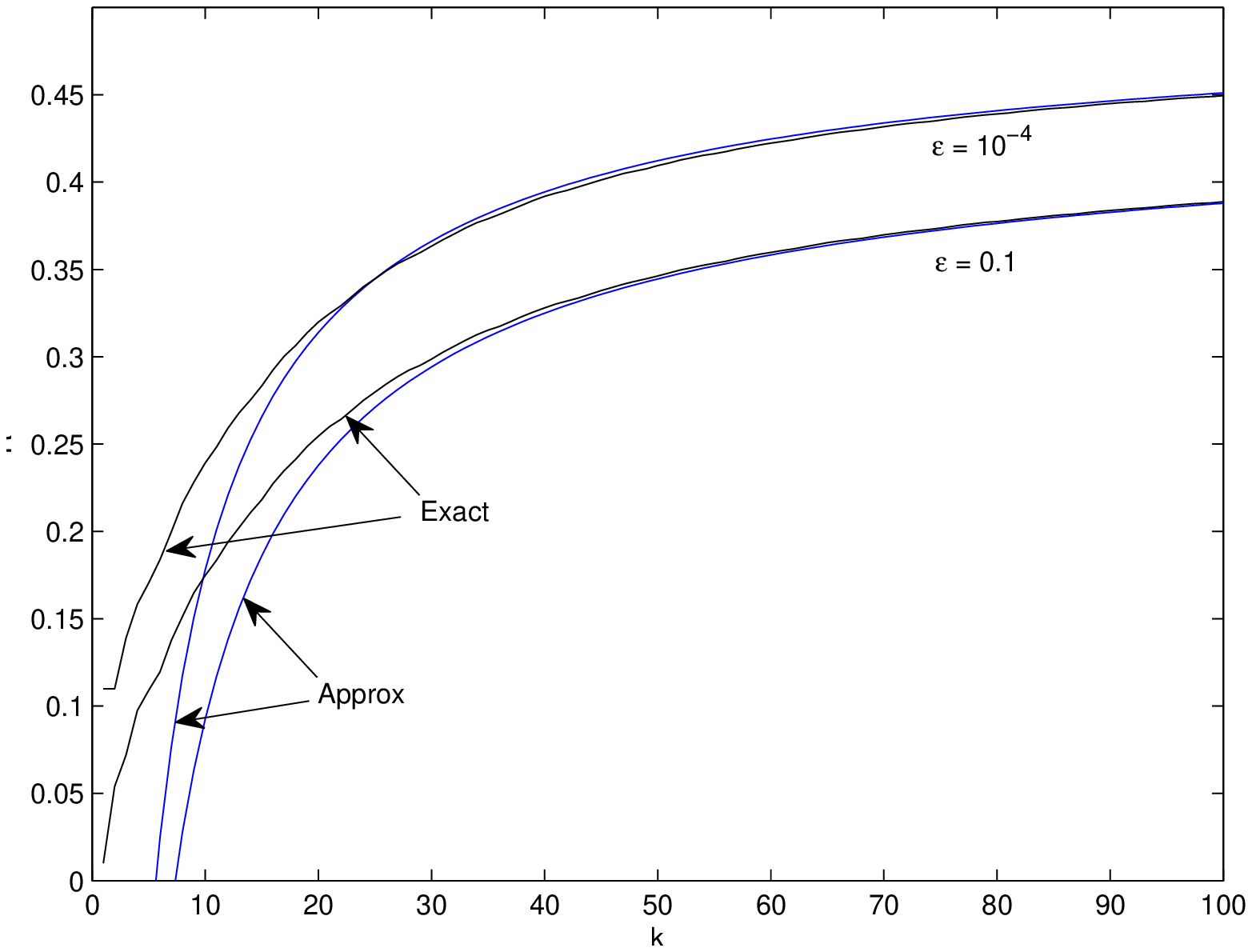,width=1\linewidth}
\caption{Average rate achievable for variable-rate  almost lossless encoding of a memoryless binary source with bias $p = 0.11$ and two values of $\epsilon$. 
For $\epsilon < 10^{-4}$, the resulting curves are almost indistinguishable from the $\epsilon = 10^{-4}$ curve.
}
\label{fig:rate}
\end{figure}

The dispersion term in \eqref{Reps2order} vanishes quickly with $\epsilon$. More precisely, as $\epsilon \to 0$, we have (Appendix \ref{appx:disp0})
\begin{equation}
\frac 1 {\sqrt{2 \pi}}  e^{- \frac {\left( \Qinv{\epsilon} \right)^2} 2} 
= 
 \epsilon \sqrt {2 \log_2 \frac 1 \epsilon}  + \smallo{\epsilon} \label{disp0}
\end{equation} 

Therefore, a refined analysis of higher order terms in the expansion \eqref{Reps2order} is desirable in order to obtain an approximation which is accurate even at short blocklengths. Inspired by \cite{szpankowski2008antiredundancy}, in Fig.~\ref{fig:rate} (devoted to independent coin flips with bias $p$) we adopt the following value for the remainder in \eqref{Reps2order}: 
\begin{align}
\theta(k) = (1 - \epsilon) \bigg (& \frac{\log_2 k}{2}  - \frac 1 2 \log_2 (4 e^3 \pi) + \frac p {1 - 2 p} 
\\
&+
 \log_2 \frac 1 {1 - 2 p} + \frac 1 {2 (1 - 2 p) } \log_2 \frac {1 - p} p  \bigg ),  \notag
\end{align}
which proves to yield a remarkably good approximation, accurate for blocklengths as short as $20$.

\apxonly{

TODO: Can we prove
\begin{equation}
\frac 1 {k + 1} L_{S^{k + 1}}(\epsilon) \geq  \frac 1  k L_{S^{k}}(\epsilon)
\end{equation}

TODO: prove the following refinement of \eqref{Reps2order}: 
\begin{equation}
 L_{S^k}(\epsilon) = (1 - \epsilon) k H(\mathsf S) -   \sqrt{\frac{k V(\mathsf S)}{2 \pi} } e^{- \frac { (\Qinv{\epsilon})^2} 2 }  - (1 - \epsilon) \frac{\log_2 k}{2} + \bigo{1} 
\end{equation}
This would generalize the result in \cite{szpankowski2011minimum} on $R(k, 0)$.

TODO: prove the following refinement of \eqref{Reps2order} for the binary source with bias $p$: 
\begin{align}
 L_{S^k}(\epsilon) &= (1 - \epsilon) k H(\mathsf S) -   \sqrt{\frac{k V(\mathsf S)}{2 \pi} } e^{- \frac { (\Qinv{\epsilon})^2} 2 }  - (1 - \epsilon) \frac{\log_2 k}{2} \\
&+ (1 -\epsilon) \left( - \frac 1 2 \log_2 (4 e^3 \pi) + \frac p {1 - 2 p} + \log_2 \frac 1 {1 - 2 p} + \frac 1 {2 (1 - 2 p) } \log_2 \frac {1 - p} p  \right) +  \smallo{1} \label{Reps2orderBinary} 
\end{align}
This would generalize the result in \cite{szpankowski2008antiredundancy}, applicable for $\epsilon = 0$.


}

\begin{figure}[htbp]
\psfrag{k}{\tiny{$k$}}
    \epsfig{file=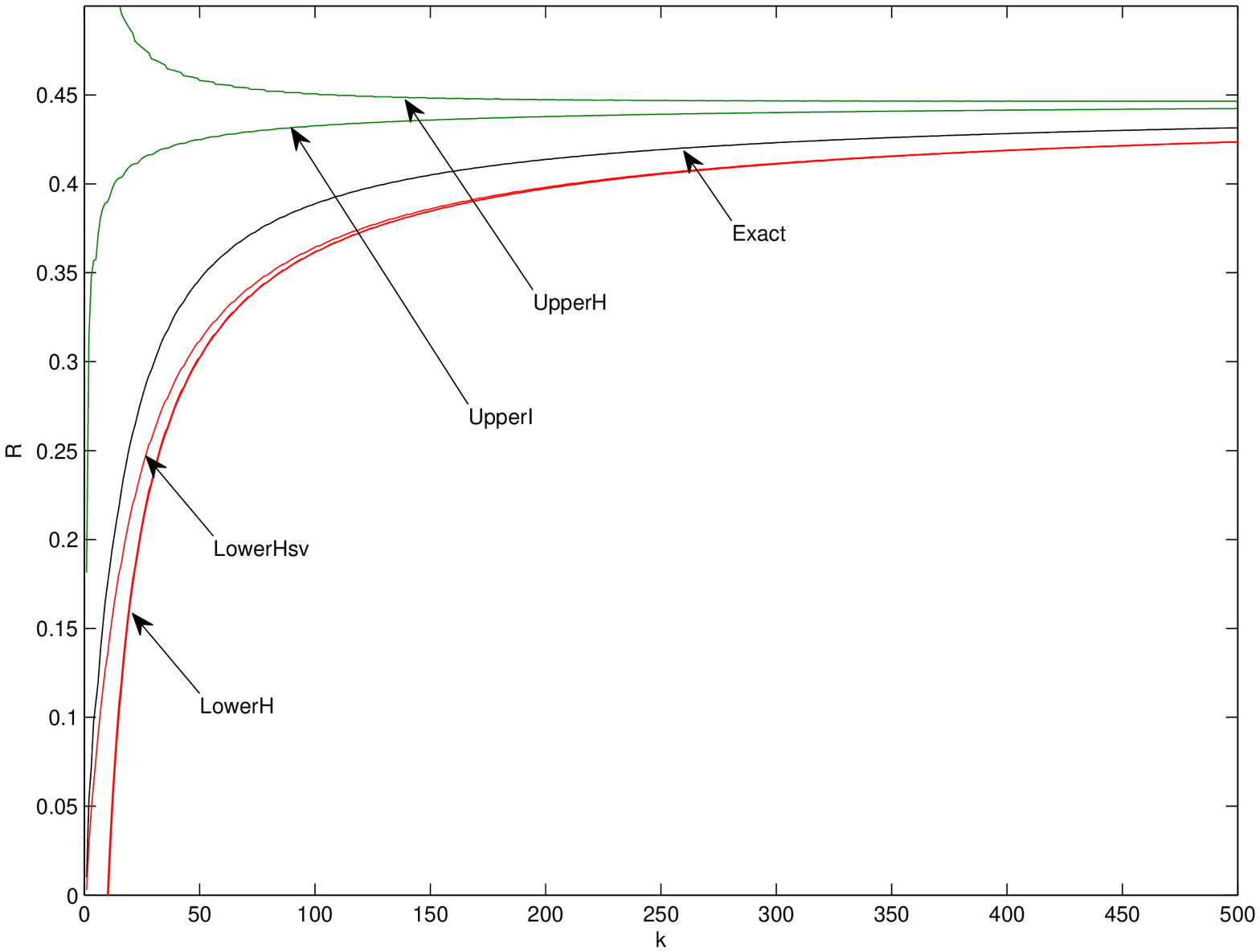,width=1\linewidth}
\caption{Bounds to the average rate achievable for variable-rate  almost lossless encoding of a memoryless binary source with bias $p = 0.11$ and $\epsilon = 0.1$. The lower bound in \eqref{eq:itt0} is virtually indistinguishable from a weakening of \eqref{eq:lmb3} using \eqref{eq:alon2}. }
\label{fig:rate}
\end{figure}

\section{Lossy variable-length compression}
\label{sec:lossy}
\subsection{The setup}

In the basic setup of lossy compression, we are given a source alphabet $\mathcal M$,  a reproduction alphabet $\widehat {\mathcal M}$, a {\it distortion measure} $\mathsf d \colon \mathcal M \times \widehat{ \mathcal M} \mapsto [0, + \infty]$ to assess the fidelity of reproduction, and a probability distribution of the object $S$ to be compressed. 

\begin{defn}[$(L, d, \epsilon)$ code]
 A variable-length $(L, d, \epsilon)$ lossy code for $\{S, \mathsf d\}$ is a pair of random transformations $P_{W|S} \colon \mathcal M \mapsto \left\{0, 1\right\}^\star$ and $P_{Z|W} \colon \left\{0, 1\right\}^\star \mapsto \widehat {\mathcal M}$ such that
 \begin{align}
\Prob{\mathsf d \left( S, Z \right) > d} &\leq \epsilon\\
\E{ \ell( W ) } &\leq L
\end{align}
\end{defn}

The goal of this section is to characterize the minimum achievable average length compatible with the given tolerable error $\epsilon$: 
\begin{equation}
L_S^\star(d, \epsilon) \triangleq \left\{ \min ~ L \colon \exists \text{ an $(L, d, \epsilon)$ code} \right\} 
\end{equation}

Section \ref{sec:optlossy} discusses the properties of the optimal code. Section~\ref{sec:rd} reviews some background facts from rate-distortion theory. Section \ref{sec:1shotlossy} presents single-shot results, and Section \ref{sec:2orderlossy} focuses on the asymptotics.

\subsection{Optimal code}
\label{sec:optlossy}
Unlike the lossless setup in Section \ref{sec:lossless}, the optimal encoding and decoding mappings do not admit, in general, explicit descriptions. We can however point out several  properties of the optimal code. 

We first focus on the case $\epsilon = 0$. The optimal $(d, 0)$ code satisfies the following properties. 
\begin{enumerate}
\item The optimal encoder $\mathsf f^\star$ and decoder $\mathsf g^\star$ are deterministic mappings. \label{opt:1}
\item The output $W^\star = \mathsf f^\star(S)$ of the optimal encoder satisfies $P_{W^\star}(\varnothing) \geq P_{W^\star}(0) \geq P_{W^\star}(1) \geq P_{W^\star}(00) \geq \ldots$  \label{opt:2}
\item 
For each $w \in \{0, 1\}^\star$
\begin{align}
{\mathsf f^{\star}}^{ -1}(w) = B_{\mathsf g^\star(w)} \backslash \cup_{v \prec w} B_{\mathsf g^\star(v)}
\end{align}
where $\prec$ is lexicographic ordering, and
\begin{equation}
B_z \triangleq \left\{ s \colon \mathsf d(s, z) \leq d\right\} 
\end{equation}
\label{opt:3}
\end{enumerate}

Let $z_1, z_2, \ldots$ be a $d$-covering  of $\mathcal M$. First, we will show that the foregoing claims hold for decoders whose image is constrained to the given $d$-covering $z_1, z_2, \ldots$. Then, we will conclude that since the claims hold for all $d$-coverings, they hold for the one that results in the minimum average length as well.  

To show \ref{opt:1}), let $(P_{W|S}, P_{Z|W})$ be a $(d, 0)$ code.  The optimal encoder is deterministic because if there exist $s \in \mathcal M$ and $w \prec v \in \{0, 1\}^\star$ such that $P_{W|S = s}(w) > 0$ and $P_{W|S = s}(v) > 0$ we may decrease the average length by setting $P_{W|S = s}(w) = 1$. The optimal decoder is deterministic because if for some $w \in \{0, 1\}^\star$ there exist $z^\prime, z^{\prime\prime} \in \{z_1, z_2, \ldots\}$ such that $P_{Z|W = w}(z^\prime) > 0$ and  $P_{Z|W = w}(z^{\prime\prime}) > 0$, then nothing changes by setting $P_{Z|W = w}(z^\prime) = 1$. 

To show \ref{opt:2}), observe that if there exist $w \prec v \in \{0, 1\}^\star$ such that $P_{W}(w) < P_W(v)$, then the average length is shortened by swapping $w$ and $v$. 

To show \ref{opt:3}), notice that the average length decreases as $P_W(\varnothing)$ increases, and the latter is maximized by setting $\mathsf f^{-1}(\varnothing) = B_{\mathsf g(\varnothing)}$. Further, $P_W(0)$ is maximized without affecting $P_W(\varnothing)$ by setting $\mathsf f^{-1}(0) = B_{\mathsf g(0)} \backslash B_{\mathsf g(\varnothing)}$ and so forth.

We now consider the case $\epsilon > 0$. The optimal $(d, \epsilon)$ code satisfies the following properties. 
\begin{enumerate}
\item The optimal decoder $\mathsf g^\star$ is deterministic, and the optimal encoder $P_{W^\star|S}$ satisfies $P_{W^\star|S = s}(w) = 1 - P_{W^\star|S = s}(\varnothing)$ for all $s \in \mathcal M$ and all $w \in \{0, 1\}^\star \backslash \varnothing$. 
  \label{opt:1a}
\item The output of the optimal encoder satisfies $P_{W^\star}(\varnothing) \geq P_{W^\star}(0) \geq P_{W^\star}(1) \geq P_{W^\star}(00) \geq \ldots$  \label{opt:2a}
\item There exist $\eta \in \mathbb R^+$ such that $\Prob{\ell(W^\star) > \eta} = 0$ and $0 \leq \alpha < 1$ such that
for each $w \in \{0, 1\}^\star \backslash \varnothing$
\begin{align}
&~P_{W^\star|S = s}(w) \\
&~= 
\begin{cases}
 1, ~ s \in B_{\mathsf g^\star(w)} \backslash \cup_{v \prec w} B_{\mathsf g^\star(v)} ~\&~ \ell(w) < \eta\\
 1 - \alpha, ~ s \in B_{\mathsf g^\star(w)} \backslash \cup_{v \prec w} B_{\mathsf g^\star(v)} ~\&~ \ell(w) = \eta \notag\\
\end{cases}
\end{align}
and 
\begin{align}
P_{W^\star|S = s}(\varnothing) = 
\begin{cases}
 1, ~ s \notin  \cup_{w} B_{\mathsf g^\star(w)} \\
 \alpha, ~ s \in  \cup_{w} B_{\mathsf g^\star(w)} ~\&~ \ell(w) = \eta\\
\end{cases}
\end{align}
\label{opt:3a}
\end{enumerate}
Property \ref{opt:3a}) implies in particular that $\ell(\mathsf f^\star(s)) = 0$ as long as $\mathsf d(s, \mathsf g^\star(\mathsf f^\star(s))) > d$. 

We say that $\mathcal F \subseteq \widehat {\mathcal M}$ is a $(d, \epsilon)$-covering  of $\mathcal M$ if $\Prob{ \min_{z \in \mathcal F} \mathsf d(S, z) > d} \leq \epsilon$. Note that a finite $(d, \epsilon)$-covering always exists as long as a $d$-covering exists \cite{posner1967epsilonentropy}: indeed, given a $d$-covering $z_1, z_2, \ldots$, let $M$ satisfy $\sum_{m > M} \Prob{S \in B_{z_m} \backslash \cup_{i < m } B_{z_i}} \leq \epsilon$ and just drop all $z_m \colon m > \eta$ to obtain a finite $(d, \epsilon)$-covering. 
Let $z_1, z_2, \ldots, z_M$ be a $(d, \epsilon)$-covering  of $\mathcal M$. Observing that an infinite  $(d, \epsilon)$-covering $z_1, z_2, \ldots$ can only result in a longer average length, we will first show that the foregoing claims hold for decoders whose image is constrained to a given $(d, \epsilon)$-covering $z_1, z_2, \ldots, z_M$. Then, we will conclude that since the claims hold for all finite $(d, \epsilon)$-coverings, they hold for the one that results in the minimum average length as well.

To show \ref{opt:1a}), notice that for a given encoder $P_{W|S}$, the optimal decoder is always
	deterministic. Indeed, if for some $w \in \{0, 1\}^\star$ there exist $z^\prime, z^{\prime\prime} \in \{z_1, z_2, \ldots, z_M\}$ such that $P_{Z|W = w}(z^\prime) > 0$,  $P_{Z|W = w}(z^{\prime\prime}) > 0$ and $P_{S|W = w}( B_{z^\prime} ) \geq P_{S|W = w}( B_{z^{\prime\prime}}  )$ then the excess distortion can only be reduced by setting $P_{Z|W = w}(z^\prime) = 1$, without affecting the average length. Denote that deterministic decoder by $\mathsf g$. As for the encoder, consider $w \in \{0,1\}^\star \setminus \varnothing$ and source realization $s$
	with $P_{W|S = s}(w)>0$. If $\mathsf d( s, \mathsf g(w)) > d$, the average length can be decreased, without increasing
	the excess distortion probability, by setting $P_{W|S = s}(w) = 0$ and adjusting $P_{W|S = s}(\varnothing) = 1$ accordingly. 
	This
	argument implies that the optimal encoder satisfies $P_{S| W = w}(B_{\mathsf g(w)}) = 1$ for each $w \neq \varnothing$. Now, if there exist $s$ and $w \prec v \in \{0, 1\}^\star \backslash \varnothing$ such that $P_{W|S = s}(w) > 0$ and $P_{W|S = s}(v) > 0$, we may decrease the average length with no impact on the probability of excess distortion by setting $P_{W|S = s}(w) = 1$.  
	
To show \ref{opt:2a}), notice that if there exist $w \prec v \in \{0, 1\}^\star \backslash \varnothing$ such that $P_{W}(w) < P_W(v)$, then the average length is shortened by swapping $w$ and $v$. If there exist $w \in \{0, 1\}^\star \backslash \varnothing$ with $P_W(w) > P_W(\varnothing)$ then the average length is shortened by swapping $w$ and $\varnothing$ and setting $P_{W|S = s}(w) = 0$ while adjusting $P_{W|S = s}(\varnothing) = 1$ accordingly for each $s \notin B_{g(w)}$. 

To show \ref{opt:3}),  we argue as in the case $\epsilon = 0$ that setting  
\begin{align}
 P_{W|S = s}(w) &= 1, ~ s \in B_{\mathsf g(w)} \backslash \cup_{v \prec w} B_{\mathsf g(v)}\\
  P_{W|S = s}(\varnothing) &= 1, ~ s \notin  \cup_{w} B_{\mathsf g(w)}
\end{align}
yields the minimum average length among all $(d, \epsilon^\prime)$ codes with codebook $z_1, z_2, \ldots$ satisfying \ref{opt:1a}) and \ref{opt:2a}) where $\epsilon^\prime \triangleq\Prob{ \min_{m} \mathsf d(S, z_m) > d}$. If $\epsilon^\prime = \epsilon$, there is nothing else to prove. If $\epsilon^\prime < \epsilon$, let $\eta \in \mathbb R^+$ and $0 < \alpha < 1$ solve
\begin{equation}
 \Prob{\ell( W) > \eta } + \alpha  \Prob{\ell( W) = \eta } = \epsilon - \epsilon^\prime
\end{equation}
 and observe that dropping all $w \colon \ell( w) > \eta$ reduces the average length while keeping the excess distortion probability below $\epsilon$. Now, letting $P_{W|S = s}(w) = 1 - \alpha$ for each $s \in B_{\mathsf g(w)} \backslash \cup_{v \prec w} B_{\mathsf g(v)}$ and each $w \colon \ell( w) = \eta$ and adjusting 
$P_{W|S = s}(\varnothing)$ accordingly further reduces the average length while making the excess distortion probability exactly $\epsilon$.

Property \ref{opt:3}) implies that randomization is not essential as almost the same average length can be achieved with deterministic encoding and decoding operations. Precisely,  denoting by $L^\star_{S, \texttt{det}}(d, \epsilon)$ the minimum average length achievable with deterministic codes, we have
\begin{align}
L^\star_S(d, \epsilon) &\leq L^\star_{S, \texttt{det}}(d, \epsilon)\\
&\leq  L^\star_{S}(d, \epsilon) + \phi(\min\{\epsilon, e^{-1}\}) \label{eq:Lstarlossydeterm}  
\end{align}
where \eqref{eq:Lstarlossydeterm} is obtained in the same way as \eqref{eq:Lstardeterm}, and $0 \leq \phi(\cdot) \leq 0.531$ is defined in \eqref{eq:phi}.

\subsection{A bit of rate-distortion theory}
\label{sec:rd}

The minimal mutual information function
\begin{equation}
 \mathbb R_S(d) \triangleq  \inf_{\substack{P_{Z|S}\colon\\ \E{\mathsf d(S, Z)} \leq d}} I(S; Z) \label{RR(d)}
\end{equation}
characterizes the minimum asymptotically achievable rate in both fixed-length compression under the average or excess distortion constraint and variable-length lossy compression under the almost sure distortion constraint \cite{shannon1959coding,kieffer1991strong}. 

We assume throughout that the following basic assumptions are met.
\begin{enumerate}[(A)]
 \item   $\mathbb R_S(d)$ is finite for some $d$, i.e. $ d_{\min} < \infty$, where
\begin{equation}
 d_{\min} \triangleq \inf \left\{ d\colon ~ \mathbb R_S(d) < \infty \right\} \label{dmin}
\end{equation}
\label{item:a}
\item  The distortion measure is such that there exists a finite set $E \subset \widehat {\mathcal M}$ such that
\begin{equation}
 \E{ \min_{z \in E} \mathsf d(S, z)} < \infty \label{dcsiszar}
\end{equation}
 \label{item:b}
\end{enumerate}

The following characterization of $\mathbb R_S(d)$ due to Csisz\'ar \cite{csiszar1974extremum} will be instrumental. 
\begin{thm} [{Characterization of $\mathbb R_S(d)$  \cite[Theorem 2.3]{csiszar1974extremum}}]
 \label{thm:csiszarg}
For each $d > d_{\min}$ it holds that
\begin{equation}
 \mathbb R_S(d) = \max_{J(s), ~ \lambda}\left\{ \E{ J(S)} - \lambda d\right\} \label{RR(d)csiszar}
\end{equation}
where the maximization is over $J(s)\geq 0$ and $\lambda\geq 0$ satisfying the constraint
\begin{equation}
\E{ \exp\left\{ J(S) - \lambda \mathsf d (S, z)\right\}} \leq 1 ~ \forall z \in \widehat{\mathcal M} \label{csiszarg}
\end{equation}
\end{thm}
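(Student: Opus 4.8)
The statement is a strong-duality theorem for the convex program in \eqref{RR(d)} that defines $\mathbb R_S(d)$, so the plan is to prove the two inequalities separately and, for the harder one, exhibit an optimal dual pair. The easy direction is ``$\mathbb R_S(d)\ge\max\{\,\E{J(S)}-\lambda d\,\}$'': fix any admissible $J(\cdot)\ge 0$, $\lambda\ge 0$ obeying \eqref{csiszarg} and any $P_{Z|S}$ with $\E{\mathsf d(S,Z)}\le d$, and apply the Donsker--Varadhan variational bound $D(P\,\|\,Q)\ge\mathbb E_P[g]-\log\mathbb E_Q[e^{g}]$ with $P=P_{S|Z=z}$, $Q=P_S$ and $g(s)=J(s)-\lambda\,\mathsf d(s,z)$; since \eqref{csiszarg} forces $\log\E{e^{\,J(S)-\lambda\,\mathsf d(S,z)}}\le 0$ for every $z$, averaging over $Z$ gives
\[
I(S;Z)\;=\;\mathbb E_Z\bigl[D(P_{S|Z}\,\|\,P_S)\bigr]\;\ge\;\E{J(S)}-\lambda\,\E{\mathsf d(S,Z)}\;\ge\;\E{J(S)}-\lambda d ,
\]
using $\lambda\ge 0$ and the distortion constraint. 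Taking $\inf$ over admissible $P_{Z|S}$ and then $\sup$ over $(J,\lambda)$ yields this direction.

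For the reverse inequality I would exhibit an optimal pair $(J^\star,\lambda^\star)$ in three steps. Step~1: since $d\mapsto\mathbb R_S(d)$ is convex and nonincreasing and $d>d_{\min}$, it has a supporting line at $d$ of some finite slope $-\lambda^\star$ with $\lambda^\star\ge 0$, which (by the usual Lagrangian argument) rewrites the constrained problem as $\mathbb R_S(d)=\inf_{P_{Z|S}}\bigl[\,I(S;Z)+\lambda^\star\,\E{\mathsf d(S,Z)}\,\bigr]-\lambda^\star d$. Step~2: evaluate the inner infimum by writing $I(S;Z)=\inf_{Q_Z}D(P_{Z|S}\,\|\,Q_Z\mid P_S)$ and performing, for each fixed $Q_Z$, the explicit Gibbs minimization of $D(P_{Z|S}\,\|\,Q_Z\mid P_S)+\lambda^\star\E{\mathsf d(S,Z)}$ over $P_{Z|S}$, whose minimizer is the tilted kernel $P_{Z|S=s}(z)\propto Q_Z(z)\,e^{-\lambda^\star\mathsf d(s,z)}$; this collapses the nested infima to $\inf_{Q_Z}\E{-\log\int e^{-\lambda^\star\mathsf d(S,z)}\,dQ_Z(z)}$. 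Step~3: let $Q_Z^\star$ attain this last infimum and put $J^\star(s)\triangleq-\log\int e^{-\lambda^\star\mathsf d(s,z)}\,dQ_Z^\star(z)$, which is nonnegative because $\mathsf d\ge 0$. By construction the inner infimum equals $\E{J^\star(S)}$, so $\mathbb R_S(d)=\E{J^\star(S)}-\lambda^\star d$; and perturbing $Q_Z^\star$ toward a point mass $\delta_{z_0}$ and using that the right derivative of the convex objective at the minimizer is nonnegative gives precisely $\E{e^{\,J^\star(S)-\lambda^\star\mathsf d(S,z_0)}}\le 1$ for every $z_0$, i.e.\ $(J^\star,\lambda^\star)$ satisfies \eqref{csiszarg}. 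Being admissible and attaining the value $\mathbb R_S(d)$, it shows the maximum equals $\mathbb R_S(d)$.

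The hard part will be the two existence facts hidden in the reverse direction. Finiteness of the supporting slope $\lambda^\star$ is exactly where the hypothesis $d>d_{\min}$ (together with assumption~(A)) is needed --- at $d=d_{\min}$ the slope degenerates to $-\infty$. More delicate is attainment of the infimum over output distributions by a $Q_Z^\star$ for which the point-mass perturbation is legitimate: on a general, e.g.\ continuous, reproduction alphabet this calls for a compactness/tightness argument for the candidate $Q_Z$'s, and assumption~(B) --- a finite $E$ with $\E{\min_{z\in E}\mathsf d(S,z)}<\infty$ --- is what keeps the partition functions $\int e^{-\lambda^\star\mathsf d(s,z)}\,dQ_Z(z)$ and the value $\E{J^\star(S)}$ finite and forces $Q_Z^\star$ onto a usable set. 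If one prefers to avoid attainment, the same conclusion follows by running the perturbation estimate along a minimizing sequence of $Q_Z$'s and passing to the limit, or by applying Sion's minimax theorem to $\inf_{P_{Z|S}}\sup_{\lambda\ge 0}$ after a suitable compactification; the remaining ingredients --- Donsker--Varadhan, the Gibbs computation, and differentiating under the expectation --- are routine.
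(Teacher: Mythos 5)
The paper does not prove this theorem at all: it is imported verbatim from Csisz\'ar \cite[Theorem 2.3]{csiszar1974extremum}, so there is no in-paper argument to compare against, and your proposal has to stand on its own. On its own terms it is a sound reconstruction of the standard duality proof, and indeed close in spirit to Csisz\'ar's: the weak direction via Donsker--Varadhan together with \eqref{csiszarg} is complete and correct; the strong direction via a supporting line of the convex function $d\mapsto\mathbb R_S(d)$ at $d>d_{\min}$, the golden formula $I(S;Z)=\inf_{Q_Z}D(P_{Z|S}\|Q_Z|P_S)$, and the per-letter Gibbs minimization collapsing the Lagrangian to $\inf_{Q_Z}\E{-\log\int e^{-\lambda^\star\mathsf d(S,z)}\,dQ_Z(z)}$ is the right skeleton, and your candidate $J^\star$ is exactly the object in \eqref{JE}; nonnegativity of $J^\star$ and the first-order (point-mass perturbation) derivation of \eqref{csiszarg} are also correct as stated.

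The one genuine soft spot is the existence step you flag in Step~3. The theorem asserts a \emph{maximum} over $(J,\lambda)$, and your construction of an optimal pair requires the infimum over output distributions $Q_Z$ to be attained; on a general reproduction alphabet this needs an actual tightness/compactness argument (this is where \eqref{dcsiszar} and the structure of $\mathsf d$ enter, and it is a nontrivial part of Csisz\'ar's paper, not a remark). Your fallback of ``running the perturbation estimate along a minimizing sequence'' is not as innocuous as stated: at an approximate minimizer $Q_n$ the directional-derivative bound only gives $\E{e^{J_n(S)-\lambda^\star\mathsf d(S,z_0)}}\le 1+o(1)$ with $J_n$ changing with $n$, so to conclude you must either extract a convergent subsequence of $Q_n$ (back to tightness) or shift $J_n$ down by a vanishing constant, which yields the identity with $\sup$ rather than $\max$. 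So as written the sketch proves the sup-characterization cleanly but leaves attainment --- the part the paper actually uses when it takes a maximizing pair $(J_S,\lambda_S)$ --- to be supplied, e.g.\ by following Csisz\'ar's compactness argument.
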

Let $(J_S(s), \lambda_S)$ attain the maximum in the right side of \eqref{RR(d)csiszar}. If there exists a transition probability kernel $P_{Z^\star | S}$ that actually achieves the infimum in the right side of \eqref{RR(d)}, then \cite{csiszar1974extremum}
\begin{align}
J_S(s) &= \imath_{S; Z^\star} (s; z) + \lambda_S \mathsf d(s, z) \label{Jinfo}\\
&= -\log_2 \E{\exp \left( - \lambda_S \mathsf d(s, Z^\star)\right) } \label{JE}
\end{align}
where \eqref{Jinfo} holds for $P_{Z^\star}$-a.e. $z$, the expectation in \eqref{JE} is with respect to the unconditional distribution of $Z^\star$, and the usual information density is denoted by
\begin{equation}
 \imath_{S; Z} (s; z) \triangleq \log_2 \frac{d P_{Z|S = s}}{dP_Z} (z) 
\end{equation}
Note from \eqref{JE} that by the concavity of logarithm
\begin{equation}
0 \leq J_S(s) \leq \E{\mathsf d(s, Z^\star)} \label{Jstarbounds}
\end{equation}
The random variable that plays the key role in characterizing the nonasymptotic fundamental limit of lossy data compression is the $\mathsf d$-tilted information in $s \in \mathcal M$ \cite{kostina2011fixed}:
\begin{equation}
\jmath_{S}(s, d) \triangleq J_S(s) - \lambda_S d \label{dtilted}
\end{equation}
It follows from \eqref{RR(d)csiszar} that
\begin{equation}
 \mathbb R_{S}(d) = \E{\jmath_S(S, d)}  \label{RdEj}
\end{equation}
Much like information in $s \in \mathcal M$ which quantifies the number of bits necessary to represent $s$ losslessly, $\mathsf d$-tilted information in $s$ quantifies the number of bits necessary to represent $s$ within distortion $d$, in a sense that goes beyond average as in \eqref{RdEj} \cite{kontoyiannis2000pointwise,kostina2011fixed}. 
Particularizing \eqref{csiszarg}, we observe that the $\mathsf d$-tilted information satisfies
 \begin{equation}
\E{\exp(\jmath_S(S, d) + \lambda_S d - \lambda_S \mathsf d(S, z)) } \leq 1 \label{csiszarg1}
\end{equation}

Using Markov's inequality and \eqref{JE}, it is easy to see that the $\mathsf d$-tilted information is linked to the probability that $Z^\star$ falls within distortion $d$ from $s \in \mathcal M$:
\begin{equation}
 \jmath_{S}(s, d)  \leq \log_2 \frac 1 {P_{Z^\star}(B_d(s))} \label{dtiltedupper}
\end{equation}
 where
\begin{equation}
B_d(s) \triangleq  \left\{ z \in \widehat {\mathcal M} \colon \mathsf d(s, z) \leq d \right\} \label{dball}
\end{equation}
Moreover, under regularity conditions the reverse inequality in \eqref{dtiltedupper} can be closely approached~\cite[Proposition 3]{kontoyiannis2000pointwise}. 


\subsection{Nonasymptotic bounds}
\label{sec:1shotlossy}
We begin with a simple generalization of basic bounds \eqref{eq:alon2} and \eqref{eq:wyner} to an arbitrary distortion measure and nonzero $\epsilon$, in which the role of entropy is assumed by the $\left(\epsilon, \delta\right)$-entropy of the source $S$, defined as \cite{posner1967epsilonentropy}: 
\begin{equation}
H_{\epsilon, \delta}(S) \triangleq \min_{\substack{ \mathsf f \colon \mathcal M \mapsto \widehat {\mathcal M} \colon \\ \Prob{\mathsf d(S, \mathsf f (S) ) > \epsilon} \leq \delta}} H(\mathsf f (S)).  \label{eq:Hepsdelta}
\end{equation}

\begin{thm}[Bounds to $L_{S, \texttt{det}}^\star(d, \epsilon)$]
The minimal average length achievable with deterministic codes under an excess-distortion constraint satisfies
\begin{align}\label{eq:alon2lossyeps}
		H_{d, \epsilon}(S) - \log_2 (H_{d, \epsilon}(S)+1) - \log_2 e &\le L^\star_{S, \texttt{det}}(d, \epsilon) \\
		&\le H_{d, \epsilon}(S) \label{eq:wynerlossyeps}
\end{align}
\label{thm:alonlossy}
\end{thm}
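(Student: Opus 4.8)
The plan is to follow the template of the proof of Theorem~\ref{thm:NOA}, with the $(d,\epsilon)$-entropy $H_{d,\epsilon}(S)$ taking over the role that Erokhin's function $\mathbb H(S,\epsilon)$ played there. Since we now restrict to deterministic codes, no randomization or Fano terms enter, and the clean two-sided estimate \eqref{eq:alon2lossyeps}--\eqref{eq:wynerlossyeps} drops out of the zero-error bounds \eqref{eq:alon2} and \eqref{eq:wyner}. (If $\mathcal M$ admitted no finite $(d,\epsilon)$-covering, both $H_{d,\epsilon}(S)$ and $L^\star_{S,\texttt{det}}(d,\epsilon)$ would be $+\infty$ and there is nothing to prove; by the construction in Section~\ref{sec:optlossy} such a covering exists under \eqref{item:a}--\eqref{item:b}, so we may assume $H_{d,\epsilon}(S)<\infty$ below.)

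For the achievability bound \eqref{eq:wynerlossyeps}, fix a deterministic $\mathsf f^\circ\colon\mathcal M\to\widehat{\mathcal M}$ attaining the minimum in \eqref{eq:Hepsdelta} and put $V\eqdef\mathsf f^\circ(S)$, so that $\Prob{\mathsf d(S,V)>d}\le\epsilon$ and $H(V)=H_{d,\epsilon}(S)$; in particular $V$ is discrete, since its (discrete) entropy is finite. Now feed $V$ into the optimal zero-error variable-length compressor $\mathsf f^\star_V$ of Section~\ref{sec:optimal} and take as decoder the inverse mapping $\mathsf g$ (defined arbitrarily off the image of $\mathsf f^\star_V$, a null event). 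The composite deterministic code $\mathsf f\eqdef\mathsf f^\star_V\circ\mathsf f^\circ$, $\mathsf g$ reproduces $V$ exactly, hence inherits the excess-distortion guarantee, and has average length $\E{\ell(\mathsf f(S))}=L^\star_V(0)\le H(V)=H_{d,\epsilon}(S)$ by \eqref{eq:wyner}. This gives \eqref{eq:wynerlossyeps}.

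For the converse \eqref{eq:alon2lossyeps}, let $(\mathsf f,\mathsf g)$ be an arbitrary deterministic $(L,d,\epsilon)$ code and set $W\eqdef\mathsf f(S)$, $\hat S\eqdef\mathsf g(W)$. Since $\mathsf g\circ\mathsf f\colon\mathcal M\to\widehat{\mathcal M}$ is deterministic and $\Prob{\mathsf d(S,\hat S)>d}\le\epsilon$, the definition \eqref{eq:Hepsdelta} yields $H_{d,\epsilon}(S)\le H(\hat S)\le H(W)$, the last step because $\hat S$ is a function of $W$. The identity mapping on $W$ is a zero-error variable-length code, so \eqref{eq:alon2} gives $H(W)-\log_2(H(W)+1)-\log_2 e\le L^\star_W(0)\le\E{\ell(W)}\le L$. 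It remains to propagate $H_{d,\epsilon}(S)\le H(W)$ through $x\mapsto x-\log_2(x+1)-\log_2 e$: as observed after \eqref{dishrun}, this function is increasing for $x>\log_2\frac{e}{2}$ and negative for $x<3.66$, so if $H_{d,\epsilon}(S)\ge\log_2\frac{e}{2}$ monotonicity gives $H_{d,\epsilon}(S)-\log_2(H_{d,\epsilon}(S)+1)-\log_2 e\le H(W)-\log_2(H(W)+1)-\log_2 e\le L$, while if $H_{d,\epsilon}(S)<\log_2\frac{e}{2}$ the left-hand side is already negative. Taking the infimum over achievable $L$ yields \eqref{eq:alon2lossyeps}.

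The one genuinely delicate point is this last chaining step — $x-\log_2(x+1)-\log_2 e$ is not monotone on all of $[0,\infty)$, which is why the small- and large-$H_{d,\epsilon}(S)$ regimes have to be separated; this is the same subtlety already dispatched in the proof of \eqref{eq:itt0}. Everything else is a verbatim transcription of the lossless optimal-code argument, now applied to the reproduction random variable $\mathsf f^\circ(S)$.
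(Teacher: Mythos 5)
Your proposal is correct and follows essentially the same route as the paper: the converse applies the zero-error bound \eqref{eq:alon2} to the encoder output $W$ together with $H_{d,\epsilon}(S)\le H(W)$ (handling the non-monotone range of $x-\log_2(x+1)-\log_2 e$ exactly as in the proof of \eqref{eq:itt0}), and the achievability composes a quantizer attaining $H_{d,\epsilon}(S)$ with the optimal zero-error compressor and invokes Wyner's bound \eqref{eq:wyner}. The paper's proof is merely a terser statement of these same two steps.
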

\begin{proof}
The converse bound  in \eqref{eq:alon2lossyeps} follows by applying \eqref{eq:alon2} and minimizing over all possible output entropies. The achievability bound in \eqref{eq:wynerlossyeps} is implied by Wyner's bound \eqref{eq:wyner} recalling (Section \ref{sec:optlossy}) that the codewords of the optimal code are ordered in decreasing probabilities. 
\end{proof}

Note that $L^\star(d, \epsilon)$ is also bounded in terms of $H_{d, \epsilon}(S)$, in view of Theorem \ref{thm:alonlossy} and \eqref{eq:Lstarlossydeterm}.

Particularizing Theorem \ref{thm:alonlossy} to $\epsilon = 0$ and using  $L_S^\star(d, 0) = L_{S, \texttt{det}}^\star(d, 0)$ (as shown in Section \ref{sec:optlossy}), we see that the minimum average length of $d$-semifaithful codes is bounded by
\begin{align}\label{eq:alon2lossy}
		H_d(S) - \log_2 (H_d(S)+1) - \log_2 e &\le L_S^\star(d, 0) \\
		&\le H_d(S)\,, \label{eq:wynerlossy}
\end{align}
where  
$H_\epsilon(S)$ is the $\epsilon$-entropy of the source $S$ \cite{posner1967epsilonentropy}: 
\begin{equation}
H_\epsilon(S) \triangleq \min_{\substack{ \mathsf f \colon \mathcal M \mapsto \widehat {\mathcal M} \colon \\ \mathsf d(S, \mathsf f (S) ) \leq \epsilon \text{ a.s.}}} H(\mathsf f (S)),  \label{eq:Heps}
\end{equation}
which is bounded as follows: 
\begin{align}
 {\mathbb R}_{S}(d, 0) 
\leq&~ H_d(S) \\
\leq&~ \mathbb R_S(d,0) + \log_2 \mathbb ( \mathbb R_S(d, 0) + 1) + C,  \label{Alossy0}
\end{align}
where $C$ is a universal constant, and 
\eqref{Alossy0} holds whenever $\mathsf d$ is a metric by \cite[Theorem 2]{posner1971epsilon}.

Theorem \ref{thm:alonlossy} applies to the almost-lossless setting of Section \ref{sec:lossless}, in which case the $\left(\epsilon, \delta\right)$-entropy particularizes to $\epsilon = 0$ and Hamming distortion as
\begin{equation}
H_{0, \delta}(S) = \min_{\substack{ \mathsf f \colon \mathcal M \mapsto \widehat {\mathcal M} \colon \\ \Prob{S \neq \mathsf f (S) } \leq \delta}} H(\mathsf f (S)).  \label{eq:Hepsdeltalossless}
\end{equation}

The $\left(\epsilon, \delta\right)$-entropy is difficult to compute and analyze directly. We proceed to give bounds on $L^\star_S(d, \epsilon)$ and $H_{d, \epsilon}(S)$ that will essentially show that all the functions $L^\star_S(d, \epsilon)$, $H_{d, \epsilon}(S)$, $ {\mathbb R}_{S}(d, \epsilon)$ (defined in \eqref{RR(eps)}), are within $\bigo{\log_2 \mathbb R_{S}(d)}$ bits from the easy-to-analyze function $\EE[ \left \langle \jmath_S(S, d) \right \rangle_\epsilon]$. 
We will show that the same is true for the function
\begin{equation}
  {\mathbb R}_{S}^+(d, \epsilon) \triangleq \inf_{P_Z}\E{ \left \langle - \log_2 P_{Z}(B_d(S)) \right \rangle_\epsilon },  \label{RR(eps)+}
\end{equation}
where $B_d(s)$ is the distortion $d$-ball around $s$ (formally defined in \eqref{dball}) and the infimum is over all distributions on $\widehat {\mathcal M}$,

The next result provides nonasymptotic bounds to the minimum achievable average length when randomized encoding and decoding operations are allowed. 
\begin{thm}[Bounds to $L_S^\star(d, \epsilon)$]
The minimal average length achievable under an excess-distortion constraint satisfies
\begin{align}
 {\mathbb R}_{S}(d, \epsilon) - \log_2 \left( {\mathbb R}_{S}(d, \epsilon)   + 1 \right) - \log_2 e
\leq&~ L_S^\star(d, \epsilon) \label{Clossy}\\
\leq&~   {\mathbb R}_{S}^+(d, \epsilon)  \label{Alossy}
\end{align}
 where $\mathbb R_{S}(d, \epsilon)$ is the minimal information quantity defined in \eqref{RR(eps)}, and ${\mathbb R}_{S}^+(d, \epsilon)$ is defined in \eqref{RR(eps)+}. 
\label{thm:Elenlossy}
\end{thm}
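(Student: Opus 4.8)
The plan is to mirror the structure of the lossless bounds (Theorems~\ref{thm:epsinfo} and~\ref{thm:NOA}), using the $(d,\epsilon)$-covering picture of the optimal code from Section~\ref{sec:optlossy} in place of the explicit compressor $\mathsf f_S^\star$.

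For the converse \eqref{Clossy}, I would argue exactly as in \eqref{sw}--\eqref{dishrun}. Given any $(L,d,\epsilon)$ code $S \to W \to Z$ with $\Prob{\mathsf d(S,Z) > d} \le \epsilon$, the Markov chain and the definition \eqref{RR(eps)} of $\mathbb R_S(d,\epsilon)$ yield $H(W) \ge I(S;W) \ge I(S;Z) \ge \mathbb R_S(d,\epsilon)$. Now $W \mapsto W \mapsto W$ is a zero-error lossless code for $W$, so by \eqref{eq:alon2}, $\E{\ell(W)} \ge L_W^\star(0) \ge H(W) - \log_2(H(W)+1) - \log_2 e$. As in the proof of Theorem~\ref{thm:NOA}, the right-hand side is monotonically increasing once $H(W) > \log_2\frac e2$, so it is safe to substitute the bound $H(W) \ge \mathbb R_S(d,\epsilon)$, giving \eqref{Clossy}. (One should note the caveat, as in Theorem~\ref{thm:NOA}, that if $\mathbb R_S(d,\epsilon)$ is below that threshold the inequality is vacuous but still formally correct since the left side is then negative.)

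For the achievability bound \eqref{Alossy}, fix an arbitrary distribution $P_Z$ on $\widehat{\mathcal M}$ and consider the deterministic scheme suggested by property \ref{opt:3a}) of Section~\ref{sec:optlossy}: sort the balls $B_d(\cdot)$ of reproduction points (equivalently, realize a $d$-covering by i.i.d.\ draws / a greedy construction induced by $P_Z$) and assign each source outcome $s$ to the lexicographically-first string $w$ whose decoded reproduction $z$ satisfies $\mathsf d(s,z) \le d$, while mapping the lightest outcomes — those contributing total mass $\epsilon$ in the ordering by covering-depth — to $\varnothing$. The length assigned to a retained $s$ is at most $\lceil \log_2(1/P_Z(B_d(s))) \rceil$-type quantity, which after the $\epsilon$-cutoff and taking expectation is $\E{\langle -\log_2 P_Z(B_d(S))\rangle_\epsilon}$ up to the kind of $O(\log)$ overhead appearing in \eqref{eq:itt1}; optimizing over $P_Z$ then produces $\mathbb R_S^+(d,\epsilon)$. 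The cleaner route, and the one I would actually write, avoids the overhead: use the variational identity \eqref{eq:Xepsvar} together with the fact, established in Section~\ref{sec:optlossy}, that for a fixed covering the optimal encoder achieves average length equal to the expected covering-depth, then observe that this depth for a $P_Z$-induced covering is stochastically dominated (in expectation, after the optimal $\epsilon$-fraction is discarded) by $\langle -\log_2 P_Z(B_d(S))\rangle_\epsilon$; since $P_Z$ is arbitrary, take the infimum to get \eqref{Alossy}.

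The main obstacle is making the achievability argument tight enough to hit $\mathbb R_S^+(d,\epsilon)$ exactly with no $\log$-overhead. The issue is that a random $d$-covering drawn from $P_Z$ has covering-depth roughly geometric with the right mean only in expectation, and controlling the $\epsilon$-cutoff of this random depth — rather than of the idealized variable $-\log_2 P_Z(B_d(S))$ — requires either a careful layered/expurgated covering construction (as in the classical achievability proofs of $\epsilon$-entropy, e.g.\ \cite{posner1967epsilonentropy}) or an appeal to the exact structural optimality in Section~\ref{sec:optlossy} to bypass the randomness altogether. I expect the latter is what the authors intend: one shows directly that the optimal code's average length is at most $\E{\langle -\log_2 P_Z(B_d(S))\rangle_\epsilon}$ for every $P_Z$ by exhibiting, for each $P_Z$, a specific (possibly countable) $(d,\epsilon')$-covering whose depth function is dominated by $-\log_2 P_Z(B_d(\cdot))$, then invoking \eqref{eq:Xepsvar} to handle the cutoff.
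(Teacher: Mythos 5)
Your converse argument is exactly the paper's: for any $(L,d,\epsilon)$ code $S\to W\to Z$, the induced $P_{Z|S}$ is feasible in \eqref{RR(eps)}, so $H(W)\ge I(S;W)\ge I(S;Z)\ge \mathbb R_S(d,\epsilon)$, and then $\E{\ell(W)}\ge L_W^\star(0)\ge H(W)-\log_2(H(W)+1)-\log_2 e$ via \eqref{eq:alon2}, with the same monotonicity caveat as in Theorem~\ref{thm:NOA}. That half is fine.

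The achievability half has a genuine gap, and you in fact flagged it yourself: both of your routes fall short of \eqref{Alossy}. The first (greedy covering with lengths $\lceil \log_2 (1/P_Z(B_d(s)))\rceil$ plus overhead) only yields $\mathbb R_S^+(d,\epsilon)$ up to additive logarithmic terms, which is a strictly weaker statement. The second, "cleaner" route rests on exhibiting, for each $P_Z$, a fixed codebook whose covering-depth is pointwise dominated by $-\log_2 P_Z(B_d(\cdot))$; no such construction is given, and in general none exists — pointwise one can only guarantee depth $-\log_2 P_Z(B_d(s))$ plus logarithmic slack, which is precisely the overhead you were trying to avoid. The missing idea is the paper's random-coding-plus-Jensen argument, which needs neither pointwise domination nor any expurgation: draw $Z_1,Z_2,\ldots$ i.i.d.\ from $P_Z$, let $W$ be the index of the first codeword within distortion $d$ of $S$ whenever $\left\langle -\log_2 P_Z(B_d(S))\right\rangle_\epsilon>0$ (note the discard decision is a function of $S$ alone, not of the realized depth, so the excess-distortion probability is at most $\epsilon$ by construction), and $W=1$ otherwise. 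Since the encoded length is $\lfloor \log_2 W\rfloor\le\log_2 W$ and, conditioned on $S=s$, $W$ is geometric with mean $1/P_Z(B_d(s))$ when averaged over codebooks, Jensen's inequality gives $\E{\log_2 W\,|\,S}\le -\log_2 P_Z(B_d(S))$; averaging over $S$ and the codebook yields exactly $\E{\left\langle -\log_2 P_Z(B_d(S))\right\rangle_\epsilon}$, and one then fixes a single codebook achieving this average and takes the infimum over $P_Z$. The concavity of $\log_2$ absorbs all the fluctuation of the random depth around its conditional mean, which is how the paper reaches $\mathbb R_S^+(d,\epsilon)$ with no residual $\log$ terms.
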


\begin{proof}
The converse bound in \eqref{Clossy} is shown in the same way as \eqref{eq:alon2lossyeps}.  To show the achievability bound in \eqref{Alossy}, consider the $(d, \epsilon)$ code that, given an infinite list of codewords $z_1, z_2, \ldots$, outputs the first $d$-close match to $s$ as long as $s$ is not too atypical. Specifically,  the encoder outputs the lexicographic binary encoding (including the empty string) of 
\begin{equation}
W \triangleq 
\begin{cases}
\min \left\{ m \colon \mathsf d(S, z_m) \leq d \right\} &  \left \langle - \log_2 P_{Z}(B_d(S)) \right \rangle_\epsilon > 0 \\
 1 & \text{otherwise}
\end{cases}
\end{equation}

The encoded length averaged over both the source and all codebooks with codewords $Z_1, Z_2, \ldots$ drawn i.i.d. from $P_{Z}$ is upper bounded by   \begin{align}
 &~
 \E{\lfloor \log_2 W \rfloor } \notag\\
 \leq&~ \E{\log_2 W ~ \1{ \left \langle - \log_2 P_{Z}(B_d(S)) \right \rangle_\epsilon > 0 }} \label{-Alossya0}\\
 =&~ \E{\1{ \left \langle - \log_2 P_{Z}(B_d(S)) \right \rangle_\epsilon > 0 } \E{\log_2 W |S }} \\
 \leq&~ \E{ \1{ \left \langle - \log_2 P_{Z}(B_d(S)) \right \rangle_\epsilon > 0 } \log_2 \E{W | S}} \label{-Alossya} \\
 =&~ \E{ \left \langle - \log_2 P_{Z}(B_d(S)) \right \rangle_\epsilon }  \label{-Alossyb}
\end{align}
 where
\begin{itemize}
\item \eqref{-Alossya} is by Jensen's inequality;
\item  \eqref{-Alossyb} holds because conditioned on $S = s$ and averaged over codebooks, $W$ has geometric distribution with success probability $P_Z(B_d(s))$. 
\end{itemize}
It follows that there is at least one codebook that yields the encoded length not exceeding the expectation in \eqref{-Alossyb}. 

\end{proof}

\begin{remark}
Both  \eqref{eq:alon2lossyeps} and \eqref{Clossy} can be strengthened as in Remark \ref{rem:alon}.   
\end{remark}

\apxonly{ TODO: Show the counterpart of \eqref{Alossy0Heps} for $\epsilon > 0$. 
}



\begin{thm}[Bounds to $ {\mathbb R}_{S}(d, \epsilon)$ and to $H_{d, \epsilon}(S)$ ]
For all $d > d_{\min}$ we have
\begin{align}
&
\!\!\!\E{ \left \langle \jmath_{S}(S, d) \right \rangle_\epsilon}    - \log_2 \left( \mathbb R_S(d) - \mathbb R_S^\prime(d) d + 1 \right)  -  \log_2 e - h(\epsilon)  
\notag \\
\leq &~ {\mathbb R}_{S} (d, \epsilon) \label{minIlossylower}\\
\leq&~ {\mathbb R}_{S}^+(d, \epsilon)   \label{minIlossyupper}
\end{align}
and for all $d \geq d_{\min}$ we have
\begin{align}
&~{\mathbb R}_{S}^+(d, \epsilon) - \phi(\max\left\{1 - \epsilon, e^{-1} \right\})
\notag\\
 \leq&~ H_{d, \epsilon}(S)  \label{eq:Hdepslower} \\
\leq&~
{\mathbb R}_{S}^+(d, \epsilon) + \log_2 \left({\mathbb R}_{S}^+(d, \epsilon) + 1 + \phi\left(\min\left\{\epsilon, e^{-1} \right\} \right) \right) 
\notag\\
+&~ 1 + \phi\left(\min\left\{\epsilon, e^{-1} \right\} \right) \label{eq:Hdepsupper}
\end{align}
where $0 \leq \phi(\cdot) \leq e^{-1} \log_2 e $ is defined in \eqref{eq:phi}. 
\label{thm:minIlossy}
\end{thm}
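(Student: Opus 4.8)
The plan is to establish four separate inequalities, each a lossy analogue of something already proved in the lossless case. The chain of bounds should mirror Theorems \ref{thm:Erokhinepsinfo}, \ref{thm:epsinfo}, and \ref{thm:NOA}, with $\jmath_S(S,d)$ playing the role of $\imath_S(S)$, $\mathbb R_S(d,\epsilon)$ playing the role of $\mathbb H(S,\epsilon)$, and $\mathbb R_S^+(d,\epsilon)$ (the quantity built from a single output distribution $P_Z$) replacing $\EE[\langle \imath_S(S)\rangle_\epsilon]$. The cleanest ordering is: (i) prove $\mathbb R_S(d,\epsilon)\le \mathbb R_S^+(d,\epsilon)$, which gives \eqref{minIlossyupper}; (ii) prove the lower bound \eqref{minIlossylower} on $\mathbb R_S(d,\epsilon)$; (iii) prove the lower bound \eqref{eq:Hdepslower} relating $\mathbb R_S^+(d,\epsilon)$ and $H_{d,\epsilon}(S)$; (iv) prove the upper bound \eqref{eq:Hdepsupper}.

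For (i), fix any $P_Z$ attaining (or nearly attaining) the infimum in \eqref{RR(eps)+}, and construct a suboptimal kernel $P_{Z'|S}$ for the minimization \eqref{RR(eps)} as follows: set $Z'=Z$ (drawn from $P_{Z|S=s}$, the $d$-ball indicator normalized appropriately, or more simply the element of a covering achieving the $d$-match) when $\langle -\log_2 P_Z(B_d(S))\rangle_\epsilon>0$, and let $Z'$ be independent of $S$ with distribution $P_Z$ otherwise — exactly paralleling the construction in the proof of Theorem \ref{thm:Erokhinepsinfo}. Then $\Prob{\mathsf d(S,Z')>d}\le\epsilon$ by the $\epsilon$-cutoff bookkeeping, and $I(S;Z')\le D(P_{Z'|S}\|P_Z|P_S)$, which evaluates to $\E{\langle -\log_2 P_Z(B_d(S))\rangle_\epsilon}$ up to the fact that on the "kept" realizations we need $P_{Z'|S=s}$ supported inside $B_d(s)$; using the kernel that places all mass on the point of $B_d(s)$ of maximal $P_Z$-measure (or integrating against $P_Z$ restricted to $B_d(s)$) makes the relative entropy exactly $-\log_2 P_Z(B_d(s))$ on those $s$, and $0$ on the rest. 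Taking the infimum over $P_Z$ gives \eqref{minIlossyupper}.

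For (ii), the lower bound on $\mathbb R_S(d,\epsilon)$, I would follow the template of \eqref{sw}--\eqref{dishrun} combined with the entropy-splitting argument in the achievability part of Theorem \ref{thm:NOA}. Given any $P_{Z|S}$ with $\Prob{\mathsf d(S,Z)>d}\le\epsilon$, write $I(S;Z)\ge I(S;Z)-[\text{terms}]$ by introducing the tilted information: the key identity is $\jmath_S(S,d)=\imath_{S;Z^\star}(S;Z^\star)+\lambda_S(\mathsf d(S,Z^\star)-d)$ together with the Csiszár inequality \eqref{csiszarg1}. One shows $I(S;Z)\ge \E{\jmath_S(S,d)} - \E{\varepsilon(\cdot)\jmath_S(S,d)}-(\text{correction})$ where the correction absorbs the $\lambda_S d$ term via the excess-distortion constraint and an $h(\epsilon)$ term from Fano-type splitting on the excess-distortion indicator; the bound \eqref{Jstarbounds}, i.e. $0\le J_S(s)\le \E{\mathsf d(s,Z^\star)}$, controls the magnitude of $\jmath_S$, and $\mathbb R_S(d)-\mathbb R_S'(d)d = \E{J_S(S)}$ appears naturally as the bound on $\E{J_S(S)}$ needed in a $\log_2(1+\cdot)$ entropy-maximization step (geometric distribution maximizes entropy under a mean constraint, as in \eqref{eq:itv4}). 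Then minimize over $P_{Z|S}$ and use $\E{\langle \jmath_S(S,d)\rangle_\epsilon}\le \E{\jmath_S(S,d)}-\max_{\varepsilon}\E{\varepsilon(S)\jmath_S(S,d)}$ from \eqref{eq:Xepsvar}.

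For (iii) and (iv), the link between $\mathbb R_S^+(d,\epsilon)$ and $H_{d,\epsilon}(S)$: the upper bound \eqref{eq:Hdepsupper} comes from taking the random-coding construction in the proof of Theorem \ref{thm:Elenlossy} (lexicographic encoding of the first $d$-match from a codebook drawn i.i.d.\ $\sim P_Z$), observing there is a good codebook, and then applying Wyner's bound \eqref{eq:wyner}/\eqref{eq:alon2} to the entropy of that encoder's output — the $\phi(\min\{\epsilon,e^{-1}\})$ term is the now-familiar cost of derandomizing the boundary (cf.\ \eqref{eq:Lstardeterm}), and the $\log_2(\cdot+1)+1$ is Wyner slack. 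The lower bound \eqref{eq:Hdepslower} goes the other way: any deterministic $\mathsf f$ with $\Prob{\mathsf d(S,\mathsf f(S))>d}\le\epsilon$ induces an output distribution $P_{\mathsf f(S)}$, and $H(\mathsf f(S))=\E{-\log_2 P_{\mathsf f(S)}(\mathsf f(S))}\ge \E{\langle -\log_2 P_{\mathsf f(S)}(B_d(S))\rangle_\epsilon}$ minus a $\phi(\max\{1-\epsilon,e^{-1}\})$ correction — here one must be careful that $-\log_2 P_{\mathsf f(S)}(\mathsf f(S))\ge -\log_2 P_{\mathsf f(S)}(B_d(S))$ fails pointwise (a codeword need not be the $P$-heaviest in its own ball), so instead one argues that replacing $\mathsf f$ by the map sending $s$ to the heaviest codeword in $B_d(s)$ can only decrease output entropy, and the residual discrepancy from the $\epsilon$-cutoff versus the actual error set is at most one $\phi$ term by the monotonicity-of-$\phi$ argument used after \eqref{eq:m0}. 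Finally infimize over $P_Z$ on the right.

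The main obstacle I anticipate is (ii): correctly assembling the lower bound on $\mathbb R_S(d,\epsilon)$ requires juggling the tilted-information identity \eqref{Jinfo}, the Csiszár constraint \eqref{csiszarg1}, the excess-distortion constraint, and an entropy-splitting step all at once, and getting the constant to come out as exactly $\log_2(\mathbb R_S(d)-\mathbb R_S'(d)d+1)+\log_2 e + h(\epsilon)$ — rather than something weaker — needs the sharp $\E{J_S(S)}=\mathbb R_S(d)-\lambda_S d$ bound and a clean application of the geometric maximum-entropy argument conditioned on the excess-distortion event. The case $d=d_{\min}$ in \eqref{eq:Hdepslower}--\eqref{eq:Hdepsupper} (where the tilted information / Csiszár machinery may degenerate) will need a short separate remark, which is why those two bounds are stated for $d\ge d_{\min}$ while \eqref{minIlossylower} requires $d>d_{\min}$.
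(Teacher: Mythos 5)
Your parts (i), (iii) and (iv) are essentially the paper's own arguments: the kernel that spreads $P_{\bar Z}$ restricted to $B_d(s)$ (not the heaviest-point version, which overshoots) gives \eqref{minIlossyupper} exactly as in \eqref{PZ|Slossy}; the random-coding map with the cutoff plus the entropy-versus-mean comparison through $G=\lfloor\log_2 W\rfloor$ gives \eqref{eq:Hdepsupper}; and conditioning on the no-excess-distortion event gives \eqref{eq:Hdepslower}. One small correction to (iii): the pointwise inequality you worry about is not needed in the problematic regime. The paper only uses $P_{\mathsf f(S)}(\{\mathsf f(s)\})\le P_{\mathsf f(S)}(B_d(s))$ on the event $\varepsilon(S)=0$, where $\mathsf f(s)\in B_d(s)$, so it is just monotonicity of measure under set inclusion; your proposed fix (re-mapping $s$ to the heaviest codeword in $B_d(s)$ ``can only decrease output entropy'') is both unnecessary and unsubstantiated, since changing the map changes the output law and its entropy can move either way.

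The genuine gap is in (ii), and it is exactly the step you flag as the obstacle. Transplanting the lossless template \eqref{eq:itv1}--\eqref{eq:itv6} fails at the Fano step: in \eqref{152mosher} the bound $H(S|Z,\Lambda=i)\le i\varepsilon(i)+h(\varepsilon(i))$ rests on the cardinality bound $2^i$ for $S$ given the codeword length, and there is no such cardinality bound when ``correct'' only means $\mathsf d(S,Z)\le d$. Likewise, splitting on the excess-distortion indicator $E$ leaves you needing a lower bound on $I(S;Z\mid E=0)$, but the conditioned source $P_{S|E=0}$ is a different source whose tilted information is not $\jmath_S(\cdot,d)$, so \eqref{Jinfo} and \eqref{csiszarg1} do not apply to it directly. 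The paper's proof supplies the missing device: condition on $F=\lfloor\jmath_S(S,d)\rfloor+1$ (so that $I(S;Z)\ge I(S;Z|F)$ and only $H(F)$ must be paid, which is where the geometric maximum-entropy bound with mean $\E{J_S(S)}=\mathbb R_S(d)-\mathbb R_S'(d)d$ enters), and prove a separate single-shot lemma (Lemma \ref{lemma:Rdlower1}) lower-bounding $\mathbb R_{S|\mathcal F}(d,\epsilon)$ on each level set. That lemma is not a Fano argument: it applies Csisz\'ar's dual characterization \eqref{RR(d)csiszar} to the indicator (excess-distortion) distortion with the particular choice \eqref{lambdachoice}--\eqref{Jchoice}, yielding $\mathbb R_{S|\mathcal F}(d,\epsilon)\ge d(\epsilon\,\|\,1-p)\ge(1-\epsilon)\log_2\frac1p-h(\epsilon)$, and then controls $p=\sup_z\Prob{\mathsf d(S,z)\le d\mid S\in\mathcal F}$ by $2^{-r}/\Prob{S\in\mathcal F}$ via Markov's inequality and \eqref{csiszarg1}. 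Summing over the levels, using concavity of $h$, and invoking the variational identity \eqref{eq:lmb2lossy} then produces $\E{\langle\jmath_S(S,d)\rangle_\epsilon}-H(F)-h(\epsilon)$ with the stated logarithmic constant. Without this layering-plus-dual argument (or an equivalent substitute), your sketch for \eqref{minIlossylower} does not assemble into a proof.
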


\begin{proof}
Appendix \ref{appx:minIlossy}. 
\end{proof}

Trivially, ${\mathbb R}_{S} (d, \epsilon) \leq H_{d, \epsilon}(S)$. 

\begin{remark}
In the almost-lossless setting (Hamming distortion and $d = 0$),  the following bounds hold (Appendix \ref{appx:H0eps}). 
\begin{align}
&~\EE[\left \langle \imath_S(S) \right \rangle_\epsilon ] - \phi\left( \max\left\{1 - \epsilon, e^{-1} \right\}\right) \notag\\
\leq&~ H_{0, \epsilon}(S)  \label{eq:H0epslower} \\
\leq&~
\EE[\left \langle \imath_S(S) \right \rangle_\epsilon ]  +  \phi \left(\min\left\{\epsilon, e^{-1} \right\}\right) \label{eq:H0epsupper}
\end{align}
\end{remark}

\begin{remark}
 Particularizing \eqref{eq:Hdepslower} to the case $\epsilon = 0$, we recover the lower bound on $\epsilon$-entropy in \cite[Lemma 9]{posner1967epsilonentropy}: 
\begin{equation}
\inf_{P_Z}\E{ - \log_2 P_{Z}(B_d(S))  } \leq H_d(S) \label{eq:Hepslower}\\
\end{equation}
\end{remark}

\begin{remark}
 As follows from Lemma \ref{lemma:Rdlower1} in Appendix \ref{appx:minIlossy}, in the special case where
 \begin{equation}
 \jmath_S(S, d)  = \mathbb R_{S}(d) \text{ a.s.} \label{jsame}
\end{equation}
which in particular includes the equiprobable source under a permutation distortion measure (e.g. symbol error rate)\cite{dembo2001critical}, the lower bound in \eqref{minIlossylower} can be tightened as
\begin{equation}
{\mathbb R}_{S}(d, \epsilon) \geq  (1 - \epsilon) \mathbb R_{S}(d)  - h(\epsilon)  \label{Rdlower1}
\end{equation}
\end{remark}

\begin{remark}
Applying \eqref{eq:Xepsvar} to the random variable $\jmath_S(S, d)$, we have the variational characterization:
\begin{align} \EE[ \left \langle \jmath_S(S, d) \right \rangle_\epsilon] &= 	\mathbb R_S(d) - \max_{
\substack { \varepsilon \colon \mathcal M \mapsto [0, 1] \\ \EE[\varepsilon(S)] \le \epsilon}} \EE[\varepsilon(S) \jmath_S(S, d)] 
		\label{eq:lmb2lossy}
\end{align}
from where it follows, via \eqref{dtiltedupper}, that
\begin{align}
 \EE[ \left \langle \jmath_S(S, d) \right \rangle_\epsilon] &\leq  \E{ \left \langle - \log_2 P_{Z^\star}(B_d(S)) \right \rangle_\epsilon } \label{eq:dballlower}\\
 &\leq  \EE[ \left \langle \jmath_S(S, d) \right \rangle_\epsilon] +    \E{   - \log_2 P_{Z^\star}(B_d(S))  } \notag\\
 &- \mathbb R_S(d)  \label{eq:dballupper}
\end{align}
where $P_{Z^\star}$ is the output distribution that achieves $\mathbb R_S(d)$. 	
\end{remark}

\subsection{Asymptotic analysis}
\label{sec:2orderlossy}
In this section we assume that the following conditions are satisfied. 
\begin{enumerate}[(i)]
\item The source $\{S_i\}$ is stationary and memoryless,  $P_{S^k}  = P_{\mathsf S} \times \ldots \times P_{\mathsf S}$. \label{item:first}
\item The distortion measure is separable, $\mathsf d(s^k, z^k) = \frac 1 k \sum_{i = 1}^k \mathsf d(s_i, z_i)$. \label{item:separable}
\item The distortion level satisfies $d_{\min} < d < d_{\max}$, where $d_{\min}$ is defined in \eqref{dmin}, and $d_{\max} =\inf_{\mathsf z \in \widehat {\mathcal M} } \E{\mathsf d(\mathsf S, \mathsf z)}$, where the expectation is with respect to the unconditional distribution of $\mathsf S$. 
 \label{item:dminmax}
 \item $
 \E{{\mathsf d}^{12}(\mathsf S, \mathsf Z^\star)} < \infty
$  
where the expectation is with respect to $P_{\mathsf S} \times P_{\mathsf Z^\star}$, and $\mathsf Z^\star$ achieves the rate-distortion function  $\mathbb R_{\mathsf S}(d)$. \label{item:last}
\end{enumerate}
If conditions \eqref{item:first}--\eqref{item:dminmax} are satisfied, then $\lambda_{S^k} = k \lambda_{\mathsf S}$ and $P_{Z^{k \star} | S^k} = P_{\mathsf Z^\star | \mathsf S} \times \ldots \times P_{\mathsf Z^\star | \mathsf S}$, where $P_{\mathsf Z^\star | \mathsf S}$ achieves $\mathbb R_{\mathsf S}(d)$. Moreover, even if $\mathbb R_{\mathsf S}(d)$ is not achieved by any conditional distribution
\begin{align}
  \jmath_{S^k}(s^k, d) 
  &= \sum_{i = 1}^k \jmath_{\mathsf S}(s_i, d) \label{dtiltedadditive}
\end{align}
Finiteness of the twelfth moment of ${\mathsf d}(\mathsf S, \mathsf Z^\star)$ in restriction \eqref{item:last} is required for the achievability part of the asymptotic expansion in Theorem \ref{thm:minI2orderlossy}. 
\begin{thm}
Under assumptions \eqref{item:first}--\eqref{item:last}, for any $0 \leq \epsilon \leq 1$
\begin{small}
\begin{equation}
 \left.\begin{aligned}
        & L_{S^k}^\star(d, \epsilon)\\
        & {\mathbb R}_{S^k}(d, \epsilon)\\
        & {\mathbb R}_{S^k}^+(d, \epsilon)\\
         & H_{d, \epsilon}(S^k)\\
        & \E{ \left \langle \jmath_{S^k}(S^k, d) \right \rangle_\epsilon  }
       \end{aligned}
 \right\}
 = (1 - \epsilon) k R(d) -   \sqrt{\frac{k \mathcal V(d)}{2 \pi} } e^{- \frac { (\Qinv{\epsilon})^2} 2 }  + \theta(k)  \label{minI2orderlossy} 
\end{equation} 
\end{small}
where
\begin{align}
\mathcal V(d) &= \Var{\jmath_{\mathsf S}(\mathsf S, d)}
\end{align}
is the rate-dispersion function,
and the remainder term in the expansion satisfies
\begin{equation}
- 2 \log_2 k + \bigo{1}  \leq \theta(k) 
\leq \frac 3 2 \log_2 k + \bigo{1}
\end{equation}


\label{thm:minI2orderlossy}
\end{thm}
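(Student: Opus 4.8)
\textbf{Overall strategy.}
The plan is to sandwich all five quantities between the easy-to-analyze expression $\E{ \left \langle \jmath_{S^k}(S^k, d) \right \rangle_\epsilon }$ plus lower-order corrections, and then apply Lemma \ref{lemma:EXeps} to that expression. By \eqref{dtiltedadditive}, $\jmath_{S^k}(S^k,d) = \sum_{i=1}^k \jmath_{\mathsf S}(S_i, d)$ is a sum of i.i.d.\ random variables whose mean is $R(d) = \mathbb R_{\mathsf S}(d)$ (by \eqref{RdEj}) and whose variance is $\mathcal V(d)$; assumption \eqref{item:last} guarantees a finite third absolute moment via \eqref{Jstarbounds}, since $0 \le \jmath_{\mathsf S}(\mathsf S,d) + \lambda_{\mathsf S} d = J_{\mathsf S}(\mathsf S) \le \E{\mathsf d(\mathsf S, \mathsf Z^\star) \mid \mathsf S}$. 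Hence Lemma \ref{lemma:EXeps} yields
\[
\E{ \left \langle \jmath_{S^k}(S^k, d) \right \rangle_\epsilon } = (1-\epsilon) k R(d) - \sqrt{\tfrac{k \mathcal V(d)}{2\pi}}\, e^{-\frac{(\Qinv{\epsilon})^2}{2}} + \bigo{1}.
\]
It then suffices to show that each of the five quantities differs from this one by $\bigo{\log_2 k}$, with the stated one-sided bounds.

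\textbf{Sandwiching the five quantities.}
First, Theorem \ref{thm:minIlossy} applied to $S^k$ gives that ${\mathbb R}_{S^k}(d,\epsilon)$, ${\mathbb R}_{S^k}^+(d,\epsilon)$, and $H_{d,\epsilon}(S^k)$ are all within $\bigo{\log_2 \mathbb R_{S^k}(d)} = \bigo{\log_2 k}$ of $\E{ \left \langle \jmath_{S^k}(S^k, d) \right \rangle_\epsilon }$; here I use that $\mathbb R_{S^k}(d) = k R(d)$ and that $\mathbb R_{S^k}'(d) = k \mathbb R_{\mathsf S}'(d)$ stays order $k$, so the logarithm of the gap term in \eqref{minIlossylower} is $\bigo{\log_2 k}$, and the $\phi(\cdot)$ terms in \eqref{eq:Hdepslower}--\eqref{eq:Hdepsupper} are bounded constants. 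Also note $\E{-\log_2 P_{Z^{k\star}}(B_d(S^k))} - \mathbb R_{S^k}(d)$ is $\bigo{\log k}$ by the single-letterization in \eqref{eq:dballupper} together with standard bounds on the output-distribution mismatch for product measures (e.g.\ via \eqref{eq:dballlower}--\eqref{eq:dballupper} and the i.i.d.\ structure), so ${\mathbb R}_{S^k}^+$ is likewise controlled. Second, Theorem \ref{thm:Elenlossy} applied to $S^k$ bounds $L_{S^k}^\star(d,\epsilon)$ below by ${\mathbb R}_{S^k}(d,\epsilon) - \log_2({\mathbb R}_{S^k}(d,\epsilon)+1) - \log_2 e$ and above by ${\mathbb R}_{S^k}^+(d,\epsilon)$; combining with the previous paragraph places $L_{S^k}^\star(d,\epsilon)$ within $\bigo{\log_2 k}$ of the central expression as well. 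Collecting the one-sided contributions — the additive $-\log_2(\cdot)$ terms push the lower envelope down by at most $-2\log_2 k + \bigo{1}$ (two nested $\log$-penalties, from \eqref{Clossy} and \eqref{minIlossylower}), while the upper envelope \eqref{eq:Hdepsupper} costs at most $+\tfrac32\log_2 k + \bigo{1}$ — gives the claimed remainder bound on $\theta(k)$.

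\textbf{Expected main obstacle.}
The routine part is assembling the chain of inequalities; the delicate point is justifying that $\E{ \left \langle \jmath_{S^k}(S^k, d) \right \rangle_\epsilon }$ really does admit the Berry--Esseen-type expansion uniformly in the regime of interest, i.e.\ verifying the hypotheses of Lemma \ref{lemma:EXeps} — in particular that the $\mathsf d$-tilted information has a finite third absolute moment even when no conditional kernel attains $\mathbb R_{\mathsf S}(d)$. This is where the twelfth-moment assumption \eqref{item:last} is spent (it is needed more heavily for the separate achievability refinement, but a third moment already follows from it via \eqref{Jstarbounds}). A secondary subtlety is controlling ${\mathbb R}_{S^k}^+$: one must check that replacing the optimal $P_{Z^k}$ by the product $P_{\mathsf Z^\star}^{\otimes k}$ in \eqref{RR(eps)+} costs only $\bigo{\log k}$, which follows from \eqref{eq:dballupper} and the fact that $\E{-\log_2 P_{\mathsf Z^\star}^{\otimes k}(B_d(S^k))} - k R(d) = \bigo{\log k}$ under condition \eqref{item:last} — the same estimate that underlies the $d$-semifaithful redundancy results of \cite{kontoyiannis2000pointwise,yang1999redundancy}.
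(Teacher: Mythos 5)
Your proposal is correct and follows essentially the same route as the paper: expand $\E{\left\langle \jmath_{S^k}(S^k,d)\right\rangle_\epsilon}$ via Lemma~\ref{lemma:EXeps} (third moment secured through \eqref{Jstarbounds} and assumption \eqref{item:last}), sandwich the other four quantities using Theorems~\ref{thm:Elenlossy} and~\ref{thm:minIlossy}, and control the achievability side through the estimate $\E{-\log_2 P_{\mathsf Z^\star}^{\otimes k}(B_d(S^k))} \le kR(d) + \tfrac12\log_2 k + \bigo{1}$, which is exactly the content of the paper's Lemma~\ref{lemma:dispersionlossy} (via the lossy-AEP refinement of \cite{yang1999redundancy} that you cite). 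Your accounting of the remainder term ($-2\log_2 k$ from the two nested logarithmic converse penalties, $+\tfrac32\log_2 k$ from \eqref{eq:Hdepsupper} plus the ball-probability redundancy) matches the paper's.
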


\begin{proof}
Due to \eqref{Jstarbounds}, the assumption \eqref{item:last} implies that the twelfth (and thus the third) moment of $\jmath_{\mathsf S}(\mathsf S, d)$ is finite, and the expansion for $\E{ \left \langle \jmath_{S^k}(S^k, d) \right \rangle_\epsilon  }$  follows from \eqref{dtiltedadditive} and Lemma \ref{lemma:EXeps}.  The converse direction is now immediate from Theorems \ref{thm:Elenlossy} and \ref{thm:minIlossy}. The achievability direction follows by an application of Lemma \ref{lemma:dispersionlossy} below to weaken the upper bounds in Theorems~\ref{thm:Elenlossy} and~\ref{thm:minIlossy}. 
\end{proof}

\begin{lemma}
 Let $0 \leq \epsilon \leq 1$. Under assumptions \eqref{item:first}--\eqref{item:last} 
\begin{align}
 \E{ \left \langle - \log_2 P_{Z^{ k \star} }(B_d(S^k)) \right \rangle_\epsilon  } 
 &= (1 - \epsilon) k R(d)  \label{lossyupper2order}\\
&-  \sqrt{\frac{k \mathcal V(d)}{2 \pi} } e^{- \frac { (\Qinv{\epsilon})^2} 2 } + \theta(k) \notag 
\end{align}
where
\begin{equation}
\bigo{1} \leq  \theta(k) \leq \frac 1 2 \log_2 k +  \bigo{1}
\end{equation}
\label{lemma:dispersionlossy}
\end{lemma}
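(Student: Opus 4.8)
The plan is to analyze the random variable $-\log_2 P_{Z^{k\star}}(B_d(S^k))$ and show that, after $\epsilon$-cutoff, its expectation admits the claimed Gaussian expansion. The anchor is Lemma~\ref{lemma:EXeps}: if the quantity in question were exactly a sum of i.i.d.\ terms, we would be done. It is not, but the $\mathsf d$-tilted information $\jmath_{S^k}(S^k,d)=\sum_i \jmath_{\mathsf S}(S_i,d)$ \emph{is} such a sum, and by \eqref{dtiltedupper} we have $\jmath_{S^k}(s^k,d)\le -\log_2 P_{Z^{k\star}}(B_d(s^k))$ pointwise. So the lower bound $\theta(k)\ge\bigo 1$ in the lemma follows immediately: apply the monotonicity of $\langle\cdot\rangle_\epsilon$ (or directly \eqref{eq:dballlower}) and then Lemma~\ref{lemma:EXeps} to $\sum_i\jmath_{\mathsf S}(S_i,d)$, whose third absolute moment is finite by \eqref{Jstarbounds} and assumption~\eqref{item:last}. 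This gives $\E{\langle-\log_2 P_{Z^{k\star}}(B_d(S^k))\rangle_\epsilon}\ge (1-\epsilon)kR(d)-\sqrt{k\mathcal V(d)/2\pi}\,e^{-(\Qinv\epsilon)^2/2}+\bigo1$.

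The substantive direction is the upper bound $\theta(k)\le\frac12\log_2 k+\bigo1$, i.e.\ controlling how much $-\log_2 P_{Z^{k\star}}(B_d(S^k))$ can exceed $\jmath_{S^k}(S^k,d)$. The standard tool here (as in \cite{kostina2011fixed,kontoyiannis2000pointwise}) is a Berry–Esseen–type refinement of \eqref{dtiltedupper}: under assumptions \eqref{item:first}--\eqref{item:last}, one shows that with high probability
\begin{equation}
-\log_2 P_{Z^{k\star}}(B_d(S^k)) \le \jmath_{S^k}(S^k,d) + \tfrac12\log_2 k + \bigo1. \notag
\end{equation}
This is the lossy analogue of the "$\frac12\log k$ gap" between $\imath$ and the optimal code length; it is proved by writing $P_{Z^{k\star}}(B_d(s^k))$ via the product structure of $P_{Z^{k\star}}$, using the relation \eqref{JE} for $J_{\mathsf S}$, and applying a local central limit / Berry–Esseen estimate to the tilted sum $\sum_i\big(\lambda_{\mathsf S}\mathsf d(s_i,Z_i^\star)-\jmath_{\mathsf S}(s_i,d)-\lambda_{\mathsf S}d\big)$ under the tilted measure — the twelfth-moment hypothesis in \eqref{item:last} is exactly what makes these concentration estimates uniform enough. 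Once this pointwise bound holds on a set of probability $1-o(1)$, I would feed it into \eqref{eq:dballupper}: the excess term $\E{-\log_2 P_{Z^{k\star}}(B_d(S^k))}-\mathbb R_{S^k}(d)$ is $O(\log k)$, and more precisely plugging the refined bound into the $\epsilon$-cutoff (using that $\langle X+c\rangle_\epsilon\le\langle X\rangle_\epsilon + c$ for a deterministic shift $c\ge 0$, together with a truncation argument on the low-probability bad set where the bound fails) yields $\E{\langle-\log_2 P_{Z^{k\star}}(B_d(S^k))\rangle_\epsilon}\le\E{\langle\jmath_{S^k}(S^k,d)\rangle_\epsilon}+\frac12\log_2 k+\bigo1$, and the first term is handled again by Lemma~\ref{lemma:EXeps}.

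I expect the main obstacle to be the uniform Berry–Esseen control of $-\log_2 P_{Z^{k\star}}(B_d(S^k))$: one needs the estimate to hold not just in distribution but with enough uniformity over $s^k$ (equivalently, strong enough tail control) that the contribution of the atypical $s^k$ — where the bound may degrade — is negligible after the $\epsilon$-cutoff, which only discards an $\epsilon$-fraction of \emph{mass}, not an arbitrary bad set. This is precisely why assumption~\eqref{item:last} asks for a twelfth moment rather than a third: it buys polynomial-rate concentration for the relevant tilted sums and its square, so that $\Prob{-\log_2 P_{Z^{k\star}}(B_d(S^k)) > \jmath_{S^k}(S^k,d)+\frac12\log_2 k + c\log_2 k}$ decays faster than any power needed to kill the $O(\log k)$-sized contribution of the bad event inside the expectation. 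The rest — combining the two directions and reading off the $\theta(k)$ bracket — is bookkeeping with the properties of $\langle\cdot\rangle_\epsilon$ established in Section~\ref{sec:optimal}.
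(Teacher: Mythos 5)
Your proposal is correct and follows essentially the same route as the paper: the lower bound is exactly \eqref{eq:dballlower} combined with Lemma~\ref{lemma:EXeps} applied to $\sum_{i=1}^k\jmath_{\mathsf S}(S_i,d)$, and the upper bound rests on the same refined lossy-AEP estimate $-\log_2 P_{Z^{k\star}}(B_d(S^k)) \le \jmath_{S^k}(S^k,d) + \tfrac12\log_2 k + O(1)$ (up to a random correction term in $\bar{\mathsf d}(S^k)$ whose expectation is $O(1)$), which the paper does not reprove via tilted Berry--Esseen arguments but imports from Yang--Zhang as Lemma~\ref{lemma:yang} through a Borel--Cantelli argument. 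The only structural difference is that the paper first converts that estimate into a bound on the plain expectation, $\E{-\log_2 P_{Z^{k\star}}(B_d(S^k))} \le kR(d)+\tfrac12\log_2 k+O(1)$, and then lets \eqref{eq:dballupper} transfer the gap to the $\epsilon$-cutoff, so the truncation/bad-set bookkeeping you anticipate at the cutoff level is not needed.
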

\begin{proof}
 Appendix \ref{appx:dispersionlossy}. 
\end{proof}

\appendices

\section{Proof of Lemma \ref{lemma:EXeps}}
\label{appx:EXeps}
The following non-uniform strengthening of the Berry-Esse{\'e}n inequality is instrumental in the proof of Lemma \ref{lemma:EXeps}. 
\begin{thm}[Bikelis (1966), e.g. \cite{petrov1995limit}]
 Fix a positive integer $k$. Let $X_i$, $i = 1, \ldots, k$ be independent, $\E{X_i} = 0$, $\E{|X_i|^3} < \infty$. Then, for any real $t$
\begin{equation}
\left| \mathbb P \left[ \sum_{i = 1}^k X_i > t \sqrt {k V_k } \right]  - Q(t) \right| \leq \frac {B_k}{\sqrt k (1 + |t|^3)},
\label{sc:BerryEsseen}
\end{equation}
where
\begin{align}
V_k &= \frac 1 k \sum_{i = 1}^k \E{|X_i|^2} \label{sc:BerryEsseenVn}\\
T_k &= \frac 1 k \sum_{i = 1}^k \E{ |X_i|^3 } \label{sc:BerryEsseenTn}\\
B_k &= \frac{c_0 T_k}{V_k^{3/2}} \label{sc:BerryEsseenBn}
\end{align}
and $c_0$ is a positive constant. 
\label{thm:bikelis}
\end{thm}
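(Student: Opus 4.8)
This is the classical non-uniform Berry--Esseen inequality; a complete proof in the Lindeberg-type generality stated here (only third moments and the averaged quantities $V_k,T_k$ enter) is given in \cite{petrov1995limit}, so in the paper we would simply invoke it. Were one to reprove it, the plan would be to peel off a trivial moderate-deviation range, reduce the rest to a single one-sided tail bound, and prove that tail bound by truncation combined with a non-uniform form of Esseen's smoothing inequality.

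The reduction is as follows. Rescaling $X_i \mapsto X_i/\sqrt{V_k}$ we may assume $V_k = 1$, so $\sum_i X_i$ has variance $k$, and by the symmetry $X_i \mapsto -X_i$ it suffices to treat $t \ge 0$. Writing $L_k \triangleq T_k/(V_k^{3/2}\sqrt k)$ for the normalized Lyapunov ratio, a crude Rosenthal-type bound $\E{|\sum_i X_i|^3} \le C\bigl((kV_k)^{3/2} + \sum_i \E{|X_i|^3}\bigr)$ together with Markov's inequality disposes of the regime $L_k \ge \delta_0$ for any fixed $\delta_0$, so we may assume $L_k \le \delta_0$ with $\delta_0$ a small absolute constant. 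For $0 \le t \le 1$ the \emph{uniform} Berry--Esseen theorem already gives $|\Prob{\sum_i X_i > t\sqrt k} - Q(t)| \le C L_k$, which is the claim on that range since $1 + t^3 \le 2$ there. For $t \ge 1$ we have $0 \le Q(t) \le \tfrac{1}{\sqrt{2\pi}}e^{-t^2/2} \le C_1/(1+t^3)$, so everything reduces to the one-sided tail estimate
\begin{equation}
\Prob{\textstyle\sum_{i=1}^k X_i > t\sqrt k} \;\le\; \frac{C_2\,L_k}{1+t^3},\qquad t\ge 1. \label{eq:nbeplan}
\end{equation}

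To prove~\eqref{eq:nbeplan} I would truncate each summand at the scale of the deviation: set $\bar X_i = X_i\,\1{|X_i|\le t\sqrt k/2}$ and $\hat X_i = X_i - \bar X_i$. On the event that some $|X_i|$ exceeds $t\sqrt k/2$, a union bound followed by Markov applied to $|X_i|^3$ gives probability at most $\sum_i 8\,\E{|X_i|^3}/(t\sqrt k)^3 = 8 L_k/t^3$, already of the required order. On the complementary event $\sum_i X_i = \sum_i \bar X_i$; since $\bigl|\sum_i \E{\bar X_i}\bigr| \le \sum_i \E{|X_i|^3}/(t\sqrt k/2)^2 = 4L_k\sqrt k/t^2 \le t\sqrt k/2$ (here the smallness of $L_k$ is used), it remains to bound $\Prob{\sum_i(\bar X_i - \E{\bar X_i}) > t\sqrt k/2}$ for bounded, mean-zero summands with total variance at most $k$. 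This is the crux. I would estimate it via a non-uniform version of Esseen's smoothing inequality, in which the extra weight $(1+t^3)$ attached to the Kolmogorov distance corresponds, through Fourier duality, to controlling up to the third derivative of the difference between the characteristic function $\prod_i \widehat{\bar X_i}$ and that of the matching centered Gaussian; the third-moment hypothesis is exactly what bounds those derivatives, and the standard decay estimates for the characteristic function supply the integrability. Comparing that Gaussian with the standard one (whose variance differs from $k$ by a factor $1+O(L_k)$) and then with $Q(t)$ completes~\eqref{eq:nbeplan}.

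The step I expect to be the real obstacle is this bulk estimate. A naive Bernstein or Bennett bound for $\sum_i(\bar X_i - \E{\bar X_i})$ only produces an $O(1)$ factor, because the truncation level $t\sqrt k/2$ dwarfs the standard deviation $\sqrt k$ once $t$ is large; so one is forced into the characteristic-function route with the threefold differentiation, tracking the $t$-dependence through every error term. Since this is precisely the content of the cited references, the paper quotes~\eqref{sc:BerryEsseen} rather than reproving it.
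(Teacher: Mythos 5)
The paper offers no proof of this statement either: it is imported verbatim as Bikelis's non-uniform refinement of the Berry--Esseen theorem, with a pointer to Petrov's book, so your primary move of simply citing the literature is exactly what the paper does. Regarding your hypothetical reconstruction, one step is genuinely wrong as written: the claimed reduction to the one-sided tail bound $\Prob{\sum_{i=1}^k X_i > t\sqrt{k}} \le C_2 L_k/(1+t^3)$ for all $t\ge 1$ is false, since for fixed $t$ (say $t=1$) and $L_k \to 0$ the left side converges to $Q(t)>0$ while the right side vanishes; discarding $Q(t)$ on the grounds that $Q(t)\le C_1/(1+t^3)$ loses the essential factor $L_k$, and a pure tail estimate can only settle the range where $e^{-t^2/2}\lesssim L_k$. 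On the complementary range $1\le t\lesssim \sqrt{2\log(1/L_k)}$ you must retain the two-sided comparison $\left|\Prob{\sum_{i=1}^k X_i>t\sqrt{k}}-Q(t)\right|\le C L_k/(1+t^3)$, which is what your subsequent appeal to the truncation plus the non-uniform smoothing inequality and the Gaussian comparison in fact delivers; so the intended argument is the standard one and is sound, but the intermediate display should not drop the Gaussian term.
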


Denote for brevity
\begin{equation}
 Y_k \triangleq \sum_{i = 1}^k X_i 
\end{equation}

If $\Var{\mathsf X} = 0$
\begin{equation}
\E{ \left \langle  Y_k \right \rangle_\epsilon } = (1 - \epsilon) k \E{\mathsf X},
\end{equation}
and \eqref{EXeps} holds.  

If $\Var{\mathsf X} > 0$ notice that
\begin{align}
&~ 
(1 - \epsilon) k \E{\mathsf X} - \E{ \left \langle Y_k \right \rangle_\epsilon } 
\label{eq:-Xepsa}  \\
=&~ \E{ \left( Y_k - k \E{\mathsf X} \right)1 \left\{ Y_k > \eta \right\}  } + \alpha \left(   \eta - k \E{\mathsf X} \right) \Prob{ Y_k = \eta }  \notag \\
=&~  \int_{\eta}^\infty \Prob{Y_k > t} dt+ \epsilon \left( \eta - k \E{\mathsf X} \right) \label{eq:-Xepsb},
\end{align}
where $\eta$ and $\alpha$ are those in \eqref{epscutoff}, and to write \eqref{eq:-Xepsb} we used 
\begin{align}
\E{Y_k \1{Y_k > \eta}}  &= \int_{\eta}^\infty \Prob{Y_k > t} dt + \eta \Prob{Y_k > \eta} \label{eq:-einfoa}.
\end{align}
We proceed to evaluate the right side of \eqref{eq:-Xepsb}. 
Using Theorem \ref{thm:bikelis}, we observe that \eqref{epscutoff} requires that $\eta$ behaves as
\begin{equation}
\eta = k \E{\mathsf X} + \sqrt{k \Var{\mathsf X}} \Qinv{\epsilon} + b_k  \label{eta2order}
\end{equation}
where $b_k = \bigo{1}$. Using \eqref{eta2order}, we may write
\begin{align}
&~  \int_{\eta}^\infty \Prob{Y_k > t} dt \notag\\
= &~ \int_0^{\infty} \Prob{ Y_k   > \eta + t} dt \label{eq:-einfoa} \\
= &~ \int_{b_k}^{\infty} \Prob{ Y_k   > k \E{\mathsf X} + \sqrt {k \Var{\mathsf X} } \Qinv{\epsilon}  + t } dt \\
= &~ \int_{0}^{\infty} \Prob{ Y_k   > k \E{\mathsf X} + \sqrt {k \Var{\mathsf X} } \Qinv{\epsilon}  + t } dt \notag\\
&+ \bigo{1}\\
= &~ \sqrt{k \Var{\mathsf X}}  \int_0^{\infty}Q\left( \Qinv{\epsilon} + r \right) dr + \bigo{1} \label{eq:-einfob}\\
= &~ \sqrt{k \Var{\mathsf X}} \int_{ \Qinv{\epsilon}}^{\infty} Q\left( r \right) dr + \bigo{1} \\
= &~ \sqrt{k \Var{\mathsf X}} \left[  \int_{ \Qinv{\epsilon}}^{\infty}  \frac 1 {\sqrt{2 \pi}} x e^{- \frac{x^2} {2}}dx  - \epsilon  \Qinv{\epsilon} \right] + \bigo{1} \\
=& \sqrt {k \Var{\mathsf X} }\left(  \frac{1}{\sqrt{2 \pi}} e^{- \frac { (\Qinv{\epsilon})^2} 2 }  - \epsilon \Qinv{\epsilon} \right) + \bigo{1} \label{eq:-einfoc}
\end{align}
where 
\eqref{eq:-einfob} follows by applying Theorem \ref{thm:bikelis} to the integrand in the left side and observing that 
\begin{equation}
\int_0^\infty \frac{dr}{ 1 + (\Qinv{\epsilon} + r)^3 } < \infty
\end{equation}

Applying \eqref{eta2order} and \eqref{eq:-einfoc} to \eqref{eq:-Xepsb}, we conclude that
\begin{align}
 (1 - \epsilon) k \E{\mathsf X} - \E{ \left \langle Y_k \right \rangle_\epsilon } =   \frac{\sqrt {k \Var{\mathsf X} }}{\sqrt{2 \pi}} e^{- \frac { (\Qinv{\epsilon})^2} 2 } + \bigo{1}, 
\end{align}
which is exactly \eqref{EXeps}.

\section{Proof of \eqref{disp0}}
\label{appx:disp0}

Denote for brevity
\begin{equation}
f(\epsilon) =  \frac 1 {\sqrt{2 \pi}}  e^{- \frac {\left( \Qinv{\epsilon} \right)^2} 2} 
\end{equation}
Direct computation yields
\begin{align}
f(\epsilon) &=  - \frac 1 {\left( Q^{-1} \right)^\prime \left( \epsilon \right) }\\
f^\prime (\epsilon) &= \Qinv{\epsilon} \label{dispa}\\
f^{\prime \prime}(\epsilon) &= - \frac 1 {f(\epsilon)}\label{dispa1}
\end{align}
Furthermore, using the bounds
\begin{align}
\frac{x}{\sqrt{2 \pi}(1 + x^2)} e^{-\frac {x^2}{2}} < Q(x) < \frac 1 {\sqrt{2 \pi}x } e^{-\frac {x^2}{2}},\, x > 0
\end{align}
we infer that as $\epsilon \to 0$
\begin{equation}
\Qinv{\epsilon} = \sqrt{2 \log_e \frac 1 \epsilon}  + \bigo{\log_e \log_e \frac 1 \epsilon} \label{dispb}
\end{equation}
Finally
\begin{align}
\lim_{\epsilon \to 0} \frac{f(\epsilon) - \epsilon \sqrt {2 \log_e \frac 1 \epsilon}}{ \epsilon} 
&= \lim_{\epsilon \to 0} \frac{f(\epsilon) - \epsilon f^\prime(\epsilon)}{\epsilon} \label{dispc}\\
&= \lim_{\epsilon \to 0} f^{\prime \prime}(\epsilon) \epsilon \label{dispd}\\
&= \lim_{\epsilon \to 0} \frac{-\epsilon}{f(\epsilon)}  \label{dispe}\\
&= \lim_{\epsilon \to 0} \frac{1}{\Qinv{\epsilon}} \label{dispf}\\
&= 0
\end{align}

where
\begin{itemize}
 \item \eqref{dispc} is due to \eqref{dispa} and \eqref{dispb}; 
 \item \eqref{dispd} is by the l'H\^opital rule; 
 \item \eqref{dispe} applies \eqref{dispa1}; 
 \item \eqref{dispf} is by the l'H\^opital rule and  \eqref{dispa}. 
\end{itemize}

\section{Proof of Theorem \ref{thm:minIlossy}}
\label{appx:minIlossy}

Given $P_S$, $\mathsf d$, denote for measurable $\mathcal F \subseteq \mathcal M$
\begin{equation}
\mathbb R_{S| \mathcal F} (d, \epsilon) \triangleq \min_{ \substack{ P_{Z | S } \colon \\ \Prob{ \mathsf d( S, Z) > d | S \in \mathcal F} \leq \epsilon}} I(S; Z | S \in \mathcal F) 
\end{equation}

In the proof of the converse bound in \eqref{minIlossylower}, the following result is instrumental.

\begin{lemma}
Suppose $P_S$, $\mathsf d$, $d > d_{\min}$ and $\mathcal F \subseteq \mathcal M$ are such that for all $s \in \mathcal F$
\begin{equation}
 \jmath_S(S, d)  \geq r \text{ a.s.} \label{jlower}
\end{equation}
for some real $r$. 
Then
\begin{equation}
\mathbb R_{S| \mathcal F} (d, \epsilon) \geq \left| (1 - \epsilon) r  + (1 - \epsilon) \log_2 \Prob{S \in \mathcal F} - h(\epsilon) \right|^+ \label{Rdlower1}
\end{equation}
\label{lemma:Rdlower1}
\end{lemma}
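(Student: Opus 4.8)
The plan is to lower-bound $I(S;Z\mid S\in\mathcal F)$ for an arbitrary feasible $P_{Z|S}$, i.e. one satisfying $\Prob{\mathsf d(S,Z)>d\mid S\in\mathcal F}\le\epsilon$, and then take the minimum. Write $E=\1{\mathsf d(S,Z)>d}$, so that, conditioned on $S\in\mathcal F$, $\E{E}\le\epsilon$. The starting point is the decomposition $I(S;Z\mid S\in\mathcal F)=I(S;Z,E\mid S\in\mathcal F)-I(S;E\mid Z,S\in\mathcal F)\ge I(S;Z,E\mid S\in\mathcal F)-H(E\mid S\in\mathcal F)\ge I(S;Z\mid E,S\in\mathcal F)-h(\epsilon)$, where the last step uses $H(E\mid S\in\mathcal F)\le h(\E{E\mid S\in\mathcal F})\le h(\epsilon)$ by concavity of $h$ (valid since $\epsilon\le 1/2$; if $\epsilon>1/2$ the claimed bound is $0$ and there is nothing to prove). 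Splitting the conditional mutual information on the value of $E$, $I(S;Z\mid E,S\in\mathcal F)=\Prob{E=0\mid S\in\mathcal F}\,I(S;Z\mid E=0,S\in\mathcal F)\ge(1-\epsilon)\,I(S;Z\mid E=0,S\in\mathcal F)$ once we drop the nonnegative $E=1$ term and use $\Prob{E=0\mid S\in\mathcal F}\ge 1-\epsilon$.

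The core of the argument is then the bound $I(S;Z\mid E=0,S\in\mathcal F)\ge r+\log_2\Prob{S\in\mathcal F}$. Conditioning on $E=0$ and $S\in\mathcal F$ means $\mathsf d(S,Z)\le d$ and $S\in\mathcal F$ almost surely; in particular $Z$ takes values with $\mathsf d(s,Z)\le d$, i.e. $Z\in B_d(s)$. Let $Q_Z$ denote the marginal of $Z$ under this conditional law and let $Q_{S}$ denote the conditional law of $S$ (which is $P_S$ restricted to $\mathcal F$ and to those $s$ with $\Prob{E=0\mid S=s}>0$, renormalized). By the standard variational/change-of-measure lower bound for mutual information, for the output distribution $P_{Z^\star}$ that achieves $\mathbb R_S(d)$ we have
\begin{align}
I(S;Z\mid E=0,S\in\mathcal F)
&\ge \E{\log_2\frac{dP_{Z|S}}{dP_{Z^\star}}(Z\mid S)\,\Big|\,E=0,S\in\mathcal F}
\notag\\
&\ge \E{-\log_2 P_{Z^\star}(B_d(S))\,\big|\,E=0,S\in\mathcal F},
\end{align}
where the second step is exactly the estimate behind \eqref{dtiltedupper}: on the event $\mathsf d(S,Z)\le d$ we have $\frac{dP_{Z|S}}{dP_{Z^\star}}(Z\mid S)\ge$ (after taking conditional expectation over $Z$ given $S$) the reciprocal of $P_{Z^\star}(B_d(S))$ by Jensen, so that its logarithm is $\ge-\log_2 P_{Z^\star}(B_d(S))\ge\jmath_S(S,d)\ge r$ on $\mathcal F$. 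This hands us $I(S;Z\mid E=0,S\in\mathcal F)\ge r$ before accounting for the renormalization; tracking the renormalization constant $\Prob{S\in\mathcal F}$ (and the further conditioning on $E=0$, which only helps because it shrinks the set and the mutual information inequality is applied on the conditional measure) contributes the additive $\log_2\Prob{S\in\mathcal F}\le 0$ term. Assembling, $\mathbb R_{S|\mathcal F}(d,\epsilon)\ge(1-\epsilon)(r+\log_2\Prob{S\in\mathcal F})-h(\epsilon)$, and since mutual information is nonnegative we may take the positive part, giving \eqref{Rdlower1}.

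The main obstacle I anticipate is making the bookkeeping around the conditioning rigorous: one must be careful that conditioning on $\{E=0\}$ changes the conditional law of $S$ (it is no longer simply $P_S$ restricted to $\mathcal F$), and that the change-of-measure inequality with reference measure $P_{Z^\star}$ is applied to the \emph{conditional} joint law, so that the renormalization factor that appears is precisely $\Prob{S\in\mathcal F}$ (and not $\Prob{S\in\mathcal F,E=0}$). The clean way to handle this is to first prove the auxiliary inequality $I(S;Z\mid E=0,S\in\mathcal F)\ge\essinf_{s\in\mathcal F}\jmath_S(s,d)+\log_2\Prob{S\in\mathcal F}$ directly from $D(Q_{SZ}\|Q_S\times P_{Z^\star})\ge 0$ together with $\frac{dQ_S}{dP_S}\le\frac1{\Prob{S\in\mathcal F}}$, and only then chain it with the $E$-decomposition above; the special case \eqref{jsame}, $\jmath_S(S,d)=\mathbb R_S(d)$ a.s., with $\mathcal F=\mathcal M$, then recovers \eqref{Rdlower1} in the body of the paper.
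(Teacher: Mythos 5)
Your overall architecture (peel off the error indicator $E=\1{\mathsf d(S,Z)>d}$, spend $h(\epsilon)$ on it, and reduce to lower-bounding $I(S;Z\mid E=0,S\in\mathcal F)$) differs from the paper, which instead applies Csisz\'ar's dual characterization \eqref{RR(d)csiszar}--\eqref{csiszarg} to the conditional source with the binary distortion $\1{\mathsf d(s,z)>d}$, exhibiting the explicit feasible pair \eqref{lambdachoice}--\eqref{Jchoice} and then bounding $\sup_z\Prob{\mathsf d(S,z)\le d\mid S\in\mathcal F}$ by Markov's inequality via \eqref{csiszarg1}. The problem is that your core step runs in the wrong direction. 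Let $Q_{SZ}$ denote the joint law of $(S,Z)$ given $\{E=0,S\in\mathcal F\}$, with marginals $Q_S,Q_Z$. Then
\begin{equation}
\E{\log_2\tfrac{dQ_{Z|S}}{dP_{Z^\star}}(Z\mid S)\,\Big|\,E=0,S\in\mathcal F}=I(S;Z\mid E=0,S\in\mathcal F)+D(Q_Z\|P_{Z^\star}),
\end{equation}
so replacing the true conditional output marginal by the fixed reference $P_{Z^\star}$ produces an \emph{upper} bound on the conditional mutual information, not a lower bound; your first displayed inequality is exactly this bound reversed. Consequently, showing that this quantity is $\ge r$ (via \eqref{dtiltedupper} and the support argument, which is fine) says nothing about $I(S;Z\mid E=0,S\in\mathcal F)$ unless you also upper-bound $D(Q_Z\|P_{Z^\star})$, and $Q_Z$ is the code's output distribution, over which you have no control. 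The fallback you propose, ``$D(Q_{SZ}\|Q_S\times P_{Z^\star})\ge 0$,'' is vacuous for the same reason: that divergence equals $I_Q(S;Z)+D(Q_Z\|P_{Z^\star})$, so its nonnegativity yields no lower bound on $I_Q(S;Z)$. A correct realization of your plan must work with the reverse channel: lower-bound $D(Q_{S|Z=z}\|Q_S)$ by the Gibbs/Donsker--Varadhan inequality with the tilt $\jmath_S(s,d)+\lambda_S d-\lambda_S\mathsf d(s,z)$, and control the log-moment-generating term via \eqref{csiszarg1} together with $dQ_S/dP_S\le\left(\Prob{S\in\mathcal F,E=0}\right)^{-1}$ --- i.e.\ essentially re-import the paper's dual machinery. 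Note also that the correct renormalization constant is $\Prob{S\in\mathcal F,E=0}$, not $\Prob{S\in\mathcal F}$, which costs an extra $(1-\epsilon)\log_2\frac{1}{1-\epsilon}$ relative to \eqref{Rdlower1}; your claim that this loss can be avoided is asserted, not proved (the paper's specific choice of $\lambda$ in \eqref{lambdachoice} is what avoids it).

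A second, smaller gap is the case $\epsilon>1/2$. You need $H(E\mid S\in\mathcal F)=h(\Prob{E=1\mid S\in\mathcal F})\le h(\epsilon)$, which can fail when $\Prob{E=1\mid S\in\mathcal F}$ is near $1/2$ and $\epsilon>1/2$; and your dismissal ``for $\epsilon>1/2$ the claimed bound is $0$'' is false, since for large $r$ the right side of \eqref{Rdlower1} is positive for any $\epsilon<1$ (take $r\gg 1$, $\Prob{S\in\mathcal F}=1$). This matters because in the proof of Theorem \ref{thm:minIlossy} the lemma is invoked with per-class error levels $\epsilon(j)$ that may exceed $1/2$ even when the overall $\epsilon$ does not. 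This part is repairable --- e.g.\ reduce to codes meeting the excess-distortion constraint with equality, as the paper does around \eqref{eq:itt5a}, so that $H(E\mid S\in\mathcal F)=h(\epsilon)$ --- but as written it is a gap; the paper's route through $d(\epsilon\|1-p)\ge(1-\epsilon)\log_2\frac1p-h(\epsilon)$ handles all $\epsilon\in[0,1]$ without case analysis.
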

\begin{proof}
Denote
\begin{align}
p_S(z) &\triangleq \Prob{ \mathsf d(S, z) \leq d | S \in \mathcal F}\\
p &\triangleq \sup_{z \in \widehat{\mathcal M}} p_S(z) 
\end{align}

If $\epsilon > 1 - p$, ${\mathbb R}_{S}(d, \epsilon)$ = 0, so in the sequel we focus on the nontrivial case 
\begin{equation}
\epsilon \leq 1 - p \label{epsub}
\end{equation}

To lower-bound the left side of \eqref{Rdlower1}, we weaken the supremum in \eqref{RR(d)csiszar} by selecting a suitable pair $(J(s), \lambda)$ satisfying the constraint in \eqref{csiszarg}. Specifically, we choose
\begin{align}
\exp(-\lambda) &= \frac{\epsilon p} {(1 - \epsilon) (1 - p)} \label{lambdachoice}\\
\exp(J(s))  &=\exp( J ) \triangleq   \frac{1 - \epsilon }{p} 
 , ~ s \in \mathcal F \label{Jchoice}
\end{align}
To verify that the condition \eqref{csiszarg} is satisfied, we substitute the choice in \eqref{lambdachoice} and \eqref{Jchoice}  into the left side of  \eqref{csiszarg} to obtain
\begin{align}
&~\epsilon \frac{ 1 - p_S(z)}{1 - p}  + (1 - \epsilon)  \frac{p_S(z)}{p} \notag\\
\leq&~ (1 - p) \left[ \frac{ 1 - p_S(z)}{1 - p}  -  \frac{p_S(z)}{p}  \right] +  \frac{p_S(z)}{p} \label{epsub1}\\
=&~ 1
\end{align}
where \eqref{epsub1} is due to \eqref{epsub} and the observation that the expression in square brackets in the right side of \eqref{epsub1} is nonnegative. 
Plugging  \eqref{lambdachoice} and \eqref{Jchoice} into \eqref{RR(d)csiszar}, we conclude that
\begin{align}
\mathbb R_{S| \mathcal F} (d, \epsilon) &\geq J - \lambda \epsilon\\
 &= d( \epsilon \| 1 - p) \\
 &\geq (1 - \epsilon) \log_2 \frac 1 {p}  - h(\epsilon) \\
 &\geq (1 - \epsilon) r  + (1 - \epsilon) \log_2 \Prob{S \in \mathcal F} - h(\epsilon) \label{-Rdlower1a}
\end{align}
where $d(a \| b) = a \log \frac a b + (1 - a) \log \frac {1 - a}{1 - b}$ is the binary relative entropy function, and \eqref{-Rdlower1a} is due to 
\begin{align}
p_S(z) &\leq \E{ \exp(\lambda_S d - \lambda_S \mathsf d(S, z))  | S \in \mathcal F} \label{markov}\\
&\leq \E{ \exp(\jmath_S(S, d) + \lambda_S d - \lambda_S \mathsf d(S, z) - r)  | S \in \mathcal F} \label{-jlower}\\
&\leq \frac{\exp(-r)}{\Prob{S \in \mathcal F}} \E{ \exp(\jmath_S(S, d) + \lambda_S d - \lambda_S \mathsf d(S, z) ) } \\
&\leq \frac{\exp(-r)}{\Prob{S \in \mathcal F}}  \label{-Rdlower1b}
\end{align}
where $\lambda_S \triangleq -\mathbb R_{S}(d)$, and
\begin{itemize}
 \item \eqref{markov} is Markov's inequality;
 \item \eqref{-jlower} applies \eqref{jlower}; 
 \item \eqref{-Rdlower1b} is equivalent to \eqref{csiszarg1}. 
\end{itemize}
\end{proof}

\begin{proof}[Proof of Theorem \ref{thm:minIlossy}]
We start with the converse bound in \eqref{minIlossylower}. Note first that, similar to \eqref{eq:itt5a}, the constraint in \eqref{RR(eps)} is achieved with equality. Denoting the random variable
\begin{equation}
 F \triangleq \left \lfloor \jmath_S(S, d) \right \rfloor + 1
\end{equation}
and the sets
\begin{equation}
\mathcal F_j \triangleq \left\{s \in \mathcal M \colon F = j\right\}, 
\end{equation}
we may write 
\begin{align}
I(S; Z) &= I(S, F; Z)\\
&=  I(S; Z | F) + I(F; Z) 
\end{align}
so
\begin{align}
 {\mathbb R}_{S} (d, \epsilon)
&\geq \min_{ \substack{ P_{Z| S} \colon \\ \Prob{\mathsf d(S, Z) > d } \leq \epsilon} } I(S; Z | F) \\
&= \min_{\varepsilon(\cdot) \colon \E{\varepsilon(F)} \leq \epsilon} \sum_{j = -\infty}^\infty P_F(j) \mathbb R_{S| \mathcal F_j} (d, \epsilon(j))  \label{minIlbalossy}
\end{align}
We apply Lemma \ref{lemma:Rdlower1} to lower bound each term of the sum by
\begin{align}
&~
 \mathbb R_{S| \mathcal F_j} (d, \epsilon(j))
 \notag\\
 \geq&~ \left| (1 - \epsilon(j)) j  + (1 - \epsilon) \log_2 P_F(j) - h(\epsilon(j)) \right|^+ \label{-minIlbblossy}
\end{align}
to obtain
\begin{align}
&~ {\mathbb R}_{S} (d, \epsilon) \notag\\
\geq&~ \min_{\varepsilon(\cdot) \colon \E{\varepsilon(F)} \leq \epsilon}\left\{  \E{(1 - \epsilon(F))  \jmath_{S} (S, d) } -  \E{h(\epsilon(F))} \right\} \notag\\
&- H(F)  \label{minIlbblossy} \\
=&~ \min_{\varepsilon(\cdot) \colon \E{\varepsilon(F)} \leq \epsilon} \left\{ \E{(1 - \epsilon(F))  \jmath_{S} (S, d) }  \right\}  -  H(F)  - h(\epsilon) \label{minIlbb0lossy}\\
\geq &~
\E{ \left \langle \jmath_{S}(S, d) \right \rangle_\epsilon} -  H(F)  - h(\epsilon) \label{minIlbclossy}\\
\geq &~ \E{ \left \langle \jmath_{S}(S, d) \right \rangle_\epsilon}    - \log_2 \left( \E{J_S(S)} + 1 \right)  -  \log_2 e - h(\epsilon) \label{minIlbdlossy}
\end{align}
where \eqref{minIlbblossy} uses \eqref{dtiltedupper}, \eqref{minIlbb0lossy} is by concavity of $h(\cdot)$, \eqref{minIlbclossy} is due to \eqref{eq:lmb2lossy}, and \eqref{minIlbdlossy} holds  because $F + \lambda_S d \geq J_S(S) \geq 0$, and  the entropy of a random variable on $\mathbb Z_+$ with a given mean is maximized by that of the geometric distribution.

\apxonly
{
TODO: FIX. To tighten the logarithmic term in \eqref{minIlbdlossy}, write 
\begin{equation}
 {\mathbb R}_{S} (d, \epsilon)
\geq \min_{ \substack{ P_{Z| S} \colon \\ \Prob{\mathsf d(S, Z) > d } \leq \epsilon} } I(S; Z | F)  +  \min_{ \substack{ P_{Z| S} \colon \\ \Prob{\mathsf d(S, Z) > d } \leq \epsilon} } I(F; Z ) \label{minIlbalossy1}
\end{equation}

To lower-bound the second term in \eqref{minIlbalossy1}, we introduce 
\begin{equation}
 E \triangleq 1 \left\{ \mathsf d(S, Z) > d \vee \jmath_S(S, d) \leq -1 \right\}
\end{equation}
and
\begin{equation}
p_E \triangleq \Prob{E =1} 
\end{equation}
and we reason similarly to \eqref{eq:itv1}--\eqref{eq:itv6}:
	\begin{align} H(F|Z) &= I(F; E|Z) + H(F|Z,E)\\
		&\le h\left( p_E \right) + (1 - p_E) H(F|Z,E = 0) + p_E H(F|Z,E = 1)
\end{align}
As in \eqref{eq:itv1}--\eqref{eq:itv6}
\begin{equation}
 H(F| Z, E = 1 ) \leq \log_2 \left( 1 + \frac{\mathbb R_S(d)}{p_E} \right) + \log_2 e
\end{equation}

To bound $H(F|Z,E = 0)$, observe (TODO: careful with the logarithm base): 
\begin{align}
&~ \E{\jmath_S(S, d) 1 \left\{ \mathsf d(S, z) \leq d, \jmath_S(S, d) > - 1, Z = z \right\} } 
\notag \\
   \leq&~ \E{ \exp\left( \jmath_S(S, d)  \right) 1 \left\{ \mathsf d(S, z) \leq d, Z = z \right\} 1\left\{ \jmath_S(S, d) > - 1\right\}} 
  \notag \\
  -&~ \E{ 1 \left\{ \mathsf d(S, z) \leq d, Z = z \right\} 1\left\{ \jmath_S(S, d) > - 1\right\}}\\
 \leq&~ \E{ \exp\left( \jmath_S(S, d) - \lambda_S \mathsf d(S, z) + \lambda_S d  \right) } 
 \notag\\
 -&~ \Prob{E = 1, Z = z}\\
 \leq&~ 1 - \Prob{E = 1, Z = z}
\end{align}

so  
\begin{align}
H(F | E = 0, Z = z) &\leq \log_2 \left( 1 +  \frac{\E{\jmath_S(S, d) 1 \left\{ E = 0, Z = z \right\} } }{\Prob{E = 0, Z = z} }\right) + \log_2 e\\
&\leq \log_2 \left( 1 +  \frac{1 - \Prob{E = 1, Z = z} }{\Prob{E = 0, Z = z}}\right) + \log_2 e
\end{align}

Take expectation, apply concavity of logarithm:
\begin{align}
H(F | Z, E = 0) \leq \log_2 \left( 1 + \frac {|\mathcal Z|  - p_E} {1 - p_E}  \right) + \log_2 e
\end{align}

Too loose!
}


 To show the upper bound in \eqref{minIlossyupper}, fix an arbitrary distribution $P_{\bar Z}$ and define the conditional probability distribution $P_{Z|S}$ through\footnote{Note that in general $P_S \to P_{Z|S} \nrightarrow P_{\bar Z}$.}
\begin{equation}
\frac{dP_{Z|S = s}(z)}{dP_{\bar Z}(z)} = 
\begin{cases}
  \frac{\1{ \mathsf d(s, z) \leq d}}{P_{ \bar Z}(B_d(s))} & \left \langle - \log_2 P_{\bar Z}(B_d(s)) \right \rangle_\epsilon > 0\\
  1 & \text{otherwise}
\end{cases}  \label{PZ|Slossy}
\end{equation}

By the definition of $P_{Z|S}$
\begin{equation}
\Prob{\mathsf d(S, Z) > d} \leq \epsilon  
\end{equation}
Upper-bounding the minimum in \eqref{RR(eps)} with the choice of $P_{Z|S}$ in \eqref{PZ|Slossy}, we obtain the following nonasymptotic bound:
\begin{align}
{\mathbb R}_{S}(d, \epsilon) 
\leq&~ 
I(S; Z) 
\\
=&~ 
D\left( P_{Z | S} \| P_{ \bar Z} | P_{S} \right) - D(P_{Z} \| P_{ \bar Z})\\
\leq&~ D\left( P_{Z | S} \| P_{\bar Z } | P_{S} \right) \\
 =&~ \E{ \left \langle - \log_2 P_{\bar Z}(B_d(S)) \right \rangle_\epsilon } 
\end{align}
which leads to \eqref{minIlossyupper} after minimizing the right side over all $P_{\bar Z}$.

To show the lower bound on $\left(\epsilon, \delta\right)$-entropy in \eqref{eq:Hdepslower}, 
fix $\mathsf f$ satisfying the constraint in \eqref{eq:Hepsdelta}, denote
\begin{align}
Z &\triangleq \mathsf f(S)\\
\varepsilon(s) &\triangleq 1 \left\{ \mathsf d(s, \mathsf f(s)) > d\right\} 
\end{align}
and write
\begin{align}
H(Z) &\geq H(Z | \varepsilon(S))\\
&\geq P_{\varepsilon(S)}(0) H(Z | \varepsilon(S) = 0)\\
&= \E{ \imath_{Z, \varepsilon(S) = 0} (Z) (1 - \varepsilon(S)) }  \notag\\
&+ P_{\varepsilon(S)}(0) \log_2 P_{\varepsilon(S)}(0)\\
&\geq  \E{ \left \langle - \log_2 P_{Z }(B_d(S)) \right \rangle_\epsilon } - \phi(\min\{\epsilon, e^{-1}\})
\end{align}
where the second term is bounded by maximizing $p \log_2 \frac 1 p$ over $[1 - \epsilon, 1]$, and the first term is bounded via the following chain. 
\begin{align}
&~\E{ \imath_{Z, \varepsilon(S) = 0} (Z) (1 - \varepsilon(S)) } \notag\\
\geq&~ \E{- \log_2 P_{Z}(B_d(S)) (1 - \varepsilon(S)) } \label{posnerg}\\
\geq&~ \min_{\varepsilon(\cdot) \colon \E{\varepsilon(S)} \leq \epsilon} \E{- \log_2 P_{Z}(B_d(S)) (1 - \varepsilon(S))}\\
=&~ \E{ \left \langle - \log_2 P_{Z }(B_d(S)) \right \rangle_\epsilon } \label{eq:Pepsvar}
\end{align}
where \eqref{posnerg} holds because due to  $\{s \in \mathcal M \colon \mathsf f(s) = z, \epsilon(s) = 0 \} \subseteq B_d(s)$ we have for all $s \in \mathcal M$
\begin{equation}
\Prob{Z = \mathsf f(s), \varepsilon(S) = 0} \leq P_{Z}(B_d(s))
\end{equation}
and \eqref{eq:Pepsvar} is due to \eqref{eq:Xepsvar}.

To show the upper bound on $\left(\epsilon, \delta\right)$-entropy in \eqref{eq:Hdepsupper}, fix $P_Z$ such 
\begin{equation}
P_{Z}(B_d(s)) > 0 
\end{equation}
for $P_S$-a.s. $s \in \mathcal M$, let $Z^\infty \sim P_{Z} \times P_{Z} \times \ldots$, and define $W$ as   
\begin{equation}
W \triangleq 
\begin{cases}
\min \left\{ m \colon \mathsf d(S, Z_m) \leq d \right\} &  \left \langle - \log_2 P_{Z}(B_d(S)) \right \rangle_{\epsilon^\prime} > 0 \\
 1 & \text{otherwise}
\end{cases}
\end{equation}
where $\epsilon^\prime$ is the maximum of $ \epsilon^\prime \leq \epsilon$ such that the randomization on the boundary of $\left \langle - \log_2 P_{Z}(B_d(S)) \right \rangle_{\epsilon^\prime}$ can be implemented without the actual randomization (see Section \ref{sec:optimal} for an explanation of this phenomenon). 

If $z_1, z_2, \ldots$ is a realization of $Z^\infty$, $\mathsf f(s) = z_w$ is a deterministic mapping that satisfies the constraint in \eqref{eq:Hepsdelta}, so, since $w \mapsto z_w$ is injective, we have
\begin{align}
H_{d, \epsilon}(S) \leq H(W | Z^\infty = z^\infty) 
\end{align}
 
We proceed to show that $H(W | Z^\infty) $ is upper bounded by the right side of \eqref{eq:Hdepsupper}. Via the random coding argument this will imply that there exists at least one codebook $z^\infty$ such that $H(W | Z^\infty = z^\infty) $ is also upper bounded by the right side of \eqref{eq:Hdepsupper}, and the proof will be complete.

Let
\begin{align}
G \triangleq \lfloor \log_2 W \rfloor \left \langle - \log_2 P_{Z}(B_d(S)) \right \rangle_{\epsilon^\prime} > 0
\end{align}
and consider the chain
\begin{align}
 H(W | Z^\infty) &\leq H(W) \label{eq:Hdepsuppera1}\\
 &= H(W | G)  + I(W; G) \\
 &\leq \E{G} + H(G) \label{eq:Hdepsupperb1}\\
 &\leq \E{G} + \log_2 \left( 1 + \E{G}\right) + \log_2 e \label{eq:Hdepsupperc1}
\end{align}
where
\begin{itemize}
\item  \eqref{eq:Hdepsuppera1} holds because conditioning decreases entropy; 
\item \eqref{eq:Hdepsupperb1} holds because conditioned on $G = i$, $W$ can have at most $i$ values; 
\item  \eqref{eq:Hdepsupperc1} holds because the entropy of a positive integer-valued random variable with a given mean is maximized by the geometric distribution. 
\end{itemize}
Finally, it was shown in \eqref{-Alossyb} that
\begin{align}
\E{G} &=  \E{\left \langle - \log_2 P_{Z}(B_d(S)) \right \rangle_{\epsilon^\prime}} \\
&\leq  \E{\left \langle - \log_2 P_{Z}(B_d(S)) \right \rangle_{\epsilon}} +\phi( \min\{\epsilon, e^{-1}\})
\end{align}
where $\phi(\cdot)$ is the no-randomization penalty as explained in the proof of \eqref{eq:Lstardeterm}.

\apxonly{
ALTERNATIVE PROOF OF A SLIGHTLY WEAKER BOUND. 
To show the upper bound on $\left(\epsilon, \delta\right)$-entropy in \eqref{eq:Hdepsupper}, 
 combine \eqref{eq:Lstarlossydeterm}, \eqref{eq:alon2lossyeps} and \eqref{Alossy} to obtain
\begin{align}
H_{d, \epsilon}(S) - \log_2 (H_{d, \epsilon}(S)+1) 
&\leq  {\mathbb R}_{S}^+(d, \epsilon) + (1 + e^{-1}) \log_2 e \label{eq:Hdepsuppera}
\end{align}
and observe that 
\begin{equation}
 x - \log_2 (x+1) \leq a \label{eq:Hdepsupperb}
\end{equation}
implies 
\begin{equation}
 x  \leq a + \log_2(a + 1) + \frac{\log_2 e \cdot \log_2(a + 1) }{a + 1 - \log_2 e} 
 \label{eq:Hdepsupperc} 
\end{equation}
for all $x \geq 0$, $a \geq \log_2 e$. Indeed, by concavity of the logarithm 
\begin{equation}
\log_2 (x+1) \leq  \log_2 (a + 1) + \frac{\log_2 e}{a + 1}(x - a) \label{eq:Hdepsupperd}
\end{equation}
so \eqref{eq:Hdepsupperb} can be weakened as 
\begin{align}
x - \log_2 (a + 1) - \frac{\log_2 e}{a + 1}(x - a) \leq a
\end{align}
which is equivalent to \eqref{eq:Hdepsupperc}. 

Denote 
\begin{equation}
 r \triangleq  {\mathbb R}_{S}^+(d, \epsilon) + (1 + e^{-1}) \log_2 e
\end{equation}
Applying \eqref{eq:Hdepsupperc} to \eqref{eq:Hdepsuppera}, we obtain the nonasymptotic bound
\begin{equation}
 H_{d, \epsilon}(S)\leq r + \log_2 (r +1 )  + \frac{\log_2 e \cdot \log_2(r + 1) }{r + 1 - \log_2 e}  \label{eq:Hdepsuppere}
\end{equation}
To obtain the simpler \eqref{eq:Hdepsupper}, apply \eqref{eq:Hdepsupperd} with $a = {\mathbb R}_{S}^+(d, \epsilon)$ to upper bound the second term in \eqref{eq:Hdepsupperd}, and use monotonicity of  $\frac{\log_2(r + 1) }{r + 1 - \log_2 e}$ on $r \geq \log_2 e$ to upper bound the third term in \eqref{eq:Hdepsuppere}. 

We have shown 
\begin{align}
H_{d, \epsilon}(S)  
&\leq {\mathbb R}_{S}^+(d, \epsilon) + \log_2 ({\mathbb R}_{S}^+(d, \epsilon) + 1 )  + C \label{eq:Hdepsupper}
\end{align}
where  
$C = (2 (1 + e^{-1}) + \log_2 (\log_2 e + 1)) \log_2 e \approx 5.81$~bits. 
}

\end{proof}

\section{Proof of the bounds \eqref{eq:H0epslower} and \eqref{eq:H0epsupper} on $H_{0, \epsilon}(S)$ (Hamming distortion)}
\label{appx:H0eps}

The upper bound in \eqref{eq:H0epsupper} is obtained by a suboptimal choice (in \eqref{eq:Hepsdeltalossless}) of $\mathsf f(s) = s$ for all $s \leq m_0$, where $m_0$ is that in \eqref{eq:m0}, and $\mathsf f(s) = m_0 + 1$ otherwise. 

To show the lower bound in \eqref{eq:H0epslower}, fix $\mathsf f$ satisfying the constraint in \eqref{eq:Hepsdeltalossless}, put 
\begin{align}
 \varepsilon(S) &\triangleq \1{S \neq \mathsf f(S)}
\end{align}
and write
\begin{align}
 H(\mathsf f(S)) &\geq H(\mathsf f(S) | \varepsilon(S) = 0) P_{\varepsilon(S)}(0) \label{eq:H0epsuppera} \\
 &= \E{ \log_2 \frac 1 {P_{\mathsf f(S) | \varepsilon(S) = 0}(S)} | \varepsilon(S) = 0 } P_{\varepsilon(S)}(0) \label{eq:H0epsupperb}\\
 &\geq H \left( S | \varepsilon(S) = 0 \right) P_{\varepsilon(S)}(0) \\
 &=  \E{\imath_S(S)  \1{\varepsilon(S) = 0 }} + P_{\varepsilon(S)}(0) \log_2 P_{\varepsilon(S)}(0)\\
 &\geq  \EE[\left \langle \imath_S(S) \right \rangle_\epsilon ] - \phi\left( \max\left\{1 - \epsilon, e^{-1} \right\}\right) \label{eq:H0epsupperc}
\end{align}
where 
\begin{itemize}
\item \eqref{eq:H0epsuppera} is because conditioning decreases entropy; 
\item \eqref{eq:H0epsupperb} is due to
\begin{equation}
 \min_{P_Y} \E{\imath_{Y}(X)} = H(X);
\end{equation}
\item in \eqref{eq:H0epsupperc}, the first term is bounded using \eqref{eq:Xepsvar}, and the second term is bounded by maximizing $p \log_2 \frac 1 p$ over $[1 - \epsilon, 1]$. 
\end{itemize}

\section{Proof of Lemma \ref{lemma:dispersionlossy}}
\label{appx:dispersionlossy}

The following refinement of the lossy AEP is essentially contained in \cite{yang1999redundancy}. 


\begin{lemma} Under restrictions \eqref{item:first}--\eqref{item:last}, there exist constants $C_1, C_2$ such that
eventually, almost surely
\begin{align}
 \log_2 \frac 1 {P_{Z^{k\star}}(B_d(S^k))} 
\leq&~ \sum_{i = 1}^k \jmath_{\mathsf S}(S_i, d) + \frac 1 2 \log_2 k + C_2 \label{eq:yang}
 \\
 -&~ k \lambda_{\mathsf S} (d - \bar {\mathsf d}(S^k)) + k C_1 (d - \bar {\mathsf d}(S^k))^2 \notag
\end{align}
where
\begin{equation}
\bar {\mathsf d}(s^k) \triangleq \frac 1 k \sum_{i = 1}^k \E{\mathsf d(s_i, \mathsf Z^\star) | \mathsf S = s_i} 
\end{equation}
\label{lemma:yang}
\end{lemma}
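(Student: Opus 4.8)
The plan is to obtain \eqref{eq:yang} from an exact exponential change of measure to the rate--distortion--optimal backward channel, followed by a refined large--deviation estimate for the resulting conditional distortion sum; this is essentially the computation behind \cite{yang1999redundancy}, which could also be quoted directly. Fix $s^k$ and set $D_k\triangleq\sum_{i=1}^k\mathsf d(s_i,Z_i)$ with $Z^k\sim P_{\mathsf Z^\star}^{\times k}$, so that $P_{Z^{k\star}}(B_d(s^k))=\Prob{D_k\le kd}$. By \eqref{Jinfo}--\eqref{JE}, the density of $P_{\mathsf Z^\star}$ relative to the optimal backward kernel $P_{\mathsf Z^\star|\mathsf S=s}$ is $2^{-J_{\mathsf S}(s)+\lambda_{\mathsf S}\mathsf d(s,\cdot)}$, and reweighting each coordinate to $Q\triangleq\prod_{i=1}^k P_{\mathsf Z^\star|\mathsf S=s_i}$ gives the identity
\begin{equation}
 P_{Z^{k\star}}(B_d(s^k))=2^{-\sum_i J_{\mathsf S}(s_i)}\;\mathbb E_Q\!\left[2^{\lambda_{\mathsf S}D_k}\,\1{D_k\le kd}\right],
\end{equation}
where under $Q$ the summands $\mathsf d(s_i,Z_i)$ are independent with mean $\bar{\mathsf d}(s^k)$ per letter and, by \eqref{Jstarbounds} and restriction \eqref{item:last}, uniformly bounded second and third moments. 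Since $J_{\mathsf S}(s)-\lambda_{\mathsf S}d=\jmath_{\mathsf S}(s,d)$ by \eqref{dtilted}, taking $-\log_2$ isolates the main term $\sum_i\jmath_{\mathsf S}(s_i,d)$ and reduces \eqref{eq:yang} to the claim that $\log_2\mathbb E_Q[2^{\lambda_{\mathsf S}D_k}\1{D_k\le kd}]\ge \lambda_{\mathsf S}kd-\tfrac12\log_2 k-C_2-C_1 k(d-\bar{\mathsf d}(s^k))^2$ eventually almost surely (the remaining terms of \eqref{eq:yang} are absorbed along the way).

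To lower bound that expectation I would use $\lambda_{\mathsf S}>0$ (restriction \eqref{item:dminmax}) and restrict $D_k$ to a fixed--width window just below $kd$: $\mathbb E_Q[2^{\lambda_{\mathsf S}D_k}\1{D_k\le kd}]\ge 2^{\lambda_{\mathsf S}(kd-c)}\,\mathbb P_Q[kd-c\le D_k\le kd]$. Eventually almost surely the target $kd$ lies within $\bigo{\sqrt{k\log\log k}}$ of the $Q$-mean $k\bar{\mathsf d}(s^k)$ (law of the iterated logarithm, since $\bar{\mathsf d}(S^k)-d$ is a normalized sum of i.i.d.\ zero--mean terms), so either a local limit theorem for $D_k$ directly, or a secondary exponential tilt recentering $D_k$ at $kd$ followed by Bikelis's bound (Theorem~\ref{thm:bikelis}, as in Appendix~\ref{appx:EXeps}), yields $\mathbb P_Q[kd-c\le D_k\le kd]\ge (c'/\sqrt k)\,2^{-k\mathcal I_k(d)}$, where $\mathcal I_k$ is the base--$2$ large--deviation rate of $\tfrac1k D_k$ under $Q$. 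Since $\mathcal I_k$ attains value and derivative $0$ at its minimizer $\bar{\mathsf d}(s^k)$ with second derivative of order $1/V_k$ ($V_k$ the average tilted per--letter distortion variance), a Taylor expansion gives $k\mathcal I_k(d)\le C_1 k(d-\bar{\mathsf d}(s^k))^2$ once $|d-\bar{\mathsf d}(s^k)|$ is small and $V_k$ is bounded away from $0$ and $\infty$ --- both hold eventually almost surely, by the strong law $\bar{\mathsf d}(S^k)\to d$ and $V_k\to\E{\Var{\mathsf d(\mathsf S,\mathsf Z^\star)\mid\mathsf S}}$, which is positive and finite under the standing assumptions. Assembling the pieces delivers \eqref{eq:yang}.

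Everything except one step is routine: the change of measure is algebraic, the Taylor expansion of $\mathcal I_k$ is standard convex analysis, and the ``eventually a.s.'' reductions are the strong law and the law of the iterated logarithm (which in particular show the $k(d-\bar{\mathsf d}(S^k))^2$ term is at worst $\bigo{\log\log k}$ and has $\bigo{1}$ expectation, as needed downstream in Lemma~\ref{lemma:dispersionlossy}). The delicate point is establishing that, after recentering, $\mathbb P_Q[kd-c\le D_k\le kd]$ is at least a constant multiple of $1/\sqrt k$: this must be uniform over the random $s^k$ that occur for large $k$, for a sum of \emph{independent but non-identically distributed} bounded--moment summands, and it depends on whether $\mathsf d(\mathsf S,\mathsf Z^\star)$ is lattice (as it is for Hamming distortion) --- in the lattice case one must take the window commensurate with the span and use a lattice local CLT for non-identical summands, otherwise Stone's local CLT. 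A non-uniform Berry--Esseen bound alone controls the c.d.f.\ of $D_k$ but not the difference of c.d.f.'s over an $\bigo{1}$ window, which is why this sharper local input --- the technical heart of \cite{yang1999redundancy} --- is the crux.
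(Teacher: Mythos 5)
Your overall route---the exact exponential change of measure to the tilted (test-channel) kernel followed by a local-limit estimate for the conditional distortion sum---is precisely the computation inside \cite{yang1999redundancy}; the paper itself does none of it, its proof being two lines: quote (4.6) and (5.5) of \cite{yang1999redundancy} to conclude that the probability of violating \eqref{eq:yang} is $\bigo{k^{-2}}$, then apply Borel--Cantelli. So a self-contained derivation is a genuinely different (and more informative) route, and your identity $P_{Z^{k\star}}(B_d(s^k))=2^{-\sum_i J_{\mathsf S}(s_i)}\,\mathbb{E}_Q\bigl[2^{\lambda_{\mathsf S}D_k}\1{D_k\le kd}\bigr]$ is correct, as is your diagnosis that the uniform local-CLT lower bound on $\mathbb{P}_Q[kd-c\le D_k\le kd]$ for non-identically distributed, possibly lattice summands is the real work that the citation sidesteps.

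There is, however, a genuine gap at the step ``the remaining terms of \eqref{eq:yang} are absorbed along the way.'' Substituting your identity into \eqref{eq:yang} and using $\jmath_{\mathsf S}(s,d)=J_{\mathsf S}(s)-\lambda_{\mathsf S}d$, the inequality you actually need is $\log_2\mathbb{E}_Q[2^{\lambda_{\mathsf S}D_k}\1{D_k\le kd}]\ge k\lambda_{\mathsf S}d+k\lambda_{\mathsf S}(d-\bar{\mathsf d}(s^k))-\tfrac12\log_2 k-C_2-C_1k(d-\bar{\mathsf d}(s^k))^2$, whereas your stated target drops the linear term, and your window-plus-local-CLT estimate delivers only $\log_2\mathbb{E}_Q[\cdots]\ge k\lambda_{\mathsf S}d-\tfrac12\log_2 k-\bigo{1}-C_1k(d-\bar{\mathsf d}(s^k))^2$, i.e.\ the version of \eqref{eq:yang} with the term $-k\lambda_{\mathsf S}(d-\bar{\mathsf d}(S^k))$ deleted. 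That term cannot be absorbed: since $\mathbb{E}[\bar{\mathsf d}(S^k)]=d$, it is sign-indefinite of typical size $\sqrt{k}$ (and $\asymp\sqrt{k\log\log k}$ infinitely often, by the LIL), and on the recurrent event $\bar{\mathsf d}(S^k)<d$ it pushes the right side of \eqref{eq:yang} below $\sum_i\jmath_{\mathsf S}(S_i,d)$, while the tensorized \eqref{dtiltedupper} gives $\log_2\frac1{P_{Z^{k\star}}(B_d(S^k))}\ge\sum_i\jmath_{\mathsf S}(S_i,d)$ always; hence, whenever $\Var{\mathbb{E}[\mathsf d(\mathsf S,\mathsf Z^\star)\mid\mathsf S]}>0$, no constants $C_1,C_2$ make the display with that linear term hold eventually a.s. What your argument proves---and what the underlying estimates of \cite{yang1999redundancy} support---is either the no-linear-term version, or the same display with the per-letter terms $\jmath_{\mathsf S}(S_i,d)$ replaced by $J_{\mathsf S}(S_i)-\lambda_{\mathsf S}\mathbb{E}[\mathsf d(S_i,\mathsf Z^\star)\mid S_i]$; the two differ by exactly $k\lambda_{\mathsf S}(d-\bar{\mathsf d}(S^k))$, have the same mean, and either one suffices for the only downstream use in Lemma~\ref{lemma:dispersionlossy}, where the linear term is killed by taking expectations. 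You should therefore state and prove that corrected form rather than claim \eqref{eq:yang} verbatim, and then fill in the flagged local-limit step (or, as the paper does, import it from \cite{yang1999redundancy}).
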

\begin{proof}
It follows from  {\cite[(4.6), (5.5)]{yang1999redundancy}} that the probability of violating \eqref{eq:yang} is $\bigo{\frac 1 {k^2}}$. Since $\sum_{k = 1}^\infty \frac 1 {k^2}$ is summable, by the Borel-Cantelli lemma  \eqref{eq:yang} holds w. p. 1 for $k$ large enough. 
\end{proof}

 Noting that  $\bar {\mathsf d}(s^k)$ is a normalized sum of independent random variables with mean $d$, we conclude using Lemma \ref{lemma:yang} that for $k$ large enough
\begin{equation}
\E{  \log_2 \frac 1 {P_{Z^{k\star}}(B_d(S^k))}} \leq k R(d) + \frac 1 2 \log_2 k + \bigo{1}
\end{equation}
 
  Lemma \ref{lemma:dispersionlossy} is now immediate 
from \eqref{eq:dballlower} and \eqref{eq:dballupper} and the expansion for $\E{ \left \langle \jmath_{S^k}(S^k, d) \right \rangle_\epsilon  }$ in \eqref{minI2orderlossy}.

\bibliographystyle{IEEEtran}
\bibliography{../rateDistortion,../reports}

\begin{IEEEbiographynophoto}{Victoria Kostina}(S'12-M'14)
 joined Caltech as an Assistant Professor of Electrical Engineering in the fall of 2014.
She holds a Bachelor's degree from Moscow institute of Physics and Technology (2004), where she was affiliated with the Institute for Information Transmission Problems of the Russian Academy of Sciences, a Master's degree from University of Ottawa (2006), and a PhD from Princeton University (2013). Her PhD dissertation on information-theoretic limits of lossy data compression received Princeton Electrical Engineering Best Dissertation award. 

Victoria Kostina's research spans information theory, coding, and wireless communications. Her current efforts explore 
the nonasymptotic regime in information theory. 
\end{IEEEbiographynophoto}

\begin{IEEEbiographynophoto}{Yury Polyanskiy}(S'08-M'10-SM'14)
 is an 
Associate Professor of Electrical Engineering and Computer Science and a member of LIDS at MIT.
Yury received the M.S. degree in applied mathematics and physics from the Moscow Institute of Physics and Technology,
Moscow, Russia in 2005 and the Ph.D. degree in electrical engineering from Princeton
University, Princeton, NJ in 2010. In 2000-2005 he lead the development of the embedded software in the
Department of Surface Oilfield Equipment, Borets Company LLC (Moscow). Currently, his research focuses on basic questions in information theory, error-correcting codes, wireless communication and fault-tolerant and defect-tolerant circuits.
Dr. Polyanskiy won the 2013 NSF CAREER award and 2011 IEEE Information Theory Society Paper Award.
\end{IEEEbiographynophoto}

\begin{IEEEbiographynophoto}
{Sergio Verd\'{u}} (S'80-M'84-SM'88-F'93) received the Telecommunications Engineering degree from the
Universitat Polit\`{e}cnica de Barcelona in 1980, and the Ph.D. degree in Electrical Engineering from the
University of Illinois at Urbana-Champaign in 1984. Since 1984 he has been a member of the faculty of
Princeton University, where he is the Eugene Higgins Professor of Electrical Engineering, and is a member
of the Program in Applied and Computational Mathematics.

Sergio Verd\'{u} is  the recipient of the 2007 Claude E. Shannon Award, and
the 2008 IEEE Richard W. Hamming Medal. 
He is a member of both the National Academy of Engineering and the National Academy of Sciences.

Verd\'{u} is a recipient of several paper awards from the IEEE: 
the 1992 Donald Fink Paper Award, 
the 1998 and 2012 Information Theory  Paper Awards, 
an Information Theory Golden Jubilee Paper Award,
the 2002 Leonard Abraham Prize Award,  
the 2006 Joint Communications/Information Theory Paper Award, 
and the 2009 Stephen O. Rice Prize from the IEEE Communications Society.  
In 1998, Cambridge University Press published his book {\em Multiuser Detection,} 
for which he received the 2000 Frederick E. Terman Award from the American Society for Engineering Education. 
He was awarded a Doctorate Honoris Causa from the Universitat  Polit\`{e}cnica de Catalunya in 2005.

Sergio Verd\'{u} served as President of the IEEE Information Theory Society in 1997, and
on its Board of Governors (1988-1999, 2009-2014).
He has also served in various editorial capacities for the {\em IEEE Transactions on Information Theory}:
Associate Editor (Shannon Theory, 1990-1993; Book Reviews, 2002-2006),  
Guest Editor of the Special Fiftieth Anniversary Commemorative Issue
(published by IEEE Press as ``Information Theory: Fifty years of discovery"), 
and member of the Executive Editorial Board (2010-2013).
He is the founding Editor-in-Chief of {\em Foundations and Trends in Communications and Information Theory}. 
Verd\'{u} is co-chair of the {\em 2016 IEEE International Symposium on Information Theory}, which will take place in his hometown.
\end{IEEEbiographynophoto}

\end{document}